\tikzstyle{vertex}=[circle, draw, inner sep=1pt, minimum size=15pt]
\tikzset{>={Latex[width=2mm,length=2mm]}}
\newtheorem{theorem}{\sc {Theorem}}[section]
\newtheorem{lemma}{\sc {Lemma}}[section]
\newtheorem{example}{\sc {Example}}
\newtheorem{definition}{\sc{Definition}}[section]
\newtheorem{proposition}{\sc {Proposition}}[section]
\newtheorem{observation}{\sc{Observation}}[section]
\DeclareMathOperator*{\maximize}{max}
\DeclareMathOperator*{\minimize}{min}
\newenvironment{proof}[1][\bf{Proof.}]{\begin{trivlist}
\item[\hskip \labelsep {\bfseries #1}]}{\end{trivlist}}
\newenvironment{remark}[1][\bf{Remark.}]{\begin{trivlist}
\item[\hskip \labelsep {\bfseries #1}]}{\end{trivlist}}
\newcommand{\QED}{\hfill \mbox{\rule{2mm}{2mm}} \vspace{0.5cm}}
\newcommand{\Xomit}[1]{}
\begin{document}
\title{Market Equilibrium in Multi-tier Supply Chain Networks}
 \author{Tao Jiang\footnote{Krannert School of Management, Purdue University, taujiang300@gmail.com} \hspace{5mm}    Young-San Lin \footnote{Department of Computer Science, Purdue University, lin532@purdue.edu}\hspace{5mm} Th\`anh Nguyen\footnote{Krannert School of Management, Purdue University, nguye161@purdue.edu}   }
\maketitle
\onehalfspacing
\begin{abstract}
We consider a sequential decision model over multi-tier supply chain networks and show that in particular, for series parallel networks, there is a unique equilibrium. We provide a linear time algorithm to compute the equilibrium and study the impact and invariant of the network structure to the total trade flow and social welfare. 
\end{abstract}

% {\bf Keywords:} stable matching, complementarities, Scarf's lemma

% {\bf JEL classification:} C78, D47
%\doublespacing
%\onehalfspacing
%\newpage
\section{Introduction} \label{sec: intro}
%!TEX root = main.tex

Supply chain networks in practice are multi-tier and heterogeneous. A firm's  decision influences not only other firms within the same tier but also across. The literature on game theoretical models of supply chain networks, however,  has largely focused on two extreme cases:  heterogeneous 2-tier networks (bipartite graph) \cite{kranton2001theory} and \cite{bimpikis2014cournot} and a linear chain of $n$-tier firms \cite{wright2014buyers} and \cite{nguyen2017local}.    One main reason for this is that most models of sequential decision making in multi-tier supply chain networks are intractable. Sequential decision making is a well-observed phenomenon in supply chains because firms  at the top tier typically need to make decisions on the quantity and the price to sell to firms in the next tier and the buying firms then decide how much to buy from which suppliers, and continue to pass on the goods by determining the quantity and price for firms at the next level.  
 
To study such models, one needs to analyze subgame perfect equilibria, where each firm internalizes the decisions of all the firms downstream and compete with all the firms of the same tier. Another factor that further complicates models of general supply chain networks is that even the basic concept of tiers is ambiguous because there are often multiple routes of different length that goods are traded from the original producers to the consumers. Our paper studies a model of sequential network game motivated by supply chain network applications. Our main goal is to understand the existence of an equilibrium and the effect of network structure on the efficiency of the system. 

The length and the number of trading routes are the two main factors that impact the efficiency of a supply chain network. On one hand, a large variety of  options to trade indicates a high degree of competition. On the other hand, a long trading path causes double, triple and higher degree marginalization problems. To capture these ideas, we consider a sequential game theoretical model for a special class of networks, the {\em series parallel graphs}. We focus our analysis on  these networks because they are rich enough for studying the trade-off described above and simple enough for  characterizing the equilibrium outcomes. In particular, series parallel networks have two natural compositions. A parallel composition, which merges two different sub-networks at the source and the sink, can capture the increase in competition. A series composition, which attaches two sub-networks sequentially, corresponds to the increase in the length of trading.

\subsection{Our Contribution}
We consider a sequential decision model where each firm makes a decision on the buying quantity from its sellers, and the selling quantity and price to its buyers, given that all its sellers have made their decisions. The single source producer of the network starts the decision making with a fixed material cost. The single ending market of the network accepts all the goods offered by the firms in the last-tier and the market price is an affine decreasing function of the total quantity of goods. Each firm strives to maximize its utility, assuming that its downstream buyers make rational decisions. An \emph{equilibrium} is the collection of firm decisions such that no firm has the incentive to change its decision, assuming that its downstream buyers play rationally and the decision of other firms remain unchanged. Our main results are listed as follows.
\begin{itemize}
\item We show a linear time algorithm that finds the unique equilibrium in a series parallel network. A crucial step is to derive a closed-form expression of the price at each firm in terms of quantity.
\item We show a rich set of equilibrium comparative statics for series parallel networks, including the firm location advantage of upstream firms and a network-component-based efficiency analysis.
\item We analyze the equilibrium for generalized series parallel networks.
\end{itemize}

\subsection{Proof Techniques}
\paragraph{Equilibrium in Series Parallel Networks.} We first observe that the flow conservation property is satisfied at equilibrium, i.e. inflow equals outflow for each intermediary firm. The main strategy is to formulate the price offered to each firm in terms of its inflow. Our first algorithm starts the inductive price computation from the ending market in a reverse topological order. In the computation of each firm, we take the partial derivative of the firm's utility with respect to the buying quantity and obtain a closed-form expression for three cases. The most interesting one is the case when a single seller has multiple buyers. In this case, since the utility is a quadratic function of the buying quantity, we formulate a linear complementarity problem to compute the convex coefficients for each trade from this seller to her buyers. Once the price-quantity relation is obtained at the source, we compute the total flow needed for the network. Our second algorithm computes the flow in a topological order starting from the source producer. When a single seller encounters multiple buyers, the problem of maximizing the seller utility can be formulated by a linear complementarity problem, and this problem has an equivalent convex quadratic program. In particular, by the structure of series parallel networks, the problem can be simplified to a linear system, and the flow is distributed proportionally to the convex coefficients computed by our first algorithm. It takes linear time to find an equilibrium and the equilibrium is unique because the solution of each corresponding convex quadratic program is also unique.

\paragraph{Comparative Statics.} For the analysis of firm location advantage, we first obtain a closed-form expression of each firm's utility. By this expression, we conclude that an upstream firm which controls the flow of a downstream firm has at least twice the utility of the downstream firm. For the network-component-based efficiency analysis, we focus on analyzing the flow value and social welfare at equilibrium. We show that with the same production cost and ending market price, locally swapping the order of two components in a series composition does not change the total flow and social welfare.

\paragraph{Equilibrium Analysis for Generalized Series Parallel Networks.} We consider two extensions with multiple source producers or ending markets. When the generalized series parallel graph has a single source and multiple markets, we consider a simple network and observe that the price function of the intermediary firm can be piecewise linear and discontinuous. This enforces the source to apply either the high price or low price strategy. There can be multiple pure strategy equilibria when both strategies give the source the same utility. When the generalized series parallel graph has multiple sources and a single ending market, an equilibrium may not exist.

\subsection{Related work}
In a series parallel network, intermediary firms can be considered as Cournot markets at equilibrium. Thus, the structure of the game is closely related to the literature on Cournot games in networks. \cite{bimpikis2014cournot} and \cite{pang2017efficiency}, for example, consider a Cournot game in two sided markets. \cite{nguyen2018welfare} studies Cournot game in three-tier networks. However, the 2-tier structure of the network in these papers, and the assumption that only the middle tiers make decision in \cite{nguyen2018welfare} assumes away the complex sequential decision making considered in this work. \cite{nava2015efficiency} studies a Cournot game in general networks. However, firms are assumed to make simultaneous decisions. Simultaneous games are easier to analyze, but do not capture the essential element of sequential decision making of firms in  supply chain networks.
 
\cite{carr2005competition} considers assembly network where agents make sequential decision, but assumes a tree network. The analysis for a tree network is substantially simpler, because each firm has a single downstream node that it can sell the products to. In our game, each firm needs to make decision on the selling quantity and price to each of its buyer.  As we show, some of the quantities on some of the links can be zero. Such ``inactive'' links make the analysis more complicated. Recently, \cite{bimpikis2019supply} also considers a sequential game. The network considered in this paper is however symmetric and its structure is linear. The focus of \cite{bimpikis2019supply} is on the uncertainty of yields, which is different from the motivation in our paper.

More broadly, this work belongs to the growing literature of network games and their applications in supply chains, including \cite{corbett2001competition,federgruen2016sequential,nguyen2016delay}, and \cite{nguyen2017local}. These papers, however, are different from ours in the main focus as well as the modeling approach. 
\cite{corbett2001competition} for example, assumes a linear structure of supply chains, \cite{federgruen2016sequential} considers price competition in two-tier networks, while \cite{nguyen2016delay}and \cite{nguyen2017local} analyze  bargaining games in networks with simpler structures. The main contribution of our paper to this line of work is  a tractable analysis of sequential competition model in series parallel graphs, which allows  for a richer comparative analysis  and deeper understanding of how basic network elements influence market outcomes.

\subsection{Organization}
This work is organized as follows. In section~\ref{sec: model}, we introduce the model and the parallel serial networks together with the compositions. In section~\ref{sec: clearance}, we provide the algorithm to compute the unique equilibrium. In section~\ref{sec: eqp}, we analyze how firm location affects individual utility and how network structure influences the efficiency. In section~\ref{sec: ext}, we discuss extensions to other classes of networks and show that pure strategy equilibrium might not exist in general networks.

\section{Preliminaries} \label{sec: model}
%!TEX root = main.tex

We introduce the sequential decision mechanism and the definition of {\em series parallel graph}.

\subsection{The Model}

\paragraph{The Sequential Decision Game} Let $G=(V \cup \{s, t\}, E)$ be a simple directed acyclic network that represents an economy where  $s$ is the producer firm at the source, $t$ is the sink market, and $V$ represents intermediary firms. The arcs of $G$ represent the possibility of trade between two agents. The direction of an arc indicates the direction of trade.  The outgoing end of the arc corresponds to the seller, and  the incoming end is the buyer, while $s$ has only outgoing arcs, and $t$ has only incoming arcs. 
 The remaining vertices $i\in V$  representing intermediary  firms has both incoming and outgoing arcs. %These firms  buy from upstream firms and resell to downstream firms. 
  For a vertex $i$, $B(i)$ and $S(i)$ are the sets of agents that can be buyers and sellers in a trade with $i$, respectively.

%Assume every firm has full information about the structure of the network.

Agents start making their decision after the output of their upstream suppliers is determined. Each firm $i$ decides on how much to buy from each of its sellers, how much to sell to each of its buyer, and the price to sell to each of its buyer. Formally, $i$'s decision includes:
\begin{itemize}
\item The buying quantity $x^{in}_{ki} \geqslant 0$ of arc $ki \in E$ for every $k \in S(i)$.
\item The selling quantity $x^{out}_{ij} \geqslant 0$ of arc $ij \in E$ to every $j \in B(i)$.
\item The selling price $p_{ij} \geqslant 0$ of arc $ij \in E$ to every $j \in B(i)$.
\end{itemize}

The source producer does not buy any goods, so the decision of $s$ is the selling quantity $x^{out}_{sj} \geqslant 0$ and the price $p_{sj}$ of arc $sj$ to every $j \in B(s)$.

The production cost $p_s$ of the source $s$ is given and assumed to be a constant $a_s$.
\[
p_s = a_s \text{ where } a_s > 0.
\]

The sink node $t$ does not represent a firm, it corresponds to an end market. The price function $p_t$ at sink node $t$ is given and assumed to be an affine decreasing function on the total amount of goods, $X_t$, sold to the market $t$.
$$
p_{t} =  a_t - b_t X_t, \text{ where } X_t = \sum_{i \in S(t)}{x^{in}_{it}} = \sum_{i \in S(t)}{x^{out}_{it}}, a_t > a_s, \text{ and } b_t > 0.
$$
$a_t$ is the \emph{demand} of the market $t$. We note that the market must accept all the goods thus does not have a choice to reject. That is, $x^{in}_{it} = x^{out}_{it}$ for each $i \in S(t)$. Generally, for a trade  $ij \in E$, the buyer $j$ cannot obtain more than what the seller $i$ offers, thus $x^{in}_{ij} \leqslant x^{out}_{ij}$. We assume that each intermediary firm $i$ cannot get goods from any other source besides its sellers. Firms do not get any value from retaining the goods.

The payoff of the source firm $s$ is
\begin{equation} \label{eq:sp}
\Pi_s = \sum_{j \in B(s)}{p_{sj} x^{in}_{sj}} - p_s \sum_{j \in B(s)}{x^{out}_{sj}}.
\end{equation}

%Once agent $i$ receives the offer from its upstream agents, it decides $x^{in}_{ij}$ for $j \in S(i)$, and $P_{ij}$ and $x^{out}_{ij}$ where $j \in B(i)$. 
The utility of an intermediary firm $i \in V$ is 
\begin{equation} \label{eq:fp}
\Pi_i = \sum_{j \in B(i)}{p_{ij} x^{in}_{ij}} - \sum_{k \in S(i)} {p_{ki} x^{in}_{ki}}.
\end{equation}
The formula decomposes the utility function into two terms: the total revenue from $j \in B(i)$ and the total cost of materials from $k \in S(i)$.

%By a sequence of offers and decisions, eventually the sink markets $t \in T$ are reached. The markets cannot make any decision but accept whatever they are offered. Any upstream agent $i$ of the market $t$ cannot offer $t$ a price. We assume that the price is set to be $p_{it} = p_{t} = a_t - b_t \sum_{i \in S(t)}{x^{out}_{it}}$ where $a_t$ and $b_t$ are positive constants.

%\textbf{Add a graph of timing here?}

The timing of the game is as follows. The source producer makes its decision first. A firm makes its decision on the buying quantity from its upstream, and the selling quantity and price to its downstream, once all of its sellers have made their decision. When deciding their accepting quantities to maximize their profits, firm $i$ also needs to take into account the strategies of both the competing firms and the firms  downstream. Firms make their decisions based on rational expectation of other firms' strategies.

\paragraph{Equilibrium Characteristics} Intuitively, an {\em equilibrium} of the sequential decision game is an assignment of good quantity and price such that no firm is willing to change its decision after knowing the decision of other non-downstream firms, and assuming that all the downstream firms also pick their best decisions.

We present two examples to illustrate the equilibrium concept. The first example is a line network. The source producer controls the amount and price, thus affects the decision of the intermediary firm.

\begin{example}[The Equilibrium of a Line Network] \label{ex:eq-ln}
\mbox{}\\
Consider the following line network.

\begin{center}
\begin{tikzpicture}
	\node[vertex] (s) at (0,0) {$s$}; 
	\node at (-1, 0) {$p_{s} = 1$};
	\node[vertex] (a) at (3,0) {$v$}; 
    \node[vertex] (t) at (6,0) {$t$};
	\node at (9.5, 0) {$p_{t} = 9 - X_t = 9 - x^{out}_{vt} = 9 - x^{in}_{vt}$};
	\path[->]
		(s) edge (a) 
		(a) edge (t);
\end{tikzpicture}
\end{center}

Suppose $s$ offers $v$ $x^{out}_{sv} = 2$ and $p_{sv} = 7$, $v$ has to make a decision on $x^{in}_{sv}$ and $x^{out}_{vt}$. $v$ cannot decide $p_{vt}$ because it is already fixed as $9 - x^{out}_{vt}$. The sink market $t$ accepts all the goods, so the utility of $v$ is
\[\Pi_v = (9 - x^{out}_{vt}) x^{in}_{vt} - p_{sv} x^{in}_{sv} = (9 - x^{out}_{vt}) x^{out}_{vt} - 7 x^{in}_{sv}.\]

If $v$ is a rational player who maximizes $\Pi_v$, then $v$ will sell all the goods it bought, so $x^{in}_{sv} = x^{out}_{vt}$ and $\Pi_v = (9 - x^{in}_{sv}) x^{in}_{sv} - 7 x^{in}_{sv}$. By taking the derivative:
\[\frac{d \Pi_v}{d x^{in}_{sv}} = 9 - 2 x^{in}_{sv} - 7 = 0 \implies x^{in}_{sv} = 1,\]
so $b$ will accept $1$ out of $2$ units of the goods from $s$.

The utility of $s$ is
\[\Pi_s = p_{sv} x^{in}_{sv} - p_s x^{out}_{sv} = 7 \times 1 - 1 \times 2 = 5.\]

In fact, $s$ is over-selling the goods, it could have set $x^{out}_{sv} = 1$ instead such that
\[\Pi_s = p_{sv} x^{in}_{sv} - p_s x^{out}_{sv} = 7 \times 1 - 1 \times 1 = 6 > 5.\]

However, this is not the best strategy for $s$. If $s$ offers $v$ $x^{out}_{sv} = 2$ and $p_{sv} = 5$, by the same reasoning, the utility of $v$ is
\[\Pi_v = (9 - x^{in}_{sv}) x^{in}_{sv} - 5 x^{in}_{sv}.\]

$v$ tries to maximize $\Pi_v$:
\[\frac{d \Pi_v}{d x^{in}_{sv}} = 9 - 2 x^{in}_{sv} - 5 = 0 \implies x^{in}_{sv} = 2,\]
so $v$ will accept all the goods from $s$.

This time, the utility of $s$ is
\[\Pi_s = p_{sv} x^{in}_{sv} - p_s x^{out}_{sv} = 5 \times 2 - 1 \times 2 = 8,\]
and $s$ is better off. This is the optimal strategy for $s$. In summary, the equilibrium is
\begin{center}
    \begin{tabular}{ | l | l | l | l |}
    \hline
    $p_{sv}$ & $x^{out}_{sv}$ & $x^{in}_{sv}$ & $x^{out}_{vt}$ \\ \hline
		5 & 2 & 2 & 2 \\ \hline		
    \end{tabular}
\end{center}
\end{example}

The second example illustrates the competition between two intermediary firms. The source producer controls the quantity and price, thus affects the decisions of its two downstream firms.

\begin{example}[The Equilibrium with Two Intermediary Firms] \label{ex:eq-simple-spg}
\mbox{}\\
Consider the following network.
\begin{center}
\begin{tikzpicture}[baseline=0,scale=2]
%	\node[vertex] (s0) at (-2,0) {$s_0$};
	\node at (-0.5, 0) {$p_{s} = 1$};
	\node[vertex] (s) at (0,0) {$s$}; 
	\node[vertex] (b) at (2,0.5) {$u$}; 
	\node[vertex] (c) at (2,-0.5) {$v$};
	\node[vertex] (d) at (4,0) {$t$};
	\node at (5.5,0) {$p_t = 7 - X_t = 7 - x - y$};
%	\node[vertex] (e) at (2,-1.5) {$e$};
	\path[->]
%		(s0) edge (e)
%		(s0) edge (s)
		(s) edge node[above] {$x^{out}_{su}$} (b)
		(s) edge node[below] {$x^{out}_{sv}$} (c)
		(b) edge node[above] {$x$} (d)
		(c) edge node[below] {$y$} (d)
%		(e) edge (t)
;
\end{tikzpicture}
\end{center}

Suppose the decision of $s$ is
\begin{center}
    \begin{tabular}{ | l | l | l | l |}
    \hline
    $p_{su}$ & $p_{sv}$ & $x^{out}_{su}$ & $x^{out}_{sv}$ \\ \hline
		3 & 4 & 1 & 1 \\ \hline		
    \end{tabular}
\end{center}
and $u$ and $v$ are rational firms, i.e. they will sell all the goods they bought to maximize their payoff. For simplicity, let $x=x^{in}_{su}$ and $y=x^{in}_{sv}$. The utilities of $u$ and $v$ are
\[\Pi_u = (7 - x - y)x - 3 x \quad \text{and} \quad \Pi_v = (7 - x - y)y - 4 y.\]
By taking the derivative, we have
\[\frac{d \Pi_u}{dx} = 4 - 2x - y \quad \text{and} \quad \frac{d \Pi_v}{dy} = 3 - x - 2y.\]

We claim that the best decision of $u$ and $v$ is $x=1$ and $y=1$. $\Pi_v$ is a concave function so it is maximized when $\frac{d \Pi_v}{dy}=0$, which implies the best response to $x=1$ is $y=1$. We observe that $\Pi_u$ is also concave. $\Pi_u$ is maximized when $\frac{d \Pi_u}{dx} = 0$. However, when $y=1$, $x$ cannot be $\frac{3}{2}$ since $x \leqslant x^{out}_{su} = 1$. While $y=1$ and $x\in[0,1]$, $\Pi_u$ is an increasing function, so the best response to $y=1$ is $x=1$.

The utility of $s$ for this decision is
\[\Pi_s = p_{su} x + p_{sv} y - p_s (x^{out}_{su} + x^{out}_{sv}) = 3 \times 1 + 4 \times 1 - 1 \times (1+1) = 5.\]

Given that $p_{su}=3$ and $p_{sv}=4$, there is a better quantity decision for $s$. We recall that $\Pi_u$ and $\Pi_v$ are both concave, so it suffices to show that $\frac{d \Pi_u}{dx}$ and $\frac{d \Pi_v}{dx}$ are both zeros. This happens when $x=\frac{5}{3}$ and $y=\frac{2}{3}$. If $x^{out}_{su}=\frac{5}{3}$ and $x^{out}_{sv}=\frac{2}{3}$, then $u$ and $v$ will buy all the goods from $s$ and the utility of $s$ is
\[\Pi_s = p_{su} x + p_{sv} y - p_s (x^{out}_{su} + x^{out}_{sv}) = 3 \times \frac{5}{3} + 4 \times \frac{2}{3} - 1 \times (\frac{5}{3}+\frac{2}{3}) = \frac{16}{3}.\]

However, this is not the best decision of $s$. The equilibrium for this network is as follows.
\begin{center}
    \begin{tabular}{ | l | l | l | l | l | l |}
    \hline
    $p_{su}$ & $p_{sv}$ & $x^{out}_{su}$ & $x^{out}_{sv}$ & $x$ & $y$ \\ \hline
		4 & 4 & 1 & 1 & 1 & 1 \\ \hline		
    \end{tabular}
\end{center}
One can verify that $\Pi_v$ and $\Pi_v$ are concave and the derivatives are zeros. The payoff of $s$ is
\[\Pi_s = p_{su} x + p_{sv} y - p_s (x^{out}_{su} + x^{out}_{sv}) = 4 \times 1 + 4 \times 1 - 1 \times (1+1) = 6.\]
\end{example}

We observe that the best strategy for each firm $i \in V$ is to always sell as much as bought since it cannot benefit from paying more for those unsold goods. At the selling side, suppose firm $i$ is willing to offer ${x^{out}_{ij}}$ quantity of goods to firm $j$, but part of the goods got rejected, i.e. $x^{in}_{ij} < x^{out}_{ij}$. This can never happen in an equilibrium, because $i$ will be better off by rejecting $x^{out}_{ij} - x^{in}_{ij}$ amount of goods from its upstream before selling.

The next observation lists the properties of supplying quantities at an equilibrium.

\begin{observation}[Equilibrium Flow Conservation] \label{obs: fc}
An equilibrium satisfies:
\begin{enumerate}
\item $x^{out}_{ij} = x^{in}_{ij}$ for each $ij \in E$.
\item $\sum_{k \in S(i)}{x^{in}_{ki}} = \sum_{j \in B(i)}{x^{out}_{ij}}$, i.e. inflow is equal to outflow for each firm $i \in V$.
\end{enumerate}
\end{observation}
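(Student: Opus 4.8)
The plan is to fix an arbitrary firm $i \in V$ (or the source $s$), treat every other firm's announced quantities, prices, and best-response rules as given, and show that any profile violating either property cannot be a best response for $i$, hence cannot occur at equilibrium. Throughout I would lean on the two hard constraints the model imposes on a single firm: a buyer accepts no more than is offered, $x^{in}_{ij} \leqslant x^{out}_{ij}$; and, since a firm has no external supply and derives no value from retaining goods, it must procure enough to honor its offers, so that $\sum_{j \in B(i)} x^{out}_{ij} \leqslant \sum_{k \in S(i)} x^{in}_{ki}$. Because $i$'s payoff $\Pi_i$ in \eqref{eq:fp} collects revenue only through the accepted quantities $x^{in}_{ij}$ and pays only for the accepted quantities $x^{in}_{ki}$, each property will follow from exhibiting a feasible deviation that leaves $i$'s revenue untouched while strictly lowering its procurement cost.

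For the first property I would argue on the selling side. Suppose toward a contradiction that $x^{in}_{ij} < x^{out}_{ij}$ on some arc $ij$ at equilibrium. Then the cap $x^{out}_{ij}$ is not binding on buyer $j$, so the accepted amount $x^{in}_{ij}$ is $j$'s unconstrained best response given $p_{ij}$. Let $i$ deviate by lowering its offer to $\tilde{x}^{out}_{ij} = x^{in}_{ij}$ and decreasing its purchases by the slack $x^{out}_{ij} - x^{in}_{ij} > 0$ (trimming acceptances from one or more sellers, which is possible since total procurement is at least the total offered). This keeps the feasibility inequality valid, since both sides drop by the same amount, and it does not change $j$'s decision: $x^{in}_{ij}$ remains feasible and still maximizes $j$'s unchanged objective. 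Hence $i$'s revenue is unchanged while its cost strictly falls, contradicting optimality. The identical deviation applies verbatim to the source $s$, yielding $x^{out}_{ij} = x^{in}_{ij}$ on every arc.

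The second property then follows on the buying side. Using $x^{out}_{ij} = x^{in}_{ij}$, the feasibility inequality reads $\sum_{j} x^{in}_{ij} \leqslant \sum_{k} x^{in}_{ki}$, i.e. outflow does not exceed inflow. If this were strict, firm $i$ would be paying for $\sum_k x^{in}_{ki} - \sum_j x^{in}_{ij} > 0$ units that it never sells; reducing its acceptances $x^{in}_{ki}$ by that surplus leaves all deliveries and hence all revenue fixed, preserves $x^{in}_{ki} \leqslant x^{out}_{ki}$, and strictly lowers cost, which is again impossible at equilibrium. Thus inflow equals outflow, and combined with the first property we obtain $\sum_{k \in S(i)} x^{in}_{ki} = \sum_{j \in B(i)} x^{out}_{ij}$.

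I expect the main obstacle to be making precise that these single-firm deviations leave the rest of the profile intact: specifically, that lowering a non-binding offer cap leaves the buyer's maximizer exactly where it was, and that trimming a purchase affects only the upstream seller's realized sales, which is irrelevant to $i$'s own optimization. A secondary point to pin down is the strictness of the cost reduction, which relies on the marginal procured unit carrying a positive price; this traces back to $p_s = a_s > 0$ and to the fact that goods moving along an active trading route are bought at positive prices, so that each deviation above is a \emph{strict} improvement rather than a mere tie.
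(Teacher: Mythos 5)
Your proposal is correct and follows essentially the same reasoning the paper gives: the paper justifies this observation informally in the paragraph preceding it, arguing that a firm gains nothing by paying for unsold goods and that a partially rejected offer means the seller should have declined the surplus from its own upstream, which is exactly the pair of cost-reducing deviations you formalize. Your added care about the non-binding cap and the strictness of the improvement goes slightly beyond what the paper writes down, but the approach is the same.
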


Now we are ready to state the formal definition of an equilibrium.

\begin{definition}
An \emph{equilibrium} is a set of strategies including:
\begin{enumerate}
\item the strategy of the source producer $s$: $p^{out}_{sj}$ and $x^{out}_{sj}$ for $j \in B(s)$, and
\item the strategy of each intermediary firm $i \in V$: $p^{out}_{ij}$ (if $j \neq t$, otherwise $p^{out}_{ij} = p_{t} =  a_t - b_t X_t$), $x^{out}_{ij}$ for $j \in B(s)$, and $x^{in}_{ki}$ for $k \in S(i)$
\end{enumerate}
such that
\begin{enumerate}
\item $x^{in}_{ij} = x^{out}_{ij}$ for each $ij \in E$, i.e. $j$ accepts all the goods $i$ offers, and
\item for each firm $i \in \{s\} \cup V$, $i$ does not have the incentive to change its strategy for a better payoff, assuming that each descendant firm of $i$ plays the best strategy that maximizes its payoff, and the strategy of non-descendant firms of $i$ remain the same.
\end{enumerate}
\end{definition}

For later notations, at an equilibrium, we denote $x_{ij}$ as the \emph{flow} of arc $ij$, i.e. $x_{ij} = x^{out}_{ij} = x^{in}_{ij}$, and no longer use $x^{in}_{ij}$ and $x^{out}_{ij}$. Meanwhile, since each firm accepts all the offers and sells everything they bought, we denote this sum of flow as $X_i := \sum_{k \in S(i)}{x_{ki}} = \sum_{j \in B(i)}{x_{ij}}$. The utility of firm $i$ in \eqref{eq:fp} becomes
\begin{equation} \label{eq:fp2}
\Pi_i = \sum_{j \in B(i)}{p_{ij} x_{ij}} - \sum_{k \in S(i)}{p_{ki}x_{ki}}
\end{equation}
and the utility of source firm $s$ in \eqref{eq:sp} becomes
\begin{equation} \label{eq:sp2}
\Pi_s = \sum_{j \in B(s)}{p_{sj} x_{sj}} - p_s \sum_{j \in B(s)}{x_{sj}}.
\end{equation}

For flow activities, an arc $ij \in E$ is {\em active} if $x_{ij} > 0$, and {\em inactive} if $x_{ij} = 0$. For each firm $i \in V$ and active arcs $ki \in E$ and $ij \in E$, $p_{ki} \leqslant p_{ij}$. That is, the buying price should not exceed the selling price. Otherwise, $i$ could have been better off by rejecting some goods from $k$ and choosing not to offer the same amount of goods to $j$.

\begin{observation} \label{obs:inc-p}
For every $ki \in E$ and $ij \in E$ that are active, the price at an equilibrium satisfies $p_{ki} \leqslant p_{ij}$.
\end{observation}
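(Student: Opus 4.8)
The plan is to prove the inequality by contradiction using a single feasible local deviation for firm $i$, namely the one hinted at in the text preceding the statement: simultaneously buy a little less from $k$ and sell a little less to $j$. Suppose, toward contradiction, that $ki$ and $ij$ are both active at an equilibrium yet $p_{ki} > p_{ij}$. Since both arcs are active, $x_{ki} > 0$ and $x_{ij} > 0$, so I can fix a perturbation size $\epsilon$ with $0 < \epsilon \le \min\{x_{ki}, x_{ij}\}$ and consider the deviation in which firm $i$ accepts $\epsilon$ fewer units from $k$ (lowering $x^{in}_{ki}$ to $x_{ki} - \epsilon$) and offers $\epsilon$ fewer units to $j$ (lowering $x^{out}_{ij}$ to $x_{ij} - \epsilon$), leaving every other quantity and every price it controls, including $p_{ij}$, unchanged. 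Because the total inflow and the total outflow of $i$ both drop by exactly $\epsilon$, this deviation respects the flow-conservation requirement of Observation~\ref{obs: fc} and keeps firm $i$'s decision within its feasible set. Note the statement only concerns an intermediary firm, since the existence of both $ki$ and $ij$ forces $i \in V$.

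Next I would evaluate the effect on $i$'s payoff in \eqref{eq:fp2}. The cost term changes only through the arc $ki$: the material cost paid to $k$ drops from $p_{ki} x_{ki}$ to $p_{ki}(x_{ki} - \epsilon)$, a saving of $p_{ki}\epsilon$. The revenue term changes only through the arc $ij$, and here I must argue that $j$ still accepts the full, smaller offer at the unchanged price $p_{ij}$, so that the revenue from $j$ falls from $p_{ij} x_{ij}$ to exactly $p_{ij}(x_{ij} - \epsilon)$. Granting this, the net change in utility is $(p_{ki} - p_{ij})\epsilon > 0$, so firm $i$ strictly benefits. This contradicts the equilibrium condition that $i$ has no profitable change of strategy (assuming its descendants best-respond and its non-descendants stay fixed), and therefore $p_{ki} \le p_{ij}$, as claimed.

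The step needing the most care, and which I expect to be the main obstacle, is the claim that firm $j$ still accepts all of the reduced offer at price $p_{ij}$. At the original equilibrium, Observation~\ref{obs: fc} gives $x^{in}_{ij} = x^{out}_{ij} = x_{ij}$, so $j$ accepted the entire offer of $x_{ij}$ units at price $p_{ij}$; since $j$'s payoff is concave in its accepted quantity (the quadratic induced by the affine market price at $t$), the binding of this cap reveals that $j$'s unconstrained optimal purchase from $i$ at price $p_{ij}$ is at least $x_{ij}$. As the deviation only lowers the quantity cap on arc $ij$ to $x_{ij} - \epsilon < x_{ij}$ while leaving $p_{ij}$ and the rest of $j$'s environment unchanged, $j$'s best response is to accept the entire smaller cap, i.e. $x^{in}_{ij} = x_{ij} - \epsilon$ after the deviation. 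Any feedback from $j$'s re-optimization onto its own downstream trades is irrelevant to $i$'s payoff, which in \eqref{eq:fp2} depends on arc $ij$ only through $p_{ij}$ and the accepted quantity. Making this monotonicity-of-acceptance argument precise is the one place where the downstream structure must be invoked; once it is in hand, the utility comparison is immediate and the contradiction closes the proof.
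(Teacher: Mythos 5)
Your proposal is correct and is essentially a careful formalization of exactly the justification the paper gives: the paper proves this observation only by the one-line remark that otherwise ``$i$ could have been better off by rejecting some goods from $k$ and choosing not to offer the same amount of goods to $j$,'' which is precisely your $\epsilon$-deviation. The extra care you take to verify that $j$ still accepts the full reduced offer (via concavity of $j$'s payoff and the fact that the cap was binding at equilibrium) is a legitimate point the paper glosses over, and your handling of it is sound.
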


To define equilibrium uniqueness, we require the set of active arcs to be unique, as well as the flow and price of each active arc. The prices of inactive trades, on the other hand, can be arbitrary since they do not contribute to the seller revenue.

\begin{definition}
An equilibrium of a network $G$ is \emph{unique} if the set of active arcs is unique, as well as the flow and price of each active arc.
\end{definition}

\subsection{Series Parallel Graphs}

\paragraph{General Series Parallel Graphs} We consider the case when $G$ is a {\em series parallel graph} (SPG). The networks in Example~\ref{ex:eq-ln} and \ref{ex:eq-simple-spg} are both SPGs. Our main goal is to compute the equilibrium in networks that belong to this special graph family. This class of networks is well studied and has several applications in graph theory (see for example \cite{duffin1965topology}). For completeness, we  provide a formal definition as  follows. 
\begin{definition}[SPG] \label{def: 2.4}
A {\em single-source-and-sink SPG} is a graph that can be constructed by a sequence of series and parallel compositions starting from a set of copies of a single-arc graph, where:
\begin{enumerate}
\item {\em Series composition} of $X$ and $Y$: given two SPGs $X$ with source $s_X$ and sink $t_X$, and $Y$ with source $s_Y$ and sink $t_Y$, form a new graph $G=S(X,Y)$ by identifying $s=s_X$, $t_X=s_Y$, and $t=t_Y$.
\item {\em Parallel composition} of $X$ and $Y$: given two SPGs $X$ with source $s_X$ and sink $t_X$, and $Y$ with source $s_Y$ and sink $t_Y$, form a new graph $G=P(X,Y)$ by identifying $s=s_X=s_Y$ and $t=t_X=t_Y$. 
\end{enumerate}
\end{definition}

\paragraph{Shortcut-free Series Parallel Graphs}

We start with the definition of \emph{shortcuts}.

\begin{definition}
    Given an SPG $G=(V,E)$, let $i,j \in V$. Consider a path $l_{ij}=(i, v_1, ..., v_k, j)$ from node $i$ to node $j$. If there is an arc $ij \in E$, then we say $ij$ is a {\em shortcut} of $l_{ij}$, or $ij$ {\em dominates} path $l_{ij}$.
\end{definition}

\begin{definition}
An SPG is \emph{shortcut-free} if it has no shortcuts, i.e. there is no path dominated by an arc.    
\end{definition}

Node $k$ is a {\em parent node} of $i$ if there is a directed path from $k$ to $i$. The set of parent nodes of $i$ is denoted as $P(i)$. Similarly, we can define child nodes and the set of children $C(i)$. Given a shortcut-free SPG, if we consider the relation between direct (or adjacent) parent and child $i$ and $j$, where $ij \in E$, there are three possibilities\footnote{Multiple sellers and multiple buyers case does not exist in shortcut-free SPGs, we refer the proof to Appendix~\ref{app:no_mm}}:

\iffalse
Base on the parent-child relation, the order of nodes is defined as below:
\begin{definition}[Node Order] \label{def: order}
	$j$ has lower order than $i$ if and only if $j$ is a child of $i$. If neither $i$ is a child of $j$ nor $j$ is a child of $i$, then $i$ and $j$ have the same order.
\end{definition}
\fi

\begin{itemize}
	\item Single seller and single buyer, $|S(j)| = |B(i)| = 1$. (SS)
	\item Multiple sellers and single buyer, $|S(j)| \geqslant 2, |B(i)| = 1$. (MS)
	\item Single seller and multiple buyers, $|S(j)| = 1, |B(i)| \geqslant 2$. (SM)
\end{itemize}

\begin{center}
\begin{tikzpicture}
	\node[vertex] (i) at (-2,0) {$i$}; 
	\node[vertex] (j) at (0,0) {$j$}; 
	\path[->]
		(i) edge (j);
	\node at (-1, -2) {$SS$};

	\node[vertex] (i1) at (2,1.5) {$i_1$}; 
	\node[vertex] (i2) at (2,0.8) {$i_2$};
	\node[vertex] (ik) at (2,-1.5) {$i_m$}; 
	\node[vertex] (i) at (2,-0.2) {$i$};
	\node[vertex] (j) at (4,0) {$j$}; 
	\draw[dotted, very thick] (2,0.5) -- (2,0.1);
	\draw[dotted, very thick] (2,-0.5) -- (2,-1.2);
	\path[->]
		(i) edge (j)
		(i1) edge (j)
		(i2) edge (j)
		(ik) edge (j);
	\node at (3, -2) {$MS$};

	\node[vertex] (j1) at (8,1.5) {$j_1$}; 
	\node[vertex] (j2) at (8,0.8) {$j_2$};
	\node[vertex] (jk) at (8,-1.5) {$j_m$}; 
	\node[vertex] (i) at (6,0) {$i$};
	\node[vertex] (j) at (8,-0.2) {$j$}; 
	\draw[dotted, very thick] (8,0.5) -- (8,0.1);
	\draw[dotted, very thick] (8,-0.5) -- (8,-1.2);
	\path[->]
		(i) edge (j1)
		(i) edge (j2)
		(i) edge (j)
 		(i) edge (jk);
	\node at (7, -2) {$SM$};
\end{tikzpicture}
\end{center}

% Note that without a shortcut, the multiple sellers and multiple buyers case does not exist in SPG.

Sometimes there are multiple paths from a parent node to one of its children, we call these paths {\em disjoint} if they do not have any common intermediary nodes, that is, all nodes except the starting and the ending ones are different. Base on this definition, we can define the merging nodes with respect to node $i$.

% A noteworthy property of series parallel graph is that whenever a node have at least two outgoing edges, they always eventually merge at a node via different paths that does not have any common intermediate nodes, that is, all nodes except the starting and the ending ones are different. Such paths are called {\em disjoint paths}. 

\begin{definition}[Self-merging Child Node] \label{def: 3.4}
	Node $j \in C(i)$ is a {\em self-merging child node} of $i$ if there are disjoint paths from $i$ to $j$. The set of such nodes $j$ is termed $C_S(i)$.
\end{definition}

\begin{definition}[Parent-merging Child Node] \label{def: 3.3}
	Node $j \in C(i)$ is a {\em parent-merging child node} of $i$, if there exists a node $k \in P(i)$, such that there are disjoint paths from $k$ to $j$. The set of such nodes $j$ is denoted as $C_P(i)$.
\end{definition}

For $ij \in E$, we introduce the set of special self-merging child nodes of $i$ and its direct child $j$ as $C_T(i,j) := C_S(i) \cap C(j) \backslash C_P(i)$. This notation helps us capture the ``internal'' merging nodes that are responsible for the selling price and quantity offered to $i$.

\begin{observation} \label{obs:mc}
An SPG has the following properties:
\begin{enumerate}
	\item $C_P(s) = C_P(t) = \emptyset$.
	% \item For $i \in V$, $C_M(i) = C_S(i) \cup C_P(i)$.
	% \item For $ij \in E$, $C_P(j) = C_M(i) \cap C(j) = (C_S(i) \cap C(j)) \cup (C_P(i) \cap C(j))$.
%	\item For $ij \in E$, $C_P(j) = C_P(i) + C_S(i) \cap C(j) \backslash C_P(i)$.
	\item In the SS case for $ij \in E$, $C_P(j) = C_P(i)$.
	\item In the SM case for $ij \in E$, $C_P(j) = C_P(i) \sqcup C_T(i,j)$.
	\item In the MS case for $ij \in E$, $C_P(i) = C_P(j) \sqcup \{j\}$.
	% \item For $G=S(X,Y)$, $C_P(t_X) = \emptyset$.
	%\item For every head and tail node $i$ of a series decomposition, $C_P(i) = \emptyset$. e.g. $C_P(s) = C_P(t) = \emptyset$.
\end{enumerate}
\end{observation}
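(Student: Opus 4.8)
The plan is to establish each of the four properties of Observation~\ref{obs:mc} by unwinding the definitions of $C_P$, $C_S$, and $C_T$ against the structure of a shortcut-free SPG, using induction on the series/parallel composition that builds $G$ where the combinatorics get delicate. First I would dispose of property~(1): since $s$ is the unique source, no parent node $k \in P(s)$ exists, so the defining condition of $C_P(s)$ (a parent $k$ with two disjoint paths to $j$) is vacuous and $C_P(s)=\emptyset$; symmetrically, $t$ is the unique sink with no children, so $C(t)=\emptyset$ forces $C_P(t)=\emptyset$.

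For property~(2), the SS case, the hypothesis $|S(j)|=|B(i)|=1$ means the arc $ij$ is the only way into $j$ and the only way out of $i$, so every path from any ancestor $k$ passing through this region must traverse $ij$ as a cut-arc. The plan is to argue that a node $w$ is a parent-merging child of $j$ exactly when it is one of $i$: any two disjoint paths from some $k \in P(j)$ to $w$ either both avoid the region around $ij$ or, if they reach $w$ through $j$, they must funnel through the single arc $ij$ and hence cannot be disjoint past $i$. I would show $P(j)$ and $P(i)$ induce the same merging structure because $i$ and $j$ are joined by a single arc with no alternative routes on either side, giving $C_P(j)=C_P(i)$. The containment in both directions should follow by transporting a witnessing pair of disjoint paths across the arc $ij$.

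Properties~(3) and (4) are the substantive cases and I expect them to be the main obstacle, since they assert a \emph{disjoint-union} decomposition and I must prove both that the stated sets are disjoint and that they exhaust $C_P$. For the SM case~(3), where $i$ has a single seller but multiple buyers including $j$, a child $w$ of $j$ is parent-merging for $j$ either because the disjoint-path witness already existed upstream of $i$ (contributing $C_P(i)$) or because the two disjoint paths to $w$ first split at $i$ itself—one going through $j$ and one through a sibling buyer—which is precisely the self-merging-through-$i$ condition that lands $w$ in $C_S(i)\cap C(j)$; the subtraction of $C_P(i)$ in the definition of $C_T(i,j)$ is what makes the union disjoint. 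I would verify disjointness by checking that a node cannot simultaneously have its earliest split at a strict ancestor of $i$ and have $i$ be the split point. For the MS case~(4), where $j$ has multiple sellers and $i$ is one of them with a single buyer, the new element is $j$ itself: the multiple sellers of $j$ provide disjoint paths from a common ancestor $k\in P(i)$ to $j$, so $j$ becomes parent-merging for $i$, while all deeper parent-merging children of $j$ are inherited by $i$; disjointness of $\{j\}$ from $C_P(j)$ holds because $j\notin C(j)$. The hard part throughout is rigorously locating the \emph{split point} of a pair of disjoint paths and arguing it is unique, which is where shortcut-freeness is essential—without it an arc could bypass an intermediary node and corrupt the merging classification. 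I would lean on the recursive series/parallel decomposition to pin down that every pair of internally disjoint paths between two nodes arises from a parallel composition at a well-defined node, making the split point canonical and the case analysis exhaustive.
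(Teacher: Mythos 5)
The paper itself offers no proof of Observation~\ref{obs:mc}: it is stated as an unproven observation and only illustrated by Example~\ref{app:ex:mc} in the appendix, so there is no ``official'' argument to compare against. Judged on its own, your plan is essentially correct and identifies the right key tool, namely that in an SPG every pair of internally disjoint paths into a node $w$ arises from a parallel block with sink $w$, so the split point (the source $q$ of that block) is canonical and is a cut vertex separating $w$ from all strict ancestors of $q$. Property~(1) and the SS case go through exactly as you describe. Two places in your plan are asserted rather than derived, and both are precisely where the canonical-split-point lemma must be invoked. First, in the SM case your classification of elements of $C_P(j)$ gives the inclusion $C_P(j)\subseteq C_P(i)\sqcup C_T(i,j)$, but the reverse inclusion needs $C_P(i)\subseteq C(j)$: you must rule out that a parent-merging child $w$ of $i$ lies strictly inside one of the parallel branches emanating from $i$, which follows because $i$ would then be a cut vertex for $w$ and the two ``disjoint'' paths from an ancestor $k\in P(i)$ would both pass through $i$. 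Second, in the MS case the nontrivial direction is $C_P(j)\subseteq C_P(i)$, which you describe as inheritance; note that $P(j)$ strictly contains $P(i)\cup\{i\}$ (it includes the other sellers of $j$ and their ancestors inside the parallel block ending at $j$), so a witness $k\in P(j)$ for $w\in C_P(j)$ need not a priori lie in $P(i)$. You must argue that the only possible witness is the block source $q(w)$, that $q(w)\neq i$ because $|B(i)|=1$ while a block source has out-degree at least two, and that $q(w)\in P(i)$ because $q(w)$ cannot be a descendant of $i$ (its only descendants are $j$ and $C(j)$) and every $s$--$w$ path through $i$ must meet $q(w)$. With those two containments made explicit, your outline closes into a complete proof.
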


We note that $\sqcup$ stands for disjoint set union. We show Example~\ref{app:ex:mc} in the appendix for merging child node sets.

\section{Equilibrium Computation} \label{sec: clearance}
%!TEX root = main.tex

We present two algorithms to compute the equilibrium quantity and price for each arc. We start with shortcut-free SPGs and show that all arcs are active at equilibrium. To do so, we first derive a closed-form relation between the quantity and price offered to the firms at equilibrium via a backward algorithm. Then, we show that the unique optimal quantity and price offered to each firm can be solved following the decision sequence from the source to sink by the closed-form relation. For SPGs with shortcuts, we show that the trade on paths dominated by shortcuts are inactive. Thus, the equilibrium for an SPG with shortcuts can be found by the same algorithm after removing dominated paths.

\subsection{Shortcut-free Series Parallel Graphs}

\paragraph{Equilibrium Price Computation.} A key characteristic of the equilibrium is that all edges are active when there are no shortcuts. The equilibrium price has a closed-form expression in terms of the equilibrium quantity based on the structure of the SPG.
%and obtain some relation between the price offered to agent $i$, the total quantity intended to sell to agent $i$, and the total quantity intended to sell to the parent-merging child nodes of agent $i$. 

\begin{restatable}{theorem}{lemeqprice}\label{lem:pq-relation}
Given a shortcut-free SPG $G$, at an equilibrium, all arcs are active. For each firm $i \in V$, each seller $k \in S(i)$ offers $i$ the same price $p_i$, and the following relation holds
\[
p_i = a_t - b_i X_i - \sum_{l \in C_P(i)} b_l X_l
\]
where $b_i$ for each $i \in V \cup \{s\}$ is a positive constant that only depends on the structure of $G$.
\end{restatable}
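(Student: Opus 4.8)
The plan is to prove both the universal activity of arcs and the closed-form price relation simultaneously by \emph{backward induction} on the reverse topological order of the shortcut-free SPG, processing firms from the sink $t$ toward the source $s$. The inductive hypothesis will assert that for a firm $i$ already processed, every seller $k \in S(i)$ offers $i$ the common price $p_i = a_t - b_i X_i - \sum_{l \in C_P(i)} b_l X_l$ for some positive constant $b_i$ depending only on the structure of $G$. The base case is the firm(s) directly selling into $t$: since the market accepts everything at price $p_t = a_t - b_t X_t$ and $C_P(i) = \emptyset$ for such $i$ when the graph is reduced appropriately, the relation reads $p_i = a_t - b_t X_t$, which fixes the base constant.

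\medskip

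\textbf{Inductive step.} For a firm $j$ whose direct child $i$ (i.e.\ $ji \in E$) has already been processed, I would invoke Observation~\ref{obs:mc} to split into the three local cases SS, MS, SM identified in the preliminaries. In each case the firm $j$ chooses its selling prices and quantities to maximize $\Pi_j$ as in \eqref{eq:fp2}, assuming its downstream buyers respond rationally according to the already-established price functions $p_i$. The key computation is to substitute the child's known price relation into $j$'s utility, express $\Pi_j$ as a quadratic in the flow variables $x_{ji}$, take the partial derivative with respect to the buying/selling quantity, and set it to zero (the first-order condition). In the SS case this is a direct one-variable optimization yielding a new affine price relation with an updated coefficient $b_j$; one reads off how $b_j$ relates to $b_i$ (I expect something like $b_j$ being a fixed positive multiple of $b_i$ arising from the double-marginalization effect). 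In the MS case, several sellers feed a single buyer $i$, and by Observation~\ref{obs:mc}(4) the set $C_P$ gains the node $i$ itself, accounting for the extra $b_i X_i$ term in the sum; the first-order conditions across the sellers must be assembled and solved. The SM case, where $j$ is a single seller facing multiple buyers, is where I expect the real difficulty.

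\medskip

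\textbf{Main obstacle.} The hard part will be the SM case, exactly as the authors flag in their proof-techniques overview. Here $j$ sells to multiple buyers $j_1,\dots,j_m$, each carrying its own downstream price relation, and the seller's utility is a quadratic form in the vector of outgoing flows subject to nonnegativity. Maximizing this requires handling the possibility of some arcs going inactive, so the correct tool is a linear complementarity problem (LCP) rather than naive unconstrained first-order conditions. I would need to argue that, because the graph is shortcut-free, the set $C_T(i,j)$ of internal merging nodes cleanly captures which downstream terms enter, so that Observation~\ref{obs:mc}(3) gives $C_P(j) = C_P(i) \sqcup C_T(i,j)$ and the summation $\sum_{l \in C_P(j)} b_l X_l$ comes out correctly. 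Crucially, I must show \emph{all} arcs are active: the claim is that at the optimum the LCP solution has every outgoing flow strictly positive, which rules out the degenerate inactive-link behavior and keeps the price function purely affine (not piecewise). Establishing strict positivity is the delicate step, since it is precisely the failure of this property in non-SPG or generalized networks that breaks the theorem; I would lean on the shortcut-free structure and the strict concavity of the relevant quadratic (equivalently, positive definiteness of the associated matrix, guaranteeing a unique LCP solution) to force an interior optimum and thereby conclude positivity of the new coefficient $b_j$ and activity of all arcs.

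\medskip

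Finally, I would verify that the derived constant $b_j$ depends only on the structure of $G$ and not on the parameters $a_t, a_s, b_t$, by checking that the recursive relations defining $b_j$ from $\{b_i : i \in C(j)\}$ involve only combinatorial data (the local case type and the merging-node sets). Tracking the $\sum_{l \in C_P(i)} b_l X_l$ term through each composition via Observation~\ref{obs:mc} is what guarantees the stated global form of the price relation, completing the induction back to $s$.
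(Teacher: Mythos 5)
Your overall architecture matches the paper's proof exactly: a reverse topological (backward) induction from $t$, an inductive hypothesis that each processed firm $i$ faces a common affine price $p_i = a_t - b_iX_i - \sum_{l\in C_P(i)}b_lX_l$, a case split into SS/MS/SM driven by Observation~\ref{obs:mc}, first-order conditions packaged as a linear complementarity problem, and the SM case isolated as the hard step and attacked through a quadratic program for the seller's distribution problem. So the route is the same; the issue is with how you close the crucial ``all arcs are active'' claim.

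The gap is in your justification for strict positivity of the SM flows. You propose to ``lean on \dots the strict concavity of the relevant quadratic (equivalently, positive definiteness of the associated matrix, guaranteeing a unique LCP solution) to force an interior optimum.'' Strict concavity over the nonnegative orthant gives you \emph{uniqueness} of the maximizer, but it does not give you \emph{interiority}: a strictly concave quadratic can perfectly well attain its constrained maximum on the boundary with some $x_{ij}=0$ (indeed this is exactly what happens once shortcuts are present, cf.\ Lemma~\ref{lem:shortcut}, even though the objective is still strictly concave there). The paper instead closes this step constructively: it writes the stationarity conditions as the linear system \ref{ls-sm}, whose right-hand side is identical across $j\in B(i)$, solves it explicitly as $x_{ij}=\alpha_j C$ with convex coefficients $\alpha_j$ built recursively from the (positive) downstream coefficients via \eqref{app: eq: agg}--\eqref{app: eq: vercoeff}, and observes that every $\alpha_j>0$ because all the $b$'s are positive; hence $C>0$ forces every outgoing arc active, and activity propagates back to $s$ since $X_s=(a_t-a_s)/b_s>0$. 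You need this explicit positivity computation (or an equivalent argument showing the unconstrained stationary point is feasible), not an appeal to concavity. A secondary, minor point: your final check that $b_j$ is independent of $b_t$ would fail — the recursions start from $b_t$ and each $b_i$ is proportional to it (this is Lemma~\ref{lem: 4.1}); what is structural is the ratio $b_i/b_t$.
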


Theorem~\ref{lem:pq-relation} shows a concise way to present the price and quantity relation at equilibrium. The high level strategy for deriving the closed-form expression is via a backward induction starting from the sink market $t$. In order to calculate the subgame perfect equilibrium, we consider the fact that the upstream firms make their decision based on the best decision of the downstream. Since the price of the sink market $t$ is an affine decreasing function, we can inductively show that the utility of each intermediary firm is a concave function of the quantity. We observe that the derivative of the utility with respect to the quantity of a trade cannot be positive since otherwise the upstream firm can be better off by raising the price. On the other hand, the derivative with respect to the quantity of an active trade must be zero. This is because the upstream firms assume that the downstream firms make their decision to maximize their utility, which implies the derivative is zero. By this observation, we derive a linear complementarity problem, and show that the best price offered to the downstream is an affine decreasing function of the quantity. The proof is given in Appendix~\ref{app:lem:pq-relation}. By adapting the main equations in that proof, we introduce Algorithm~\ref{alg: 1} to compute the price at equilibrium in terms of the quantity.

Algorithm~\ref{alg: 1} starts from the sink market $t$. In each iteration, given the downstream $b_j$ where $j \in B(i)$, we compute $b_i$ and this can be done in $O(deg^+(i))$ time where $deg^+(i)$ is the out-degree of $i$. The calculation is in a reverse topological order. Particularly, in the SM case, we store the {\em convex coefficients} of each downstream node $j \in B(i)$ which is used later in the equilibrium quantity computation. The number of the $b_i$ computation is bounded by $O(|V|)$. Therefore, it takes linear time to compute the price function in terms of quantity by Algorithm~\ref{alg: 1}. When $s$ is reached, we already have $p_s = a_t - b_s X_s = a_s$ since $C_P(s) = \emptyset$. Eventually, $X_s = \frac{a_t - a_s}{b_s}$ so the expected price of $s$ meets the given production cost. We show Example~\ref{app: ex: price_compute_alg1} in the appendix for the price calculation by Algorithm~\ref{alg: 1}.

\begin{algorithm}[H] 
\caption{: Price Function Computation (Backward)}
\label{alg: 1}
\renewcommand{\algorithmicrequire}{\textbf{Input:}}
\renewcommand{\algorithmicensure}{\textbf{Output:}}

\algorithmicrequire{ A shortcut-free SPG $G = (V \cup \{s,t\}, E)$, the price function $p_t = a_t - b_t X_t$, and the source production cost $p_s = a_s$.}

\algorithmicensure{ The equilibrium price function $p_i$ for each $i \in V$, the convex coefficients $\alpha_j$ for arc $ij$ where $j \in B(i)$ in the SM case, and the source flow $X_s$.}
\begin{algorithmic}[1]
\State Starting from $t$, given the downstream buyer(s) $j$'s price function $p_j$, compute the upstream seller(s) $i$'s price case by case in a reverse topological order:

\begin{itemize}
	\item For the SS case,
	\begin{align} \label{eq: SS} \tag{SS}
	b_i = 2 b_j + \sum_{l \in C_P(j)} b_l.
	\end{align}
	\item For the MS case, for each seller $i$,
	\begin{align} \label{eq: MS} \tag{MS}
	b_i = b_j + \sum_{l \in C_P(j)} b_l.
	\end{align}
	\item For the simple SM case ($|C_S(i)| = 1$)\footnotemark, suppose $b_{j}$ was calculated for all $j \in B(i)$,
	\begin{align*} \label{eq: SM} \tag{Simple SM}
	b_i = \frac{2}{\sum_{j \in B(i)} \frac{1}{b_{j}}} + 2 \sum_{l \in C_S(i) \setminus C_P(i)} b_l + \sum_{l \in C_P(i)} b_l.
	\end{align*}
	For each $j \in B(i)$, assign the convex coefficient $\alpha_j = \frac{\frac{1}{b_j}}{\sum_{j' \in B(i)} \frac{1}{b_{j'}}}$ to arc $i j$.
\end{itemize}

\State Set the price function at seller $i$: $p_i = a_t - b_i X_i - \sum_{l \in C_P(i)} b_l X_l$.
\If {seller $i$ is the source $s$}
	\State {Set $X_s = \frac{a_t - a_s}{b_s}$.}
	\State \Return
\EndIf

\end{algorithmic}
\end{algorithm}

\footnotetext{If $|C_S(i)| \geqslant 2$, the computation of $b_i$ is more complicated, the detail is provided in Appendix~\ref{para:sm}. We also show Example~\ref{app: ex: price_compute} for this case in the appendix.}

\paragraph{Equilibrium Quantity Computation.} \label{subsubsec:equi_quant_comp}
We consider shortcut-free SPGs. After having the closed-form relation between the equilibrium price and quantity, we present an algorithm that finds the unique equilibrium. Consider the quantity decision for firm $i$ to its downstream buyers $j \in B(i)$. Suppose firm $i$ only has one buyer, i.e., $|B(i)| = 1$, by Observation~\ref{obs: fc}, inflow equals outflow at firm $i$, and firm $j$ will accept all the offer from $i$, formally, $x_{ij} = X_i$. Hence, in the following analysis, we focus on the nontrivial case, the SM case, when firm has multiple downstream buyers, i.e., $|B(i)| \geqslant 2$. How to assign the goods to different buyers so that the utility of firm $i$ is maximized? We recall that by Theorem~\ref{lem:pq-relation}, all arcs are active and for each firm $j \in V$, $p_{ij}=p_j$ for each $i \in S(j)$. Therefore, the utility of $i$ in \eqref{eq:fp2} becomes
\begin{align} \label{eq:fp3}
\Pi_i = \sum_{j \in B(i)} p_j x_{ij} - p_i \sum_{j \in B(i)} x_{ij}.
\end{align}
Again by Theorem~\ref{lem:pq-relation}, the price function of seller $j$ is
\begin{align} \label{eq: q2}
p_j = a_t - b_j x_{ij} - \sum_{l \in C_P(j)} b_l X_l
\end{align}
since $X_j = x_{ij}$ in the SM case.

The utility of firm $i$ in \eqref{eq:fp3} is concave. At the equilibrium, if $x_{ij} > 0$, then $\frac{\partial \Pi_i}{\partial x_{ij}} = 0$; if $x_{ij} = 0$, then $\frac{\partial \Pi_i}{\partial x_{ij}} \leqslant 0$. Therefore, solving the optimal decision for firm $i$ is equivalent to solving the following linear complementarity problem (LCP) with variables $x_{ij}$ where $j \in B(i)$.
\begin{equation} \label{eq: lcp} \tag{LCP}
\begin{cases}
\sum_{j \in B(i)} \frac{\partial \Pi_i}{\partial x_{ij}} x_{ij}  = 0,\\
\frac{\partial \Pi_i}{\partial x_{ij}}  \leqslant 0  \quad \forall j \in B(i),\\
x_{ij} \geqslant 0 \quad \forall j \in B(i).
\end{cases}
\end{equation}

To solve this feasibility problem \ref{eq: lcp} and find the optimal allocation to the downstream firm $j$, we take the derivative of $\Pi_i$ with respect to $x_{ij}$ and obtain
\begin{equation} \label{eq: q3}
\begin{aligned} 
\frac{\partial \Pi_i}{\partial x_{ij}} &= p_j + \sum_{h \in B(i)} \frac{\partial p_h}{\partial x_{ij}} x_{ih} - p_i.
\end{aligned}
\end{equation}

The second term of \eqref{eq: q3} can be expanded as
\begin{equation} \label{eq: q4}
\begin{aligned} 
\sum_{h \in B(i)} \frac{\partial p_h}{\partial x_{ij}} x_{ih}
&= - b_j x_{ij} - \sum_{h \in B(i)} (\frac{\partial \sum_{l \in C_P(h)} b_l X_l}{\partial x_{ij}}) x_{ih} \\
&= - b_j x_{ij} - \sum_{h \in B(i)} (\sum_{l \in C_P(h) \cap C(j)} b_l) x_{ih} \\
&= - b_j x_{ij} - \sum_{l \in C_T(i,j)} b_l X_l - \sum_{l \in C_P(i)} b_l X_i.
\end{aligned}
\end{equation}

The second equality holds because $X_l$ includes $x_{ij}$ only when $l$ is a child of $j$. The third equality holds by rearranging and summing the inflow value of the merging nodes. Plug \eqref{eq: q2} and \eqref{eq: q4} back into \eqref{eq: q3}, we get
\begin{equation} \label{eq: q9}
\begin{aligned} 
\frac{\partial \Pi_i}{\partial x_{ij}} 
&= a_t - 2 b_j x_{ij} - \sum_{l \in C_P(j)} b_l X_l - 
\sum_{l \in C_T(i,j)} b_l X_l - \sum_{l \in C_P(i)} b_l X_i - c_i X_i - p_i \\
&= a_t - 2 b_j x_{ij} - 
2 \sum_{l \in C_T(i,j)} b_l X_l - const_i
\end{aligned}
\end{equation}
where
\[const_i := (\sum_{l \in C_P(i)} b_l + c_i) X_i + \sum_{l \in C_P(i)} b_l X_l + p_i\]
is defined in terms of $X_i$, $X_l$ where $l \in C_P(i)$, and $p_i$. These values were determined by the upstream buyers thus are regarded as constants to $i$. The second equality of \eqref{eq: q9} holds by Observation~\ref{obs:mc}: $C_P(j) = C_P(i) \sqcup C_T(i,j)$.

We introduce a convex quadratic program (CQP) to solve \ref{eq: lcp}:
\begin{equation} \label{eq: cp} \tag{CQP}
\begin{aligned}
& \minimize_{x, X} & & \sum_{j \in B(i)} b_j x_{ij}^2 + \sum_{l \in C_S(i) \backslash C_P(i)} b_l X_l^2 \\
& \text{subject to}
& & a_t - 2 b_j x_{ij} - \sum_{l \in C_T(i,j)} 2 b_l X_l  \leqslant const_i & \forall j \in B(i), \\
& & & x_{ij} \geqslant 0 & \forall j \in B(i).
\end{aligned}
\end{equation}

% We should regard the variables $x_{ij}$, $X_j$, and $X_k$, where $j \in B(i)$ and $k \in C_P(j)$, as independent variables. 

By examining the KKT conditions of the quadratic program, each variable $X_l$ satisfies $X_l = \sum_{j \mid l \in C(j)} x_{ij}$ where $j \in B(i)$, which fits the definition of $X_l$. The feasibility of \ref{eq: lcp} also holds. The proof of Lemma~\ref{lem: 3.2} is provided in Appendix~\ref{app: lem: 3.2}.

\begin{restatable}{lemma}{lemlcp}\label{lem: 3.2}
	Problem~\ref{eq: lcp} is equivalent to the convex optimization problem~\ref{eq: cp}, and the solution is unique.
\end{restatable}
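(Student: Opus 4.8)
The plan is to prove the two claims in sequence, both through the Karush--Kuhn--Tucker (KKT) machinery, exploiting that \ref{eq: cp} is a genuinely convex program. First I would record that the objective $\sum_{j \in B(i)} b_j x_{ij}^2 + \sum_{l \in C_S(i)\setminus C_P(i)} b_l X_l^2$ is a positive-definite quadratic form in the joint variables $(x_{ij}, X_l)$ (all the $b$'s are positive by Theorem~\ref{lem:pq-relation}) and that every constraint is affine. Consequently the feasible region is convex, a constraint qualification holds automatically because the constraints are affine, and the KKT conditions are both necessary and sufficient for global optimality. This reduces the equivalence to matching the KKT system of \ref{eq: cp} against the three lines of \ref{eq: lcp}.

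Next I would introduce a multiplier $\lambda_j \geqslant 0$ for each inequality $a_t - 2b_j x_{ij} - 2\sum_{l \in C_T(i,j)} b_l X_l \leqslant const_i$ and a multiplier $\mu_j \geqslant 0$ for each $x_{ij}\geqslant 0$, and compute stationarity. Stationarity in $x_{ij}$ gives $2b_j x_{ij} = 2\lambda_j b_j + \mu_j$; combined with the complementary slackness $\mu_j x_{ij}=0$ and nonnegativity, this forces $x_{ij} = \lambda_j$ in every case (if $x_{ij}>0$ then $\mu_j=0$; if $x_{ij}=0$ then $\lambda_j=\mu_j=0$). Stationarity in $X_l$ gives $X_l = \sum_{j : l \in C_T(i,j)} \lambda_j$. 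The crucial observation, and the one I expect to be the main obstacle, is a set identity: for a fixed $l \in C_S(i)\setminus C_P(i)$, the definition $C_T(i,j) = C_S(i)\cap C(j)\setminus C_P(i)$ makes the condition $l\in C_T(i,j)$ equivalent to $l\in C(j)$, so $\{\,j\in B(i): l\in C_T(i,j)\,\}=\{\,j\in B(i): l\in C(j)\,\}$. Substituting $\lambda_j = x_{ij}$ then yields exactly $X_l = \sum_{j\mid l \in C(j)} x_{ij}$, the intended definition of $X_l$, so the auxiliary variables are consistent with the derivation of \eqref{eq: q9} and the quantity $\frac{\partial \Pi_i}{\partial x_{ij}}$ appearing in the constraints is the honest partial derivative.

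With $X_l$ identified, I would read off the equivalence directly. The two primal-feasibility lines of \ref{eq: cp} are, by \eqref{eq: q9}, precisely $\frac{\partial \Pi_i}{\partial x_{ij}} \leqslant 0$ and $x_{ij}\geqslant 0$, i.e.\ the last two lines of \ref{eq: lcp}. For the complementarity line, the KKT complementary slackness on the main constraint reads $\lambda_j\,\frac{\partial \Pi_i}{\partial x_{ij}} = 0$; since $\lambda_j = x_{ij}$, this is $x_{ij}\,\frac{\partial \Pi_i}{\partial x_{ij}} = 0$ for each $j$, and summing gives $\sum_{j\in B(i)}\frac{\partial \Pi_i}{\partial x_{ij}} x_{ij}=0$. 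The converse term-by-term statement also holds because each summand is a product of a nonpositive and a nonnegative quantity, so the sum vanishes iff each term does. This shows any optimum of \ref{eq: cp} solves \ref{eq: lcp}. For the reverse inclusion I would take any solution of \ref{eq: lcp}, set $X_l = \sum_{j\mid l\in C(j)} x_{ij}$, $\lambda_j = x_{ij}$, and $\mu_j = 0$, and verify that all KKT conditions (stationarity in $x$ and $X$, primal and dual feasibility, and both complementary slackness relations) hold; by sufficiency this point is a global optimum of \ref{eq: cp}. Hence the solution sets coincide on the $x$-coordinates.

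Finally, for uniqueness I would argue existence and then strict convexity. The objective is coercive (a positively weighted sum of squares of all variables) and continuous, and the feasible set is nonempty (take each $x_{ij}$ large and $X_l=\sum_{j\mid l\in C(j)}x_{ij}$) and closed, so a minimizer exists. Because the Hessian $\mathrm{diag}(2b_j,2b_l)$ is positive definite, the objective is strictly convex, and a strictly convex function on a convex set has at most one minimizer; therefore the optimum $(x,X)$ of \ref{eq: cp} is unique, and with it the solution of \ref{eq: lcp}. Beyond the $C_T$--$C(j)$ identity above, the only delicate bookkeeping is confirming that the $X_l$-stationarity really reproduces the defining sum and that no $l \in C_P(i)$ enters the objective or the constraints; both are controlled by Observation~\ref{obs:mc}(3), whose disjoint decomposition $C_P(j)=C_P(i)\sqcup C_T(i,j)$ underlies \eqref{eq: q9}.
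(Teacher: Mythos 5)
Your proposal is correct and follows essentially the same route as the paper: write down the KKT system of \ref{eq: cp}, observe that stationarity forces $x_{ij}=\lambda_j$ and recovers $X_l=\sum_{j\mid l\in C(j)}x_{ij}$ so that complementary slackness becomes exactly the complementarity line of \ref{eq: lcp}, and conclude uniqueness from strict convexity. Your treatment is slightly more complete than the paper's (explicit multipliers $\mu_j$ for the nonnegativity constraints, the coercivity/existence argument, and the term-by-term versus summed complementarity equivalence), but these are refinements of the same argument rather than a different approach.
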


% Here we consider $k \in C_M(s)$ instead of $k \in C_P(j)$ because if we choose $k$ from $C_P(j)$, it is likely that $X_k$ will be selected multiple times considering that $k$ can be a parent-merging child node of different $j$'s. From proposition~\ref{prop: 3.7}, we have $C_P(j) = C_M(s) \cap C(j)$. By selecting nodes in $C_M(s)$, we can circumvent the problem of picking node $k$ multiple times.

\iffalse
In fact, every edge is active according to lemma~\ref{lem: 3.1}, it suffices to solve a linear system which is a special case of \ref{eq: lcp}. Nevertheless, if we consider the extension case of MSPG in section~\ref{sec: msss} while the inactive edges may appear, then the convex quadratic programming in equation~\ref{eq: cp} is necessary to solve the sub-flows.
\fi
%Therefore, in algorithm~\ref{alg: 2}, we just stick to the equation~\ref{eq: cp} which is capable to handle the general problems.

After the equilibrium price and quantity relation function is computed by Algorithm~\ref{alg: 1}, by solving \ref{eq: cp} directly, we have the optimal decision of each firm in polynomial time. By Theorem~\ref{lem:pq-relation}, $\frac{\partial \Pi_i}{\partial x_{ij}}=0$ since all the arcs are active. This is equivalent to solving a linear system. The algorithm can be sped up further by distributing the flow from $i$ to $j \in B(i)$ proportionally to the convex coefficients pre-computed in Algorithm~\ref{alg: 1}. Besides, each $p_j$ has the same value so that $i$ has no preference over whom to sell to. We refer the proof of Lemma~\ref{lem: 3.3} to Appendix~\ref{para:fccp}\footnote{Lemma~\ref{lem: 3.3} is used to find the price and the convex coefficients for the SM case in the proof of Theorem~\ref{lem:pq-relation}.} (the SM case).

\begin{lemma} \label{lem: 3.3}
	For the SM case, $\Pi_i$ is maximized by distributing the flow to $j \in B(i)$ proportionally to the convex coefficients pre-computed in Algorithm~\ref{alg: 1}. Besides, each $p_j$ has the same value.
\end{lemma}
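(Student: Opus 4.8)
The plan is to obtain both conclusions directly from the first-order stationarity conditions of the seller's problem, exploiting the fact that in the (simple) SM case every buyer of $i$ faces the \emph{same} downstream merging structure. Since $G$ is shortcut-free, Theorem~\ref{lem:pq-relation} guarantees that every arc $ij$ with $j \in B(i)$ is active, so the relevant solution of \ref{eq: lcp} is interior and satisfies $\frac{\partial \Pi_i}{\partial x_{ij}} = 0$ for all $j \in B(i)$; because $\Pi_i$ in \eqref{eq:fp3} is concave, this stationary point is the global maximizer. I would therefore start from the closed form of the derivative in \eqref{eq: q9}, namely $\frac{\partial \Pi_i}{\partial x_{ij}} = a_t - 2 b_j x_{ij} - 2 \sum_{l \in C_T(i,j)} b_l X_l - const_i$, and show that every term other than $b_j x_{ij}$ is in fact independent of the particular buyer $j$.

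The key observation is that in the simple SM case ($|C_S(i)| = 1$) there is a single internal merging node $m$ at which all the parallel buyer-paths from $i$ reconverge, and $m$ is a descendant of every buyer; hence $C_T(i,j) = C_S(i)\cap C(j)\setminus C_P(i)$ equals the same set (namely $\{m\}$) for all $j \in B(i)$. Since $i$ and $m$ are the source and sink of a parallel block, flow conservation (Observation~\ref{obs: fc}) forces $X_m = X_i$, and $X_i$ is fixed by the upstream decision, hence constant from $i$'s viewpoint. Consequently the merging term $2\sum_{l \in C_T(i,j)} b_l X_l = 2 b_m X_i$ and the quantity $const_i$ are common to all buyers, so stationarity reduces to $b_j x_{ij} = \tfrac12\bigl(a_t - 2 b_m X_i - const_i\bigr) =: \tfrac12 K$ with $K$ independent of $j$. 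This yields $x_{ij} = K/(2 b_j) \propto 1/b_j$; imposing $\sum_{j} x_{ij} = X_i$ pins down $K$ and gives exactly $x_{ij} = \alpha_j X_i$ with $\alpha_j = (1/b_j)/\sum_{j'}(1/b_{j'})$, the convex coefficient computed in Algorithm~\ref{alg: 1}. For the price claim I would substitute into \eqref{eq: q2}: using $C_P(j) = C_P(i) \sqcup C_T(i,j) = C_P(i)\sqcup\{m\}$ from Observation~\ref{obs:mc}, the expression $p_j = a_t - b_j x_{ij} - \sum_{l \in C_P(i)} b_l X_l - b_m X_m$ is the $j$-independent constant $b_j x_{ij} = K/2$ plus terms fixed by the upstream, so every $p_j$ is equal.

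The step I expect to be the main obstacle is justifying the structural claim that the merging contribution is identical across buyers, i.e. that the set $C_T(i,j)$ and the flows $X_l$ it carries do not depend on $j$. For the simple case this rests on $|C_S(i)| = 1$ together with the flow-conservation identity $X_m = X_i$ on the parallel block, which should be argued carefully from the SPG composition rules of Definition~\ref{def: 2.4} rather than assumed. For the general case $|C_S(i)| \geqslant 2$ there are several merging nodes and the sets $C_T(i,j)$ can genuinely differ across buyers, so the clean proportionality need no longer hold in closed form and is instead the solution of the system encoded in \ref{eq: cp}; there I would recover uniqueness and the equal-price property from the KKT conditions of that convex program (cf.\ Lemma~\ref{lem: 3.2}), checking that the complementary-slackness stationarity again equalizes the marginal selling value across all active arcs.
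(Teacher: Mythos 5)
Your treatment of the simple SM case ($|C_S(i)|=1$) is correct and is essentially the paper's argument: the merging contribution $2\sum_{l\in C_T(i,j)}b_lX_l$ is common to all buyers because $C_T(i,j)=C_S(i)\setminus C_P(i)$ does not depend on $j$ and $X_l=X_i$ for the single internal merging node by flow conservation on the parallel block, so stationarity forces $b_jx_{ij}$ to be equalized, which yields $\alpha_j\propto 1/b_j$ and, via $C_P(j)=C_P(i)\sqcup C_T(i,j)$, equal prices $p_j$. (One small imprecision: $C_T(i,j)$ is $C_S(i)\setminus C_P(i)$, which may be empty if the merging node already lies in $C_P(i)$; this does not affect the argument since the set is still $j$-independent.)

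The genuine gap is the general SM case $|C_S(i)|\geqslant 2$, which the lemma also covers: Algorithm~\ref{alg: 1} precomputes coefficients there too (via the recursive aggregate variables $c_k(i)$ and the weights $\beta_p(i)$, $\beta_j$ of \eqref{app: eq: agg}--\eqref{app: eq: vercoeff}), and the lemma asserts the maximizer is proportional to \emph{those} coefficients with all $p_j$ equal. Deferring to the KKT conditions of \ref{eq: cp} (Lemma~\ref{lem: 3.2}) only gives you existence and uniqueness of the optimal distribution; it does not by itself show that the optimum coincides with the coefficients \ref{app: eq: prodcoeff} produced by the algorithm, nor that the resulting $p_j$ are equal. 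The paper closes this by an explicit traversal of the merging nodes $h_1,\dots,h_n$ in topological order, verifying that under $x_{ij}=\alpha_jC$ the quantity $b_jx_{ij}+\sum_{l\in C_T(i,j)}b_lX_l$ collapses to the common value $c_n(i)X_i$ for every $j$ (equation \eqref{cnx}), which simultaneously solves \ref{ls-sm} and equalizes the prices via \eqref{sm-pj}. Your proposal names this as the obstacle but does not supply the construction, so the general case is not proved. A secondary issue: you invoke Theorem~\ref{lem:pq-relation} to conclude all arcs out of $i$ are active and hence that the stationary point is interior, but this lemma is itself used inside the proof of that theorem (see the footnote attached to it), so the activeness must be derived, as the paper does, from the positivity of the computed coefficients $\alpha_j$ rather than assumed from the theorem.
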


%{\yslcomment Looks like you can also get the equilibrium for SPG by just applying the convex combination for the SM case. See Example~\ref{ex: flow_compute}.}

By Lemma~\ref{lem: 3.3}, we introduce Algorithm~\ref{alg: 2} that takes linear time to find the equilibrium price and quantity. The algorithm starts from the source $s$ and distributes the flow to the downstream firms in a topological order. We show Example~\ref{app: ex: flow_compute_alg2} in the appendix for an equilibrium flow computation by Algorithm~\ref{alg: 2}.

\begin{algorithm}[H] 
\caption{: SPG Flow and Price Computation (Forward)}
\label{alg: 2}
\renewcommand{\algorithmicrequire}{\textbf{Input:}}
\renewcommand{\algorithmicensure}{\textbf{Output:}}

\algorithmicrequire{ A shortcut-free SPG $G = (V \cup \{s,t\}, E)$, the market price function $p_t = a_t - b_t X_t$, and the source production cost $p_s = a_s$.}

\algorithmicensure{ The equilibrium flow assignment $x_{ij}$ for $ij \in E$ and price $p_i$ for $i \in V$.}
\begin{algorithmic}[1] 
\State Get $X_s$, the price function $p_i$ for each $i \in V$, and the convex coefficients in the SM case by running Algorithm~\ref{alg: 1}.

\State Assign quantity and price according to a topological order as the following:

\begin{itemize}
	\item For the single buyer case, the flow to the buyer is the sum of the upstream flow.

	\item For the single seller and multiple buyers case, assign the downstream flow proportionally to the convex coefficients.

	\item Set the price accordingly to the quantity by the price function for each case.
\end{itemize}
\If {the buyer is sink $t$}
	\State \Return
\EndIf

\end{algorithmic}
\end{algorithm}

\subsection{General Series Parallel Graphs} Suppose the given SPG $G=(V,E)$ has a shortcut $ij \in E$ that dominates a path $l_{ij} = (i, v_1, ..., v_k, j)$. When the price $p_j$ is a decreasing function of $X_j$, $i$ always prefers selling to $j$ directly than through the intermediary firms along the path $l_{ij}$ in order to obtain better utility. We refer the proof details to Appendix~\ref{app:lem:shortcut} and show Example~\ref{app:ex:shortcut} in the appendix that illustrates an equilibrium with inactive trades in an SPG with a shortcut.

\begin{restatable}{lemma}{lemshortcut}\label{lem:shortcut}
Given an SPG, at an equilibrium, if $ij \in E$ is a shortcut of a path $l_{ij}$ and the price $p_j$ is a decreasing function of $X_j$, then there is no trade on $l_{ij}$, i.e. all the arcs on the path $l_{ij}$ are inactive.
\end{restatable}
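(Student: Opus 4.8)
The plan is to argue by contradiction through a profitable deviation for firm $i$. Suppose at an equilibrium the path $l_{ij}=(i,v_1,\dots,v_k,j)$ carries positive flow; I will show that $i$ can strictly increase its profit by rerouting that flow onto the direct arc $ij$. The argument rests on two ingredients. First, the per-unit revenue $i$ earns by feeding the path (the price $p_{v_1}$ at which it sells to $v_1$) is \emph{strictly} smaller than the per-unit revenue $p_j$ it would earn selling directly to $j$, because the string of intermediaries $v_1,\dots,v_k$ extracts a positive cumulative margin (the double/triple marginalization effect). Second, moving flow from the path to the shortcut leaves the total inflow $X_j$, hence the price $p_j$ and the entire subgame downstream of $j$, unchanged. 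Together these make the reroute strictly improve $\Pi_i$, contradicting equilibrium.

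First I would establish the strict gap $p_{v_1}<p_j$. Since every firm resells all it buys, positive flow on the first arc propagates along $l_{ij}$ and reaches $j$, so each arc of the path is active. Applying Observation~\ref{obs:inc-p} at each intermediary $v_m$, whose buying price is $p_{v_m}$ and whose selling price to $v_{m+1}$ is $p_{v_{m+1}}$, gives the chain $p_{v_1}\le p_{v_2}\le\dots\le p_{v_k}\le p_j$. Strictness comes from the last intermediary $v_k$, which sells a positive quantity $q_k$ to $j$: because $p_j$ is a strictly decreasing function of $X_j$ by hypothesis, the first-order condition for $v_k$'s optimal supply, $p_{v_k}=p_j+q_k\,\frac{\partial p_j}{\partial X_j}$, forces $p_{v_k}<p_j$. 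Combining, $p_{v_1}\le p_{v_k}<p_j$.

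Next I would carry out the reroute. Let $f=x_{iv_1}>0$. By Observation~\ref{obs: fc} applied to the parallel branch between $i$ and $j$ that contains $l_{ij}$, every unit $i$ injects into this branch leaves it at $j$, so the branch contributes exactly $f$ to $X_j$ on top of $i$'s direct contribution $x_{ij}$. Now let $i$ deviate by sending $0$ into the branch along $iv_1$ and $x_{ij}+f$ directly to $j$, holding all its other trades fixed. Flow conservation makes the branch's output to $j$ fall by $f$ while the direct flow rises by $f$, so $X_j$ — and therefore $p_j$, $j$'s acceptance at price $p_j$, and the whole subgame below $j$ — is unchanged; moreover $X_i$ is unchanged, so $i$'s cost $p_iX_i$ is unchanged. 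The net change in $i$'s revenue is thus $f\,(p_j-p_{v_1})>0$, a contradiction. Hence $x_{iv_1}=0$; when the interior nodes of $l_{ij}$ are dedicated to the path this immediately makes every arc of $l_{ij}$ inactive.

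The \textbf{main obstacle} is the bookkeeping that upgrades $x_{iv_1}=0$ to the full statement inside a \emph{general} series parallel graph. The structural difficulty is that the intermediaries $v_1,\dots,v_k$ may be shared with other routes of the same parallel branch (indeed nested shortcuts can occur), so I cannot conclude the interior arcs are inactive arc-by-arc; instead I must reason about flow conservation of the branch as a whole and then invoke the series-parallel recursion to peel off the specific path $l_{ij}$. The economic difficulty is justifying the first-order condition behind the strict margin $p_{v_k}<p_j$: one must verify that under the subgame-perfect timing $v_k$ faces a strictly downward-sloping residual demand at $j$ and best-responds to it in the Cournot sense, which is precisely where the hypothesis that $p_j$ is strictly decreasing in $X_j$ is indispensable. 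The remaining steps are routine.
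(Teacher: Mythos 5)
Your proposal is correct and follows essentially the same route as the paper's proof in Appendix~\ref{app:lem:shortcut}: reduce to an all-active path, note via Observation~\ref{obs:inc-p} that prices are non-decreasing along it, and then let $i$ reroute the path flow onto the shortcut, observing that $X_j$ and hence $p_j$ are unchanged so that $i$ gains $x_{iv_1}(p_j - p_{iv_1})$. The only difference is where the strict price gap comes from: you extract $p_{v_k} < p_j$ upfront from $v_k$'s first-order condition (marginalization), whereas the paper first uses the same reroute to force $p_{iv_1} = \dots = p_j$ and then derives the contradiction from $v_k$ earning zero profit yet being able to profitably withhold supply --- two equivalent manifestations of the same effect.
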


In the price computation for shortcut-free SPGs, we show by induction that $p_j$ is an affine decreasing function of $X_j$. By Lemma~\ref{lem:shortcut}, given a general SPG, we can remove the dominated paths and obtain a shortcut-free SPG. The equilibrium quantity and price can be found by Algorithm~\ref{alg: 1} and Algorithm~\ref{alg: 2} in linear time. The uniqueness of equilibrium can be proved by encoding this problem into \ref{eq: lcp} and its corresponding \ref{eq: cp} has a unique solution. We conclude by the following theorem.

\begin{theorem} \label{thm: 3.1}
	There exists a linear time algorithm to solve the equilibrium quantity and price for SPGs, and the equilibrium is unique.
\end{theorem}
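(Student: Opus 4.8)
The plan is to synthesize the machinery already developed: reduce an arbitrary SPG to a shortcut-free one, run the two algorithms on the reduction, and then argue correctness, the linear-time bound, and uniqueness separately. Concretely, I would first preprocess the input $G$ by deleting every dominated path, producing a shortcut-free graph $G'$, and then run Algorithm~\ref{alg: 1} (backward, from $t$) followed by Algorithm~\ref{alg: 2} (forward, from $s$) on $G'$.

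For the reduction step, I would work on the series--parallel decomposition tree of $G$, which is computable in linear time. A shortcut $ij$ is exactly a parallel composition one of whose branches is the single arc $ij$; the other branch is an SPG $Q$ from $i$ to $j$, and every path through $Q$ is dominated by $ij$. By Lemma~\ref{lem:shortcut}, provided $p_j$ is a decreasing function of $X_j$, every such path carries no flow at equilibrium, so all internal arcs of $Q$ are inactive and may be deleted while the arc $ij$ is retained. Since the nodes of $Q$ other than $i$ and $j$ are internal to that sub-SPG, deleting $Q$ preserves the SPG property, and sweeping the decomposition tree removes all shortcuts in linear time. The decreasing-price hypothesis needed to invoke Lemma~\ref{lem:shortcut} is itself supplied inductively: on the shortcut-free portion already processed, Theorem~\ref{lem:pq-relation} gives $p_j = a_t - b_j X_j - \sum_{l \in C_P(j)} b_l X_l$ with positive coefficients, hence $p_j$ is decreasing in $X_j$. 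I would therefore present this as one induction on the decomposition tree that simultaneously removes dominated paths and maintains the affine decreasing form of each price function.

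Given the shortcut-free $G'$, correctness and the running time follow from the earlier results. Algorithm~\ref{alg: 1} computes, in $O(|V|)$ aggregate time, each coefficient $b_i$ together with the convex coefficients, and Theorem~\ref{lem:pq-relation} certifies both that all arcs of $G'$ are active and that the stated price--quantity relation holds; Algorithm~\ref{alg: 2} then pushes the source flow $X_s = (a_t - a_s)/b_s$ forward in linear time. To see that the output is a subgame perfect equilibrium I would check the equilibrium conditions: flow conservation and $x^{in}_{ij} = x^{out}_{ij}$ hold by construction (Observation~\ref{obs: fc}); no firm wants to deviate because in the single-buyer case the firm passes on its entire inflow, while in the SM case Lemma~\ref{lem: 3.3} shows that the split proportional to the convex coefficients maximizes $\Pi_i$ and equalizes the buyer prices $p_j$; and the source quantity is pinned down by $p_s = a_s$. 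The price monotonicity of Observation~\ref{obs:inc-p} is inherited from the affine relations.

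For uniqueness I would argue along the decision order. At single-buyer nodes the decision is forced by flow conservation, and at each SM node the firm's best response solves Problem~\ref{eq: lcp}, which by Lemma~\ref{lem: 3.2} is equivalent to the strictly convex program~\ref{eq: cp} and therefore has a unique solution; feeding these unique decisions forward from $s$ pins down the flow and price of every active arc. The active set itself is not a matter of choice: Lemma~\ref{lem:shortcut} forces every dominated path to be inactive at \emph{any} equilibrium, while Theorem~\ref{lem:pq-relation} forces every arc of $G'$ to be active, so the active set coincides with the edge set of $G'$ in every equilibrium. The main obstacle, in my view, is not any single computation but the reduction itself: one must show rigorously that deleting dominated paths yields a genuine shortcut-free SPG whose unique equilibrium restricts to an equilibrium of the original game on the active arcs, and that the decreasing-price hypothesis of Lemma~\ref{lem:shortcut} is available exactly when each shortcut is processed. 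Once that bridge between the general and the shortcut-free case is secured, the remaining claims are direct consequences of Theorem~\ref{lem:pq-relation} and Lemmas~\ref{lem: 3.2}, \ref{lem: 3.3}, and~\ref{lem:shortcut}.
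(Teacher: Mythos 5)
Your proposal follows essentially the same route as the paper: remove dominated paths via Lemma~\ref{lem:shortcut} to reduce to the shortcut-free case, run Algorithm~\ref{alg: 1} and Algorithm~\ref{alg: 2} for the linear-time computation, and derive uniqueness from the strict convexity of~\ref{eq: cp} via Lemma~\ref{lem: 3.2}. Your treatment is in fact more careful than the paper's brief argument, particularly in making explicit that the decreasing-price hypothesis of Lemma~\ref{lem:shortcut} must be supplied inductively alongside the removal of shortcuts.
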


\iffalse
\subsubsection{Equilibrium under Seller Price Model}

We end this section by comparing the results under the market price model and the seller price model for an SPG. Given the same input, it turns out in the seller price model, for any firm in $V \cup \{s\}$, the best decision is to sell the same amount of goods to their downstream firms with the same quantity as in the market price model, with the market clearance price.

From the upstream firms' point of view, the best decision is to sell all the goods offered, so the equilibrium is a flow that satisfies conservation. To maximize utility, the price and quantity should be calculated by taking the derivative inductively. The best price and quantity calculation is exactly the same as Algorithm~\ref{alg: 1} and Algorithm~\ref{alg: 2}.

\begin{theorem} \label{thm: same}
For SPG under the seller price model, the equilibrium is exactly the same as under the market price model.
\end{theorem}
\fi

\section{Structural Analysis} \label{sec: eqp}
%!TEX root = main.tex

We present some structural analyses, including the relation between firm location and utility, and the influence and invariants of different SPG component compositions on the equilibrium. We consider shortcut-free SPGs throughout this section.

\subsection{Firm Location and Individual Utility}
This section focuses on firm's utility at equilibrium. Specifically, how does the position of a firm in the network influence its utility at equilibrium? The following example is useful to address this question.

\begin{example}[Firm Utility in a Line Network]
\mbox{}\\
\begin{center}
\begin{tikzpicture}
	\node[vertex] (s) at (-2, 0) {$s$}; 
	\node at (-3, 0) {$p_s = 0$};
	\node[vertex] (a) at (0, 0) {$a$}; 
	\node[vertex] (t) at (2, 0) {$t$}; 
	\node at (3.5, 0) {$p_{t} = 1 - X_t$};
	\path[->]
		(s) edge node [above] {$x$} (a)
		(a) edge node [above] {$x$} (t);
\end{tikzpicture}
\end{center}

The price at firm $a$ is $p_a = 1 - 2x$ and at producer $s$ is $p_s = 1 - 4x$. Therefore, the utility at firm $a$ is $\Pi_a = (p_t - p_a)x = x^2$ and at producer $s$ is $\Pi_s = 2x^2 = 2\Pi_a$.
\end{example}

The example above shows an intuition of the location advantage, that the firm closer to the source may have higher utility than its downstream buyers. However, this is not always true, especially when there are strong competition among upstream buyers (i.e. the MS case). The upstream firm who controls all the flow of its downstream firm has a relatively better utility at equilibrium. Therefore, we introduce the following definition.
\begin{definition}[Dominating Parent]
	$i$ is a \emph{dominating} parent of $j$ if all the path from source $s$ to $j$ must go through $i$.
\end{definition}

Before analyzing the utility relation between a dominating parent and a dominated child, let us first focus on individual utility. By using the coefficient relation between the buyer $i$ and the seller $j \in B(i)$ as in equation~\ref{eq: SS}, \ref{eq: MS}, and \ref{eq: SM}, we show the closed-form expression of the utility in Lemma~\ref{thm: 4.4}. The proof is provided in Appendix~\ref{app: lem: 4.3}.
\begin{restatable}{lemma}{lemucf}\label{lem: 4.3}
\[\Pi_i = \frac{1}{2} (b_i + \sum_{l \in C_P(i)} b_l) X_i^2 \quad \forall i \in V \cup \{s\}.\]
\end{restatable}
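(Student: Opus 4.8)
The plan is to evaluate $\Pi_i$ directly at equilibrium, with no induction, by substituting the closed-form prices of Theorem~\ref{lem:pq-relation} into the utility \eqref{eq:fp2} and then splitting into the three direct parent–child configurations (SS, MS, SM). First I would rewrite the utility in a convenient form: by Theorem~\ref{lem:pq-relation} all arcs are active, every seller $k\in S(i)$ offers $i$ the common price $p_i$, and every buyer $j\in B(i)$ pays $p_{ij}=p_j$; combining this with flow conservation (Observation~\ref{obs: fc}), which gives $X_i=\sum_{k\in S(i)}x_{ki}=\sum_{j\in B(i)}x_{ij}$, equation \eqref{eq:fp2} collapses to
\[
\Pi_i=\sum_{j\in B(i)}p_j\,x_{ij}-p_i X_i .
\]
Writing $\beta_i:=b_i+\sum_{l\in C_P(i)}b_l$, the goal is to show $\Pi_i=\tfrac12\beta_i X_i^2$ in each case.

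For the two single-buyer cases (SS and MS), $|B(i)|=1$ with the unique buyer $j$, so $x_{ij}=X_i$ and $\Pi_i=(p_j-p_i)X_i$. I would then plug in the two price expressions from Theorem~\ref{lem:pq-relation} and use the merging-set identities of Observation~\ref{obs:mc}. In the SS case $C_P(j)=C_P(i)$ makes the $\sum_l b_l X_l$ terms cancel, leaving $p_j-p_i=(b_i-b_j)X_i$; the recurrence \eqref{eq: SS} gives $b_i-b_j=b_j+\sum_{l\in C_P(i)}b_l=\tfrac12\beta_i$. In the MS case $C_P(i)=C_P(j)\sqcup\{j\}$ telescopes the price difference to $p_j-p_i=b_i X_i$, and \eqref{eq: MS} yields $\sum_{l\in C_P(i)}b_l=b_i$, i.e. $\beta_i=2b_i$. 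Both cases therefore give $\Pi_i=\tfrac12\beta_i X_i^2$ by routine algebra.

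The SM case ($|B(i)|\geq 2$) is the real work and the main obstacle. Here I would invoke Lemma~\ref{lem: 3.3}: the equilibrium distributes flow proportionally to the convex coefficients, $x_{ij}=\tfrac{1/b_j}{\sum_{j'}1/b_{j'}}X_i$, and all selling prices coincide at a common value $p^\ast$. Setting $H:=\sum_{j'\in B(i)}1/b_{j'}$ this gives the clean ``equal marginal'' identity $b_j x_{ij}=X_i/H$, and reduces the utility again to $\Pi_i=(p^\ast-p_i)X_i$. To evaluate $p^\ast-p_i$ I would write $p^\ast=p_j$ through Theorem~\ref{lem:pq-relation}, use $C_P(j)=C_P(i)\sqcup C_T(i,j)$ (Observation~\ref{obs:mc}), and use the structural fact that every self-merging child of $i$ carries all of $i$'s flow, i.e. $X_l=X_i$ for $l\in C_S(i)\setminus C_P(i)$ (all of $i$'s parallel branches recombine there). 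This produces
\[
p^\ast-p_i=X_i\Bigl(b_i-\tfrac1H-\sum_{l\in C_S(i)\setminus C_P(i)}b_l\Bigr),
\]
and substituting the recurrence \eqref{eq: SM} for $b_i$ collapses the right-hand side to $X_i\bigl(\tfrac1H+\sum_{l\in C_S(i)\setminus C_P(i)}b_l+\sum_{l\in C_P(i)}b_l\bigr)=\tfrac12\beta_i X_i$; crucially, the harmonic-mean term $2/H$ in \eqref{eq: SM} is exactly what supplies the factor $\tfrac12$.

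The hard part will be the SM bookkeeping: justifying the flow-through identity $X_l=X_i$ for self-merging children from the series–parallel structure, and confirming that $\sum_{l\in C_T(i,j)}b_l X_l$ is independent of $j$ so that a single common price $p^\ast$ is consistent (this is automatic in the simple case $|C_S(i)|=1$, where $C_T(i,j)$ is the single merge node for every $j$). The fully general SM case with $|C_S(i)|\geq 2$ would be handled by the same argument using the nested-merge recurrence of the appendix in place of \eqref{eq: SM}, with the extra accounting over several merge nodes being the only additional complication; the SS and MS cases, by contrast, are immediate once the price formula and Observation~\ref{obs:mc} are in hand.
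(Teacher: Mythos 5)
Your proposal is correct and follows essentially the same route as the paper: a case split into SS, MS, and SM, substitution of the closed-form prices from Theorem~\ref{lem:pq-relation} together with the merging-set identities of Observation~\ref{obs:mc}, and the coefficient recurrences \eqref{eq: SS}, \eqref{eq: MS}, \eqref{eq: SM} to extract the factor $\tfrac12\bigl(b_i+\sum_{l\in C_P(i)}b_l\bigr)$. The one caveat is that your flow-through identity $X_l=X_i$ for $l\in C_S(i)\setminus C_P(i)$ is valid only in the simple SM case $|C_S(i)|=1$ (in the general case an interior merge node such as $h_1$ carries only part of $i$'s flow); for $|C_S(i)|\geqslant 2$ the paper instead uses the aggregate identity $b_j x_{ij}+\sum_{l\in C_T(i,j)}b_l X_l=c_n(i)X_i$ from the appendix, which is precisely the nested-merge recurrence you defer to.
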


From Lemma~\ref{lem: 4.3}, we show a closed-form expression of the price offered by the source, which is irrelevant to the structure of the supply chain network. Then, we show the \emph{double utility rule} of a dominating parent.

The utility of the source is
\[\Pi_s = \frac{1}{2}b_s X^2_s = \frac{1}{2}b_sX_s \frac{a_t-a_s}{b_s} = \frac{a_t-a_s}{2}X_s.\]
By Lemma~\ref{lem: 3.3}, $s$ offers its buyers the same price at equilibrium. Let $p = p_j$ for $j \in B(s)$, we have
\[\Pi_s = \frac{a_t-a_s}{2}X_s = (p-a_s)X_s \implies p=\frac{a_t-a_s}{2}.\]

\begin{proposition} \label{prop:same-p}
At equilibrium, the source offers the price $\frac{a_t-a_s}{2}$ to its buyers.
\end{proposition}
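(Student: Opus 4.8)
The plan is to compute the source's equilibrium utility in two different ways and match them. Because $C_P(s) = \emptyset$ by the first item of Observation~\ref{obs:mc}, specializing the closed-form utility of Lemma~\ref{lem: 4.3} to $i = s$ makes the sum over $C_P(s)$ vanish, leaving $\Pi_s = \frac{1}{2} b_s X_s^2$. I would then substitute the source flow $X_s = \frac{a_t - a_s}{b_s}$ produced by Algorithm~\ref{alg: 1}, rewriting one factor of $X_s$ and cancelling $b_s$, to obtain $\Pi_s = \frac{a_t - a_s}{2} X_s$.

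For the second expression I would return to the definition of the source payoff in \eqref{eq:sp2}. By Lemma~\ref{lem: 3.3} all buyers $j \in B(s)$ are offered a common price, which I denote $p$; combining this with $p_s = a_s$ and the flow-conservation identity $\sum_{j \in B(s)} x_{sj} = X_s$ rewrites the payoff as $\Pi_s = (p - a_s) X_s$. Equating the two forms of $\Pi_s$ then produces the single scalar relation $(p - a_s) X_s = \frac{a_t - a_s}{2} X_s$, which I would solve for $p$.

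I do not expect a genuine obstacle: the statement is essentially a corollary of Lemma~\ref{lem: 4.3}, Lemma~\ref{lem: 3.3}, and the source flow formula, so the argument is just the assembly of these three ingredients followed by one cancellation. The two points that need a word of justification are invoking $C_P(s) = \emptyset$ so that Lemma~\ref{lem: 4.3} simplifies, and confirming that $X_s > 0$ before dividing through by it; the latter holds since $a_t > a_s$ and $b_s > 0$ give $X_s = \frac{a_t - a_s}{b_s} > 0$, so the cancellation is legitimate and pins down $p$ uniquely.
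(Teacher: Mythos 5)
Your two computations of $\Pi_s$ are exactly the paper's own argument: Lemma~\ref{lem: 4.3} with $C_P(s)=\emptyset$ gives $\Pi_s=\frac12 b_sX_s^2=\frac{a_t-a_s}{2}X_s$ after substituting $X_s=\frac{a_t-a_s}{b_s}$, and Lemma~\ref{lem: 3.3} plus \eqref{eq:sp2} gives $\Pi_s=(p-a_s)X_s$; your care about $X_s>0$ before cancelling is fine. So the route is the same, and the setup is correct.

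The one place you stop short --- ``solve for $p$'' --- is exactly where you need to be careful, because the equation $(p-a_s)X_s=\frac{a_t-a_s}{2}X_s$ yields $p-a_s=\frac{a_t-a_s}{2}$, i.e.\ $p=\frac{a_t+a_s}{2}$, \emph{not} $\frac{a_t-a_s}{2}$. The quantity $\frac{a_t-a_s}{2}$ is the source's per-unit markup $p-a_s$, not the price itself. The paper's own last line commits this slip, and the proposition as stated inherits it: in Example~\ref{ex:eq-ln} ($a_s=1$, $a_t=9$) the equilibrium price is $p_{sv}=5=\frac{a_t+a_s}{2}$, and in Example~\ref{ex:eq-simple-spg} ($a_s=1$, $a_t=7$) it is $4=\frac{a_t+a_s}{2}$; in both cases $\frac{a_t-a_s}{2}$ gives the wrong number. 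So if you carry out your final step honestly you will prove $p=\frac{a_t+a_s}{2}$ and thereby contradict the statement as written; the statement needs to be corrected rather than your argument.
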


We prove the following propositions which show the location advantage of a dominating parent. We show that in the SS and the SM case, the seller benefits a lot from the competition among the buyer side, and the proof is provided in Appendix~\ref{app: cor: 4.1}.

\begin{restatable}{proposition}{proptwousssm}\label{cor: 4.1}
	In the SS and the SM case, the utility of the seller is at least twice the utility of the buyers' total utility.
\end{restatable}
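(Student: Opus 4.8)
The plan is to reduce the whole statement to a coefficient comparison via the closed-form utility of Lemma~\ref{lem: 4.3}. Write $A_i := b_i + \sum_{l \in C_P(i)} b_l$, so that Lemma~\ref{lem: 4.3} reads $\Pi_i = \tfrac12 A_i X_i^2$. The target bound $\Pi_i \ge 2 \sum_{j \in B(i)} \Pi_j$ is then equivalent to
\[
A_i X_i^2 \ge 2 \sum_{j \in B(i)} A_j X_j^2 ,
\]
so the entire argument becomes an inequality among the structural constants $b_\bullet$ and the equilibrium flows $X_\bullet$, with no further reference to prices.

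For the SS case there is a single buyer $j$ with $X_j = X_i$ by flow conservation (Observation~\ref{obs: fc}). Combining the coefficient rule \eqref{eq: SS} with Observation~\ref{obs:mc}(2), which gives $C_P(j) = C_P(i)$, I compute
\[
A_i = \Big(2 b_j + \sum_{l \in C_P(j)} b_l\Big) + \sum_{l \in C_P(i)} b_l = 2\Big(b_j + \sum_{l \in C_P(i)} b_l\Big) = 2 A_j ,
\]
whence $\Pi_i = \tfrac12 A_i X_i^2 = A_j X_j^2 = 2\Pi_j$; so the SS bound holds, in fact with equality.

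For the (simple) SM case I would invoke the equilibrium flow split of Lemma~\ref{lem: 3.3}: setting $H := \big(\sum_{j \in B(i)} 1/b_j\big)^{-1}$, the flow is $X_j = (H/b_j)\,X_i$. By Observation~\ref{obs:mc}(3), $A_j = b_j + P + \sum_{l \in C_T(i,j)} b_l$ with $P := \sum_{l \in C_P(i)} b_l$, while \eqref{eq: SM} gives $A_i = 2\big(H + S + P\big)$ with $S := \sum_{l \in C_S(i)\setminus C_P(i)} b_l$. In the simple case $C_S(i)$ is a single merge node $w$, and shortcut-freeness forces $w$ to be a proper descendant of every buyer, so $\sum_{l \in C_T(i,j)} b_l = S$ is the same for all $j$. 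Substituting $X_j = (H/b_j)X_i$, the $H$-linear contributions on the two sides cancel and the common positive factor $P + S$ drops out, leaving the target inequality in the clean form
\[
H^2 \sum_{j \in B(i)} \frac{1}{b_j^2} \le 1 ,
\]
which is exactly the elementary estimate $\sum_j c_j^2 \le \big(\sum_j c_j\big)^2$ for positive reals $c_j = 1/b_j$ (equivalently Cauchy--Schwarz). It is strict once $|B(i)| \ge 2$, so in the SM case the seller strictly exceeds twice the buyers' total utility.

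I expect the main obstacle to be the general SM case $|C_S(i)| \ge 2$, where $b_i$ is given by the more intricate formula of Appendix~\ref{para:sm} and the sets $C_T(i,j)$ genuinely differ across buyers. There the merge-node terms $\sum_{l \in C_T(i,j)} b_l$ no longer factor out uniformly, so the reduction to a single sum-of-squares step requires more bookkeeping; I would handle it by grouping the buyers according to which nodes of $C_S(i)$ they feed into and applying the same Cauchy--Schwarz estimate block by block, checking that each merge-node contribution (which enters $A_i$ with weight $2$) dominates all of its weighted appearances on the right-hand side. Establishing that this block decomposition is consistent with the appendix coefficient recursion is the only genuinely delicate point.
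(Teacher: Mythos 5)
Your SS case and your simple SM case are both correct, and the reduction of the simple SM case to $H^2\sum_{j}b_j^{-2}\leqslant 1$ via Cauchy--Schwarz is a legitimate alternative to what the paper does. The problem is that the proposition also covers the general SM case $|C_S(i)|\geqslant 2$, and there your proposal stops at a plan rather than a proof: you acknowledge that the merge-node terms $\sum_{l\in C_T(i,j)}b_l$ no longer factor out uniformly and propose an unexecuted ``block by block'' bookkeeping against the coefficient recursion of Appendix~\ref{para:sm}. That is a genuine gap, not a routine verification. The convex coefficients $\alpha_j$ in \ref{general-sm} are nested products of the $\beta$'s, the sets $C_T(i,j)$ genuinely differ across buyers, and a single merge node $h_k$ receives flow from several of your blocks, so it is not evident that a per-block sum-of-squares estimate controls the cross terms; nothing in your write-up establishes it.

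The paper closes the entire SM case (simple and general at once) with one identity your argument never uses: at the equilibrium split, the linear system \ref{ls-sm} \emph{equalizes} the quantity $b_j x_{ij}+\sum_{l\in C_T(i,j)}b_l X_l$ over all $j\in B(i)$, and its common value is $c_n(i)X_i$ (equation \eqref{cnx}). Multiplying by $x_{ij}$, summing over $j$, and using $X_l\geqslant x_{ij}$ for $l\in C_T(i,j)$ gives $c_n(i)X_i^2\geqslant\sum_{j\in B(i)}\bigl(b_j+\sum_{l\in C_T(i,j)}b_l\bigr)x_{ij}^2$; combining this with $\Pi_i=\bigl(c_n(i)+\sum_{l\in C_P(i)}b_l\bigr)X_i^2$, with $2\Pi_j=\bigl(b_j+\sum_{l\in C_T(i,j)}b_l+\sum_{l\in C_P(i)}b_l\bigr)x_{ij}^2$ from Lemma~\ref{lem: 4.3} and Observation~\ref{obs:mc}, and with $\sum_{j}x_{ij}^2\leqslant X_i^2$ yields $\Pi_i\geqslant 2\sum_{j\in B(i)}\Pi_j$ with no explicit coefficient formulas and no Cauchy--Schwarz. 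If you replace your explicit-$\alpha_j$ computation by this equalization property (which is just the statement that the Lagrangian derivatives in \ref{ls-sm} coincide at the optimum), your argument completes uniformly; as written, the general SM case remains unproven.
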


If a firm controls all the flow of another child firm in the supply chain, then its utility is at least twice as much as its child. We refer the proof to Appendix~\ref{app: thm: 4.4}.

\begin{restatable}{proposition}{proptwou}\label{thm: 4.4}
	If firm $i$ is a dominating parent of firm $j$, then firm $i$ has at least twice the utility of firm $j$.
\end{restatable}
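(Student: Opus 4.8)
The plan is to work entirely from the closed-form utility of Lemma~\ref{lem: 4.3}. Writing $\beta_i := b_i + \sum_{l \in C_P(i)} b_l$ so that $\Pi_i = \tfrac12 \beta_i X_i^2$, the goal reduces to the single inequality $\beta_i X_i^2 \ge 2\,\beta_j X_j^2$. Two facts are immediate from the hypothesis that $i$ is a dominating parent of $j$: every unit reaching $j$ must first pass through $i$, so flow conservation (Observation~\ref{obs: fc}) gives $X_j \le X_i$; and the domination relation is transitive. I would stress at the outset that one cannot prove $\beta_i \ge 2\beta_j$ and $X_i \ge X_j$ \emph{separately}: already in the direct SM case the coefficient factor in $\beta$ falls below $2$ and is recovered only through the strict drop in flow, so the two effects must be combined.

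First I would reduce to a single ``step''. Since the source $s$ is the unique entry point, the dominators of $j$ form a chain $s = d_0, d_1, \dots, d_p = j$ in which each $d_r$ is the immediate dominator of $d_{r+1}$ (the standard dominator-tree fact for single-source reachability in a DAG). The hypothesis places $i = d_q$ with $q < p$ on this chain. Because all utilities are nonnegative, it suffices to prove the one-step bound $\Pi_{d_r} \ge 2\,\Pi_{d_{r+1}}$ for consecutive dominators; chaining then yields $\Pi_i \ge 2^{\,p-q}\Pi_j \ge 2\Pi_j$. Thus the whole problem collapses to the claim that \emph{if $i$ is the immediate dominator of $j$, then $\Pi_i \ge 2\Pi_j$}.

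For the one-step bound I would split on whether $j$ is adjacent to $i$. If $ij \in E$, then $i$ being the unique seller of $j$ forces the arc into the SS or SM pattern, and Proposition~\ref{cor: 4.1} already delivers $\Pi_i \ge 2\Pi_j$. If $j$ is not adjacent, then immediacy rules out any intermediate cut vertex, so the $i$--$j$ paths are internally disjoint and $j$ is a self-merging but not parent-merging child of $i$, i.e. $j \in C_S(i)\setminus C_P(i)$ (the parent-merging alternative is excluded because every ancestor of $i$ reaches $j$ only through $i$, so no ancestor admits disjoint paths to $j$). Membership $j \in C_S(i)\setminus C_P(i)$ is precisely what injects a full $2b_j$ into the coefficient $b_i$ produced by Algorithm~\ref{alg: 1}, so that $\beta_i \ge 2b_j + 2\sum_{l \in C_P(i)} b_l$ plus a positive remainder coming from the other merging children and the harmonic term $2/\sum_{k\in B(i)} 1/b_k$. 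I would then track how $C_P$ evolves from $i$ down to the merge node $j$ using Observation~\ref{obs:mc}, and bound $\sum_{l \in C_P(j)} b_l$ against that dropped remainder together with the flow ratio $X_j/X_i$, yielding $\beta_i X_i^2 \ge 2\beta_j X_j^2$.

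The hard part is exactly this non-adjacent case. When the parallel region below $i$ re-merges with sibling flow further downstream, part of $i$'s output bypasses $j$, so $X_j$ is strictly smaller than $X_i$ while $C_P(j)$ simultaneously acquires the downstream merge nodes; the $2b_j$ term alone is then insufficient, and one must show that the extra coefficient mass accumulated in $\beta_i$ (from $C_S(i)\setminus C_P(i)$ and the harmonic term) dominates $2\sum_{l\in C_P(j)} b_l$ after the $X_j^2/X_i^2$ attenuation is applied. Carrying this bookkeeping through the general SM recursion (including the $|C_S(i)| \ge 2$ case) is the only genuinely technical step; everything else is transitivity along the dominator chain and an appeal to Proposition~\ref{cor: 4.1}.
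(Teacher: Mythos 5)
Your skeleton coincides with the paper's: reduce to the closest (immediate) dominating parent, dispatch the adjacent case with Proposition~\ref{cor: 4.1}, and handle the non-adjacent case by comparing the closed-form coefficients from Lemma~\ref{lem: 4.3} together with $X_j \leqslant X_i$. The gap is that the non-adjacent case --- which you yourself call the only genuinely technical step --- is announced but not carried out, and the one quantitative claim you do make there is unjustified as stated. You assert that $j \in C_S(i)\setminus C_P(i)$ ``injects a full $2b_j$'' into $b_i$, so that $\beta_i \geqslant 2b_j + 2\sum_{l\in C_P(i)} b_l$ plus a positive remainder. That reading of Algorithm~\ref{alg: 1} is valid only in the simple SM case ($|C_S(i)|=1$), where $2\sum_{l\in C_S(i)\setminus C_P(i)} b_l$ enters additively. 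In general $b_i = 2c_n(i) + \sum_{l\in C_P(i)} b_l$ by \ref{general-sm}, and when $j=h_k$ with $k<n$ the term $b_j$ sits inside the nested recursion \eqref{app: eq: agg}: it is added at level $k$ and then passed through the harmonic combinations $1/(\sum 1/b_{j'} + \sum 1/c_l(i))$ at levels $k+1,\dots,n$, each of which can only shrink its argument before the next $b_{h_{k+1}}$ is added. So $2b_j$ does not survive ``in full''; your claim is exactly equivalent to the inequality $c_n(i) \geqslant b_j$, and establishing that by induction through \eqref{app: eq: agg}--\eqref{app: eq: agg-n} is precisely the missing content (it is also the step the paper isolates).

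You also miss the structural fact that makes the remaining bookkeeping trivial rather than delicate: because $i$ is the \emph{closest} dominating parent of $j$, $C_P(j) \subseteq C_P(i)$, so the two $C_P$-sums compare term by term and there is nothing to trade against the flow ratio. With $c_n(i)\geqslant b_j$ and this inclusion, the argument finishes in one line:
\[
\Pi_i = \tfrac12\bigl(2c_n(i) + 2\textstyle\sum_{l\in C_P(i)} b_l\bigr)X_i^2 \;\geqslant\; \bigl(b_j + \textstyle\sum_{l\in C_P(j)} b_l\bigr)X_i^2 \;\geqslant\; \bigl(b_j + \textstyle\sum_{l\in C_P(j)} b_l\bigr)X_j^2 \;=\; 2\Pi_j ,
\]
with no attenuation argument needed. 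Your proposed route --- bounding $\sum_{l\in C_P(j)} b_l$ against a ``dropped remainder'' modulated by $X_j^2/X_i^2$ --- is both harder than necessary and unsubstantiated; until $c_n(i)\geqslant b_j$ (or an equivalent) is actually proved, the non-adjacent case remains open.
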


To sum up, a dominating parent always has better utility and the double utility rule always holds, which demonstrates the great value of controlling the upstream trades.

% \begin{proposition}
% 	After series insertion at firm $i$, for all the firm with lower order than $i$, the payoff decreases. 
% \end{proposition}
% The proof idea is for every child of firm $i$, the price function doesn't change after series insertion. However, the flow decreases as in \ref{thm: 4.2}. Therefore, the payoff decreases.

\subsection{Network Efficiency and Component Composition}

To measure how firms would benefit from the network, we may care about not only the flow value but also the {\em social welfare}. The social welfare is the total utility of the source and intermediary firms plus the consumer surplus:

\begin{equation} \label{eq: sw1}
\begin{aligned}
SW(G) &= \sum_{i \in V \cup \{s\}}{\Pi_i} + \frac{1}{2}b_t X_t^2\\
&= \frac{1}{2}\sum_{i \in V \cup \{s\}}{(b_i + \sum_{k \in C_P(i)} b_k) X_i^2} + \frac{1}{2}b_t X_t^2.
\end{aligned}
\end{equation}

The social welfare can also be interpreted as the product of the flow and the price difference between the sink and the source (the producer surplus), plus the consumer surplus:

\begin{equation} \label{eq: sw2}
\begin{aligned}
SW(G) &= (a_t - a_s - b_t X_s)X_s + \frac{1}{2}b_t X_t^2 \\
&= (a_t - a_s - \frac{b_t}{2} X_s)X_s.
\end{aligned}
\end{equation}

The criteria of interest are the welfare efficiency and flow efficiency defined as follows.

\begin{definition}[Welfare Efficiency]
	 A supply chain network is more welfare efficient if it provides a larger social welfare at equilibrium.
\end{definition}

\begin{definition}[Flow Efficiency]
	 A supply chain network is more flow efficient if it provides a larger flow at equilibrium.
\end{definition}

We examine the relation between flow efficiency, welfare efficiency, and the  structure of an SPG. Suppose the given SPG $G$ is constructed by series and parallel compositions on SPGs $G_1$, $G_2$, ..., and $G_n$, then $G_i$ where $i=1, ..., n$ are the \emph{components} of $G$. Since we assume that $G$ is shortcut-free, there are no shortcuts in the components of $G$ as well. The flow efficiency and welfare efficiency of a supply chain is highly related to its components, and we define the {\em component factor} as the following.
\begin{definition}[Component Factor]
	The \emph{component factor} of an SPG $Y$ is
\[\lambda(Y) := \frac{b_{s_Y}}{b_{t_Y}}.\]
\end{definition}

The component factor measures the enlargement of the decreasing linear coefficient at the equilibrium price of an SPG $Y$. $Y$ will have a higher flow value if its component factor $\lambda(Y)$ is small. The following theorem shows that the component factor is irrelevant to $b_{t_X}$. The proof is provided in Appendix~\ref{app: lem: 4.1}.

\begin{restatable}{lemma}{lemcp}\label{lem: 4.1}
$\lambda(Y) \geqslant 2$ is a constant that is only relevant to the graph structure of $Y$.
\end{restatable}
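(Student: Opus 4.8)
The plan is to prove both assertions — that $\lambda(Y)$ depends only on the structure of $Y$ (equivalently, that it is independent of the sink coefficient $b_{t_Y}$) and that $\lambda(Y) \geqslant 2$ — by induction on the series/parallel decomposition of $Y$ guaranteed by Definition~\ref{def: 2.4}, using the backward coefficient recurrences \ref{eq: SS}, \ref{eq: MS}, and \ref{eq: SM} computed by Algorithm~\ref{alg: 1}. The two halves of the statement are essentially orthogonal: the ``irrelevant to $b_{t_Y}$'' part is a homogeneity argument, while the bound $\geqslant 2$ is the genuine combinatorial content.

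For the structural-constant claim I would first observe that each recurrence is homogeneous of degree one in the coefficients: every term in \ref{eq: SS} and \ref{eq: MS} is linear in the $b_l$, and in the SM recurrence the harmonic term $2/\sum_{j \in B(i)} b_j^{-1}$ also scales linearly when all the $b_j$ are multiplied by a common factor. Since the backward pass of Algorithm~\ref{alg: 1} starts at $t_Y$ from the single value $b_{t_Y}$, it follows by downward induction that every coefficient $b_i$ for $i$ inside $Y$ equals $b_{t_Y}$ times a quantity that depends only on the portion of $Y$ below $i$; taking $i = s_Y$ gives $b_{s_Y} = \lambda(Y)\, b_{t_Y}$ with $\lambda(Y)$ independent of the value $b_{t_Y}$. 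The only subtlety is that the sets $C_P(\cdot)$ and $C_S(\cdot)$ appearing in the recurrences, restricted to nodes of $Y$, must be determined by $Y$ alone; I would justify this from the single-source/single-sink bottleneck structure, noting in particular that any parent-merging child of $s_Y$ coming from an external ancestor must lie outside $Y$ (its two disjoint paths would otherwise be forced through $s_Y$), so it does not perturb the within-$Y$ structural factor.

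For the lower bound I would induct on the outermost composition. The base case is a single arc, where $s_Y t_Y$ is an SS edge with $C_P(t_Y) = \emptyset$, so \ref{eq: SS} gives $b_{s_Y} = 2 b_{t_Y}$ and $\lambda(Y) = 2$. If $Y = S(Y_1, Y_2)$ with shared node $m = t_{Y_1} = s_{Y_2}$, the series bottleneck forces every path crossing from $Y_1$ to $Y_2$ through $m$, so no ancestor of $m$ has two disjoint paths to a common descendant across the cut; hence $C_P(m) = \emptyset$, the backward computations on $Y_1$ and $Y_2$ decouple, and $\lambda(Y) = \lambda(Y_1)\lambda(Y_2) \geqslant 4$ by the inductive hypothesis. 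If $Y = P(Y_1, Y_2)$, then $s_Y$ is an SM node and $t_Y$ is a self-merging child of $s_Y$ reached by the two disjoint branch-paths, so $t_Y \in C_S(s_Y) \setminus C_P(s_Y)$ (using that external parent-merging children of $s_Y$ lie outside $Y$, as above, together with Observation~\ref{obs:mc}); since the SM recurrence \ref{eq: SM} contains the term $2\sum_{l \in C_S(s_Y)\setminus C_P(s_Y)} b_l$ and all remaining terms are nonnegative, we obtain $b_{s_Y} \geqslant 2 b_{t_Y}$, i.e.\ $\lambda(Y) \geqslant 2$.

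The hard part will be the parallel step when $s_Y$ has $|C_S(s_Y)| \geqslant 2$, which falls outside the Simple SM recurrence \ref{eq: SM} and must invoke the general SM computation of Appendix~\ref{para:sm}; there I must confirm that the general formula still exposes the full $2 b_{t_Y}$ contribution of $t_Y$ and keeps every other term nonnegative, so that the bound survives. Supporting both the series decoupling and the parallel bound are the locality facts — that a component's internal self-merging children stay within the component and that parent-merging children never straddle a series cut — which I expect to follow cleanly from the bottleneck property of series-parallel graphs and Observation~\ref{obs:mc}, but which require careful bookkeeping to state and apply correctly. This verification, rather than the homogeneity or the arithmetic of the recurrences, is where the real work of the proof lies.
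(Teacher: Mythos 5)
Your proposal is correct, and it reaches the result through the same two ingredients as the paper --- degree-one homogeneity of the backward recurrences for the ``structural constant'' half, and the $2b_{h_n}$ term exposed by the SM formula for the factor of $2$ --- but it organizes the induction differently, and the difference matters. The paper inducts arc by arc in reverse topological order and asserts $b_i \geqslant 2 b_j$ separately in the SS, MS, and SM cases; the MS assertion is false as literally stated (in the network $s\to u\to t$, $s\to v\to t$ one gets $b_u = b_t + \sum_{l \in C_P(t)} b_l = b_t$ since $C_P(t)=\emptyset$), and the conclusion $b_s \geqslant 2b_t$ is only rescued because the SM step at the component source jumps directly to the last merging node via $b_s \geqslant 2c_n(s) \geqslant 2b_{h_n}$. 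Your induction on the outermost series/parallel composition never needs a per-arc doubling across an MS arc: the single-edge base case and the parallel case (where $t_Y = h_n \notin C_P(s_Y)$ forces the $2b_{t_Y}$ contribution in \ref{eq: SM}, or inside $c_n(s_Y)$ via \eqref{app: eq: agg-n} and \ref{general-sm} when $|C_S(s_Y)|\geqslant 2$) each give $\lambda \geqslant 2$ outright, and the series case uses multiplicativity $\lambda(S(Y_1,Y_2)) = \lambda(Y_1)\lambda(Y_2)$, which is legitimate in your ordering because the homogeneity/decoupling half is established first (this is exactly how the paper derives its Lemma~\ref{prop:SYZ} from Lemma~\ref{lem: 4.1}). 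The locality bookkeeping you defer --- that $C_P(m)=\emptyset$ at a series cut vertex $m$ and that a component's merging sets stay internal --- does follow from the cut-vertex property as you anticipate, and the general-SM verification you flag is immediate since $c_n(s_Y)$ equals a positive harmonic term plus $b_{h_n}$, hence $c_n(s_Y) \geqslant b_{h_n} = b_{t_Y}$. So your route is not merely valid; it is the cleaner of the two and quietly repairs a gap in the paper's own case analysis.
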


Now we can rewrite social welfare in terms of $\lambda(G)$
\begin{equation} \label{eq: sw3}
\begin{aligned}
SW(G) &= (a_t - a_s - \frac{b_t}{2} X_s)X_s \\
&= (a_t - a_s - \frac{b_t}{2} \frac{a_t - a_s}{\lambda(G)b_t})\frac{a_t - a_s}{\lambda(G)b_t} \\
&= (a_t - a_s)^2(1-\frac{1}{2\lambda(G)})\frac{1}{\lambda(G)b_t}.
\end{aligned}
\end{equation}
With fixed sink price function $a_t - b_t X_t$ and source production cost $a_s$, $SW(G)$ is a function of $\lambda(G)$. By Lemma \ref{lem: 4.1}, $\lambda(Y) \geqslant 2$, so $SW(G)$ is maximized when $\lambda(G)=2$, and it is a decreasing function of $\lambda(G)$ when $\lambda(G) \in [2,\infty)$. The flow $X_s$ is also a decreasing function of $\lambda(G)$. $G$ is the single edge network if and only if $\lambda(G)=2$. Therefore, the single edge network is the most flow and welfare efficient.
\begin{proposition}
The single edge network is the most flow and welfare efficient network.
\end{proposition}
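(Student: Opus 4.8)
The plan is to treat both flow and welfare as functions of the single scalar $\lambda(G)$, reducing the statement to a monotonicity claim together with a characterization of the minimizer of $\lambda$. First I would fix the data $a_s,a_t,b_t$ and record from \eqref{eq: sw3} that the welfare depends on $G$ only through $\lambda(G)$, namely
\[
SW(G) = (a_t-a_s)^2\Bigl(1-\tfrac{1}{2\lambda(G)}\Bigr)\tfrac{1}{\lambda(G)\,b_t},
\qquad
X_s = \frac{a_t-a_s}{\lambda(G)\,b_t},
\]
the latter because $b_s=\lambda(G)\,b_t$ and $X_s=(a_t-a_s)/b_s$. Thus maximizing flow and maximizing welfare are both controlled by $\lambda(G)$ alone.

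Next I would establish strict monotonicity on $[2,\infty)$. The flow $X_s$ is manifestly strictly decreasing in $\lambda$. For the welfare, set $\phi(\lambda)=\tfrac{1}{\lambda}-\tfrac{1}{2\lambda^2}$; then $\phi'(\lambda)=\tfrac{1-\lambda}{\lambda^3}<0$ for every $\lambda>1$, so $\phi$, and hence $SW(G)=(a_t-a_s)^2\phi(\lambda(G))/b_t$ (recall $a_t>a_s$ and $b_t>0$), is strictly decreasing on $[2,\infty)$. Consequently both efficiency measures are maximized precisely by making $\lambda(G)$ as small as possible.

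It then remains to locate the minimizer. By Lemma~\ref{lem: 4.1} we have $\lambda(G)\geqslant 2$ for every SPG, so it suffices to prove that $\lambda(G)=2$ holds if and only if $G$ is the single edge. The forward direction is a one-line computation: with $s$ selling directly to $t$, maximizing $(a_t-b_tx)x-a_sx$ gives $x=(a_t-a_s)/(2b_t)$, i.e. $b_s=2b_t$ and $\lambda=2$. For the converse I would induct on the series/parallel decomposition. Using that $\lambda$ is intrinsic to the component (Lemma~\ref{lem: 4.1}), a series composition multiplies the factors, $\lambda(S(X,Y))=\lambda(X)\lambda(Y)\geqslant 4$. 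For a parallel composition $P(X,Y)$ the source has $|B(s)|\geqslant 2$, so (no multi-seller/multi-buyer vertices occur in shortcut-free SPGs) every out-arc of $s$ falls in the SM case; moreover $t$ is reached from $s$ by two disjoint paths, so $t\in C_S(s)\setminus C_P(s)$. The SM recursion \eqref{eq: SM} (and its general-$C_S$ form in the appendix) then expresses $b_s$ as a strictly positive term plus $2\sum_{l\in C_S(s)\setminus C_P(s)}b_l\geqslant 2b_t$, whence $b_s>2b_t$ and $\lambda(P(X,Y))>2$. Since any SPG other than the single edge is built by at least one such composition, $\lambda(G)>2$ strictly off the single edge, which gives the ``only if'' direction.

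Combining the three steps, the single edge is the unique SPG attaining the minimum value $\lambda=2$, and by the strict monotonicity of the second step it is the unique maximizer of both flow and welfare, proving the proposition. The main obstacle is the converse in the third step, specifically the parallel case: the series case is clean once multiplicativity of $\lambda$ is available, but verifying that the extra strictly positive contribution to $b_s$ survives in every configuration of the general SM recursion is the delicate part; the essential mechanism is that $t\in C_S(s)\setminus C_P(s)$ always contributes the summand $2b_t$.
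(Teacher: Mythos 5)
Your proof is correct and takes essentially the same route as the paper: both reduce flow and welfare to monotone functions of the single scalar $\lambda(G)$ via \eqref{eq: sw3} and $X_s=(a_t-a_s)/(\lambda(G)b_t)$, and then invoke Lemma~\ref{lem: 4.1} together with the fact that $\lambda(G)=2$ exactly for the single edge. The one difference is that the paper merely asserts the ``$\lambda(G)=2$ iff single edge'' characterization, whereas you supply a proof of it (multiplicativity of $\lambda$ for series compositions, and the strictly positive harmonic term plus the $2b_t$ summand coming from $t\in C_S(s)\setminus C_P(s)$ in the SM recursion for shortcut-free parallel compositions), which is a valid and welcome filling-in of that gap.
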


If the network is fixed, we have the following. The proof is provided in Appendix~\ref{app: prop: 4.1}.

\begin{restatable}{proposition}{propinc}\label{prop: 4.1}
	With a fixed network structure, if the demand at the market increases or the cost at the source decreases, then the market is more flow and welfare efficient and the utility of each individual firm increases.
\end{restatable}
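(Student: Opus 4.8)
The plan is to read everything off the closed forms already established: the source flow $X_s = (a_t - a_s)/b_s = \frac{a_t - a_s}{\lambda(G) b_t}$ (Algorithm~\ref{alg: 1} together with $b_s = \lambda(G) b_t$), the social welfare expression \eqref{eq: sw3}, namely $SW(G) = (a_t - a_s)^2\bigl(1 - \tfrac{1}{2\lambda(G)}\bigr)\frac{1}{\lambda(G) b_t}$, and the individual utility $\Pi_i = \frac{1}{2}\bigl(b_i + \sum_{l \in C_P(i)} b_l\bigr) X_i^2$ from Lemma~\ref{lem: 4.3}. The governing idea is that, once the network is fixed, every quantity appearing in these formulas other than $a_t, a_s, b_t$ depends only on the graph structure and is therefore held constant as we vary the demand and the cost.

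First I would record that the coefficients $b_i$ for $i \in V \cup \{s\}$, the sets $C_P(i)$, and hence $\lambda(G) = b_s/b_t$ are determined purely by the structure of $G$ (Theorem~\ref{lem:pq-relation} and Lemma~\ref{lem: 4.1}), with $\lambda(G) \geqslant 2 > 0$. Since $a_t > a_s$ by assumption, $X_s = \frac{a_t - a_s}{\lambda(G) b_t} > 0$. Increasing $a_t$ or decreasing $a_s$ strictly increases $a_t - a_s$, and as $\lambda(G) b_t$ is a fixed positive constant this strictly increases $X_s$, giving flow efficiency. For welfare efficiency, \eqref{eq: sw3} exhibits $SW(G)$ as $(a_t - a_s)^2$ times the fixed positive constant $\bigl(1 - \tfrac{1}{2\lambda(G)}\bigr)\frac{1}{\lambda(G) b_t}$ (positivity uses $\lambda(G) \geqslant 2$), so $SW(G)$ is strictly increasing in $a_t - a_s$ as well.

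The only step needing a short argument is monotonicity of the individual utilities, because $\Pi_i$ is written through $X_i$ rather than $X_s$. I would show that each flow $X_i$ is a fixed positive multiple of $X_s$, say $X_i = \gamma_i X_s$ with $\gamma_i$ depending only on the structure. This follows by induction along the topological order of Algorithm~\ref{alg: 2}: the forward computation splits flow at each SM node proportionally to the convex coefficients (which by Algorithm~\ref{alg: 1} depend only on the $b_j$, hence only on the structure) and merges flow additively at MS nodes, so every $X_i$ is the image of $X_s$ under a composition of fixed linear maps. Consequently $\Pi_i = \frac{1}{2}\bigl(b_i + \sum_{l \in C_P(i)} b_l\bigr)\gamma_i^2 X_s^2 = C_i X_s^2$ for a structure-only constant $C_i > 0$. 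Since $X_s > 0$ and $X_s$ strictly increases, $X_s^2$ strictly increases, so every $\Pi_i$ (including $i = s$, where $\gamma_s = 1$) strictly increases.

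I expect the main and essentially only obstacle to be making the claim $X_i = \gamma_i X_s$ precise, that is, verifying that the proportional-split and additive-merge dynamics of Algorithm~\ref{alg: 2} are genuinely linear in $X_s$ with structure-determined coefficients. Once that linearity is in hand, all three conclusions are direct consequences of substituting into the already-derived closed forms and using $a_t > a_s$ together with $\lambda(G) \geqslant 2$.
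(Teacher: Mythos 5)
Your proposal is correct and follows essentially the same route as the paper: the paper likewise reads flow efficiency off $X_s = \frac{a_t-a_s}{b_s}$, invokes Lemma~\ref{lem: 3.3} (proportional split at SM nodes, additive merge otherwise) to conclude that every $X_i$ scales proportionally with $X_s$ while the structural coefficients stay fixed, and then applies Lemma~\ref{lem: 4.3} for the individual utilities. The only cosmetic difference is that you substitute into \eqref{eq: sw3} for the welfare claim where the paper argues from \eqref{eq: sw1}; both rest on the same proportionality fact and are equivalent.
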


Consider the order of series composition on two SPGs $Y$ and $Z$. By Lemma~\ref{lem: 4.1},
\[
\lambda(S(Y,Z)) = \lambda(Y)\lambda(Z) = \lambda(S(Z,Y))\\
\]
which implies that swapping the order of two components in a series composition does not change the component factor.

\begin{lemma} \label{prop:SYZ}
Given two SPGs $Y$ and $Z$, $\lambda(S(Y,Z)) = \lambda(S(Z,Y))$.
\end{lemma}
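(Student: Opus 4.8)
The plan is to reduce the claimed symmetry to the \emph{multiplicativity} of the component factor under series composition, namely $\lambda(S(Y,Z)) = \lambda(Y)\,\lambda(Z)$, after which the statement follows instantly from commutativity of multiplication, since $\lambda(Y)\lambda(Z) = \lambda(Z)\lambda(Y) = \lambda(S(Z,Y))$. So the real work is to establish the product formula.

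First I would fix notation across the composition. In $S(Y,Z)$ (Definition~\ref{def: 2.4}) the source is $s_Y$, the sink is $t_Z$, and the two components are glued at the single cut vertex $t_Y = s_Z$. Thus by definition $\lambda(S(Y,Z)) = b_{s_Y}/b_{t_Z}$, and the goal is to relate $b_{s_Y}$ to $b_{t_Z}$ by passing through the coefficient $b_{t_Y} = b_{s_Z}$ at the shared vertex.

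Next I would run the backward coefficient recursion of Algorithm~\ref{alg: 1} in two stages. Starting from the sink $t_Z$, the pass over $Z$ produces $b_{s_Z} = \lambda(Z)\,b_{t_Z}$ directly from the definition of the component factor. Because $t_Y = s_Z$, this value is exactly the boundary coefficient $b_{t_Y}$ that feeds the pass over $Y$; running the recursion over $Y$ then yields $b_{s_Y} = \lambda(Y)\,b_{t_Y} = \lambda(Y)\lambda(Z)\,b_{t_Z}$. Dividing by $b_{t_Z}$ gives $\lambda(S(Y,Z)) = \lambda(Y)\lambda(Z)$, as desired.

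The one point that needs care --- and the main obstacle --- is justifying that the two passes really decouple, i.e.\ that the coefficients computed inside $Y$ depend only on the single boundary value $b_{t_Y}$ and not on the interior of $Z$. This rests on two facts. First, every trajectory from $Y$ into $Z$ funnels through the unique cut vertex $t_Y = s_Z$, so for any node $i$ of $Y$ all of its relevant merging sets (the set $C_P(i)$ and the self-merging sets entering the (SS), (MS), and (SM) recurrences) stay inside $Y$: a parent $k$ of a $Y$-node lies in $Y$, and two disjoint paths from $k$ cannot both reach a strictly downstream node of $Z$ without sharing the cut vertex. Second, each recurrence in Algorithm~\ref{alg: 1} is homogeneous of degree one in the $b$-values, so the induced map $b_{t_Y} \mapsto b_{s_Y}$ is linear with ratio exactly $\lambda(Y)$ --- this scale-invariance is precisely what Lemma~\ref{lem: 4.1} asserts. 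Invoking Lemma~\ref{lem: 4.1} separately for each component therefore discharges this obstacle, and the product formula combined with commutativity completes the proof.
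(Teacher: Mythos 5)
Your proof is correct and follows essentially the same route as the paper, which likewise derives the identity $\lambda(S(Y,Z))=\lambda(Y)\lambda(Z)=\lambda(S(Z,Y))$ from the scale-invariance in Lemma~\ref{lem: 4.1}. Your additional justification of the decoupling at the cut vertex $t_Y=s_Z$ (merging sets of $Y$-nodes staying inside $Y$, plus degree-one homogeneity of the recurrences) is a correct filling-in of the multiplicativity step that the paper asserts without detail.
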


We also present a closed-form expression of the component factor $\lambda(P(Y,Z))$ after the parallel composition in terms of $\lambda(Y)$ and $\lambda(Z)$, assuming that $P(Y,Z)$ is shortcut-free. We refer the proof of Lemma \ref{lem: 4.z} to Appendix~\ref{app: lem: 4.z}.

\begin{restatable}{lemma}{lempc}\label{lem: 4.z}
	Given SPGs $Y$ and $Z$ such that $P(Y,Z)$ does not have any shortcuts,
\[
\lambda(P(Y,Z)) = \frac{(\lambda(Y)-2)(\lambda(Z)-2)}{\lambda(Y)+\lambda(Z)-4} + 2.
\]
\end{restatable}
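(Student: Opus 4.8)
The plan is to reduce the computation of $\lambda(P(Y,Z)) = b_s/b_t$ to solving a two-variable linear system in the branch flows. Write $G = P(Y,Z)$ with common source $s$ and sink $t$, and let $X_Y$ and $X_Z$ denote the total equilibrium flow entering the $Y$-branch and the $Z$-branch, respectively; by flow conservation (Observation~\ref{obs: fc}) the sink receives $X_t = X_Y + X_Z = X_s$. By Lemma~\ref{lem: 3.3} the source charges a single common price $p$ on all of its outgoing arcs, whether they enter $Y$ or $Z$, so the whole computation will hinge on writing $p$ as an affine function of $X_Y$ and $X_Z$ in two different ways.

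The key step is a \emph{branch relation}. Since $P(Y,Z)$ is shortcut-free, flow leaves the $Y$-branch only through the shared sink $t$, so $s$ has two disjoint paths to $t$ and every node of the $Y$-branch has $t$ as a parent-merging child (Observation~\ref{obs:mc}); hence the price faced at the exit of the branch is $p_t = a_t - b_t(X_Y + X_Z)$. Freezing the other branch's flow $X_Z$ (it is outside the control of the $Y$-firms), this is an affine decreasing sink of slope $b_t$ and intercept $a_t - b_t X_Z$, so the sub-equilibrium restricted to $Y$ coincides with the standalone equilibrium of the SPG $Y$ against that effective sink, with $s$ playing its producer. Applying the source pricing identity $p = (\text{intercept}) - \tfrac12 b_{s_Y} X_Y$ (which follows from Lemma~\ref{lem: 4.3} exactly as in the proof of Proposition~\ref{prop:same-p}) and using $b_{s_Y} = \lambda(Y) b_t$ (Lemma~\ref{lem: 4.1}, with $C_P(s_Y) = \emptyset$), I get
\[
p = (a_t - b_t X_Z) - \tfrac12 \lambda(Y) b_t X_Y = a_t - b_t\Big(\tfrac{\lambda(Y)}{2} X_Y + X_Z\Big),
\]
and symmetrically $p = a_t - b_t\big(\tfrac{\lambda(Z)}{2} X_Z + X_Y\big)$ for the $Z$-branch.

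To close the argument I would apply the same pricing identity to $G$ itself, giving $p = a_t - \tfrac12 b_s X_s = a_t - \tfrac12 \lambda(P(Y,Z)) b_t (X_Y + X_Z)$. Setting $K := (a_t - p)/b_t$, the two branch relations become $\tfrac{\lambda(Y)}{2}X_Y + X_Z = K$ and $X_Y + \tfrac{\lambda(Z)}{2}X_Z = K$, whose solution gives $X_Y + X_Z = K\,\frac{\lambda(Y)/2 + \lambda(Z)/2 - 2}{\lambda(Y)\lambda(Z)/4 - 1}$. Substituting into $\lambda(P(Y,Z)) = 2K/(X_Y + X_Z)$ and simplifying yields $\lambda(P(Y,Z)) = \frac{\lambda(Y)\lambda(Z) - 4}{\lambda(Y) + \lambda(Z) - 4}$, which equals $\frac{(\lambda(Y) - 2)(\lambda(Z) - 2)}{\lambda(Y) + \lambda(Z) - 4} + 2$ after clearing the common denominator. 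The denominator is nonzero because each $\lambda \geqslant 2$ by Lemma~\ref{lem: 4.1}, and it vanishes only when both branches are single arcs, a case excluded since $P(Y,Z)$ is a simple shortcut-free graph.

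The main obstacle is making the branch relation fully rigorous for a branch that may attach to $s$ through several entry arcs and carry nontrivial internal parallel structure: one must verify that, with $X_Z$ frozen, the best responses of the $Y$-firms reproduce precisely the standalone $Y$-equilibrium against the effective sink, so that Proposition~\ref{prop:same-p} and the source coefficient $\lambda(Y) b_t$ transfer verbatim. Checking the single-entry path and the diamond ($\lambda = 3$) cases by hand already confirms the slope $\tfrac{\lambda(Y)}{2} b_t$, and I would isolate the reduction as a short lemma; once it is in place, the remainder is the routine $2\times 2$ elimination above.
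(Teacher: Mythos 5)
Your route is genuinely different from the paper's. The paper treats the parallel composition as an instance of the general SM case at the source and reads the answer off the aggregate variables of \eqref{app: eq: agg-n} and \ref{general-sm} --- essentially a ``parallel resistance'' computation of the convex coefficients --- whereas you aggregate each branch into a linear price--quantity relation and solve a $2\times 2$ system. The two are algebraically equivalent (your branch relation $\tfrac{\lambda(Y)}{2}X_Y+X_Z=K$ is $r_YX_Y+b_tX_t=a_t-p$ with $r_Y=\tfrac{\lambda(Y)-2}{2}b_t$, which is exactly the quantity the paper inverts), and your elimination, the final simplification, and the treatment of the degenerate denominator are all correct.

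There is, however, a real gap in how you justify the branch relation $p=(a_t-b_tX_Z)-\tfrac12 b_{s_Y}X_Y$. You derive it ``from Lemma~\ref{lem: 4.3} exactly as in the proof of Proposition~\ref{prop:same-p}'', i.e.\ from $\Pi_{s_Y}=\tfrac12 b_{s_Y}X^2=(p-a_s)X$ together with $a_s=\tilde a_t-b_{s_Y}X$, where $\tilde a_t:=a_t-b_tX_Z$. That derivation is valid only at the equilibrium quantity of the \emph{standalone} game on $Y$ with producer cost $a_s$, namely $X=(\tilde a_t-a_s)/(\lambda(Y)b_t)$, and this is \emph{not} the branch flow $X_Y^*$ at the equilibrium of $P(Y,Z)$: in the diamond, $\lambda(Y)=4$ and $X_Y^*=\frac{a_t-a_s}{6b_t}$, while the standalone quantity against the effective sink is $\frac{5(a_t-a_s)}{24b_t}$. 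Equivalently, the branch-restricted profit $(p-a_s)X_Y$ equals $3b_tX_Y^2$ in the diamond, not $\tfrac12 b_{s_Y}X_Y^2=2b_tX_Y^2$, so Lemma~\ref{lem: 4.3} does not transfer branchwise. What you actually need is the charged price as a \emph{function} of the quantity pushed into the branch, valid at $X_Y^*$; this is supplied not by Proposition~\ref{prop:same-p} but by the charged-price formulas in the proof of Theorem~\ref{lem:pq-relation} (\ref{ss-price}, \ref{simple-sm-price}, and \eqref{sm-pj}), which with $C_P(s_Y)=\emptyset$ give charged price $=$ intercept $-\,c_n(s_Y)\,q=$ intercept $-\tfrac12 b_{s_Y}q$ for every absorbed quantity $q$. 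Your reduction of the branch subgame to the standalone game (each $Y$-firm's payoff with $X_Z$ frozen is literally its standalone payoff against the sink $\tilde a_t-b_tX$) is correct and is the right lemma to isolate; once it is combined with this functional form of the charged price rather than with Proposition~\ref{prop:same-p}, your argument closes.
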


By Lemma \ref{prop:SYZ} and \ref{lem: 4.z}, during the construction of an SPG, the component factor remains unchanged after a parallel composition of two SPGs with unchanged component factor. Switching the order of two local components in a series composition does not change the global component factor. Hence with fixed sink price function and production cost, by \eqref{eq: sw3}, the flow and welfare efficiency remain unchanged after the swap. Since the flow remain the same and by Proposition~\ref{prop:same-p}, the source offers the same price to its buyers regardless of the network structure, the source utility is unchanged. We conclude by the following proposition.

\begin{proposition}
Suppose the given SPG $G$ is constructed by series and parallel compositions on SPGs $G_1$, $G_2$, ..., and $G_n$, which includes a step $S(G_i,G_j)$ for $i \neq j$ and $i,j \in \{1, ..., n\}$. Let $G'$ be an SPG with the same construction as $G$, except that $S(G_i,G_j)$ is replaced by $S(G_j,G_i)$. Then, $\lambda(G)=\lambda(G')$. With the same sink price function and production cost, the flow efficiency, welfare efficiency, and source utility remain the same.
\end{proposition}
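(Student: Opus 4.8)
The plan is to first establish the component-factor equality $\lambda(G) = \lambda(G')$ by a structural induction on the composition tree of the construction, and then read off the three efficiency/utility claims as immediate consequences of this equality together with the closed-form expressions already derived.

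First I would organize the construction of $G$ as a rooted binary tree whose leaves are single-arc graphs and whose internal nodes are labeled $S$ or $P$ according to the composition performed. By hypothesis one internal node $\nu$ performs the step $S(G_i,G_j)$, and the tree for $G'$ is identical except that at $\nu$ the two children are swapped, so $\nu$ computes $S(G_j,G_i)$ instead. The structural fact I would extract from the composition rules is that the value of $\lambda$ at any internal node depends only on (i) whether the node is series or parallel and (ii) the $\lambda$-values of its two children: for a series node $\lambda = \lambda(\text{left})\cdot\lambda(\text{right})$ by Lemma~\ref{lem: 4.1} (which gives $\lambda(S(Y,Z))=\lambda(Y)\lambda(Z)$), and for a parallel node $\lambda$ is given by the symmetric formula of Lemma~\ref{lem: 4.z}. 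Here I would also note that since $G$ is shortcut-free and a series swap introduces no arcs crossing between the two components, $G'$ and every intermediate subgraph remain shortcut-free, so Lemma~\ref{lem: 4.z} is legitimately applicable at each parallel node and all closed-form expressions remain valid for $G'$.

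Next I would run the upward induction. At the swapped node $\nu$, Lemma~\ref{prop:SYZ} (equivalently, commutativity of the product rule) gives $\lambda(S(G_i,G_j)) = \lambda(S(G_j,G_i))$, so $\nu$ produces the same $\lambda$-value in $G$ and in $G'$. Every node strictly above $\nu$ is of the same type in both trees and has children with identical $\lambda$-values (the subtree at $\nu$ now carries a matching value, while all other subtrees are literally unchanged); since $\lambda$ at a node is determined solely by its type and its children's $\lambda$-values, induction upward yields equal $\lambda$-values at every ancestor of $\nu$, in particular at the root, so $\lambda(G) = \lambda(G')$.

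Finally I would cash out the consequences with $a_s, a_t, b_t$ held fixed. Because the equilibrium source flow satisfies $X_s = \frac{a_t - a_s}{\lambda(G) b_t}$, the equality $\lambda(G)=\lambda(G')$ forces the two flows to coincide, giving equal flow efficiency; substituting into \eqref{eq: sw3} shows $SW$ is a function of $\lambda(G)$ alone, hence $SW(G)=SW(G')$ and welfare efficiency is preserved; and since Proposition~\ref{prop:same-p} gives the source the structure-independent price $\frac{a_t - a_s}{2}$, so that $\Pi_s = \frac{a_t - a_s}{2} X_s$, the unchanged flow forces unchanged source utility. The main obstacle is making the inductive propagation airtight—arguing that a purely local swap leaves every ancestor's $\lambda$-value intact—which hinges precisely on the observations that both composition rules depend on their arguments only through the children's $\lambda$-values and that the series rule is commutative, together with the verification that the swap preserves shortcut-freeness; once $\lambda(G)=\lambda(G')$ is secured, the three efficiency and utility claims are routine substitutions.
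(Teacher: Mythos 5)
Your proposal is correct and follows essentially the same route as the paper: it uses the commutativity of the series rule $\lambda(S(Y,Z))=\lambda(Y)\lambda(Z)$ (Lemma~\ref{prop:SYZ}) together with the fact that the parallel rule of Lemma~\ref{lem: 4.z} depends only on the children's $\lambda$-values to propagate $\lambda(G)=\lambda(G')$ up the composition tree, and then reads off flow, welfare, and source utility from \eqref{eq: sw3} and Proposition~\ref{prop:same-p}. Your explicit induction on the composition tree and the remark on preserving shortcut-freeness merely make precise what the paper states informally.
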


\section{Equilibrium in Generalized Series Parallel Graphs} \label{sec: ext}
%!TEX root = main.tex

We discuss the equilibrium properties in the extension cases when the series parallel graph has multiple sources or sinks. In particular, we show that:
\begin{itemize}
    \item Single-source-and-multiple-sinks SPG: Price function of a firm may be piecewise linear and discontinuous under simple settings. There may exist multiple equilibria.
	\item Multiple-sources-and-single-sink SPG: An equilibrium may not exist.
\end{itemize}

\subsection{Single Source and Multiple Sinks} \label{sec: ssms}

A series parallel graph with a single source and multiple sinks (SMSPG) is defined as follows.

\begin{definition}[SMSPG] \label{def: 5.2}
$G$ is a {\em single-source-and-multiple-sink SPG} if it can be constructed by deleting the sink node of an SPG and setting the adjacent nodes of the sink as the new sink nodes. The set of sinks is denoted as $T$.
\end{definition}

First we consider a special case that all the markets have the same demand, then all the markets are {\em active} at equilibrium, i.e. every market has positive incoming flow. The proof is similar to Theorem~\ref{thm: 3.1} and we provide the sketch in Appendix~\ref{app: prop: 4.z}.

\begin{restatable}{proposition}{propsmspg}\label{prop: 4.z}
Given an SMSPG, if all the markets have the same demand, then there exists a unique equilibrium that can be found in polynomial time.
\end{restatable}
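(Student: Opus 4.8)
The plan is to mirror the proof of Theorem~\ref{thm: 3.1}, showing that the equal-demand hypothesis lets the backward price computation of Algorithm~\ref{alg: 1} and the forward flow computation of Algorithm~\ref{alg: 2} carry over essentially unchanged. As in the single-sink case I would first strip out any shortcuts via Lemma~\ref{lem:shortcut}, reducing to a shortcut-free SMSPG. The substantive starting point is to extend the closed-form price relation of Theorem~\ref{lem:pq-relation} to several sinks: assigning to each market $\tau \in T$ the price function $p_\tau = a_t - b_\tau X_\tau$ with the \emph{common} demand $a_t$, I would run the same reverse-topological induction. The key structural point is that the additive constant in every market's price function is the \emph{same} value $a_t$; hence whenever a firm splits its flow among buyers that feed distinct downstream sinks (the SM case), the affine parts of those buyers' price functions all share this constant. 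Consequently the recurrences~\eqref{eq: SS}, \eqref{eq: MS}, and \eqref{eq: SM} produce, for every firm $i$, a relation $p_i = a_t - b_i X_i - \sum_{l \in C_P(i)} b_l X_l$ with $b_i > 0$ depending only on the graph. This is exactly where equal demand is essential: if the $a_\tau$ differed, the affine constants would disagree at merging nodes and the clean induction would break down.

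Next I would establish that every arc—and hence every market—is active. Reaching the source, the relation gives $p_s = a_t - b_s X_s = a_s$, so $X_s = (a_t - a_s)/b_s > 0$ since $a_t > a_s$ and $b_s > 0$. Pushing this positive source flow forward with Algorithm~\ref{alg: 2}, the only nontrivial allocation occurs at SM nodes, where the split among buyers solves the \ref{eq: lcp}/\ref{eq: cp} pair. Because all buyer price functions carry the same constant $a_t$, the derivative conditions in \eqref{eq: q9} are symmetric across buyers, and I would argue that the unique minimizer of the strictly convex \ref{eq: cp} assigns strictly positive flow to every buyer, as in the proportional split of Lemma~\ref{lem: 3.3}. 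Propagating positivity down every branch then shows each sink in $T$ receives positive inflow, so all markets are active, matching the special-case claim preceding the proposition.

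Uniqueness and the running time would then follow as in Theorem~\ref{thm: 3.1}: the backward pass determines the price functions $p_i$ and convex coefficients uniquely, the forward pass determines all flows uniquely because each SM allocation is the unique solution of a strictly convex program (Lemma~\ref{lem: 3.2}) while single-buyer allocations are forced by flow conservation (Observation~\ref{obs: fc}), and since both passes touch each vertex and arc a constant number of times the equilibrium is computed in polynomial (indeed linear) time. I expect the main obstacle to be the activeness step: one must verify that the shared constant $a_t$ genuinely forces \emph{all} SM flows to be strictly positive rather than merely nonnegative, i.e. that no market is starved despite the competition among parallel branches feeding different sinks. This is precisely the place where the equal-demand assumption does the work and where the multi-sink argument differs from the single-sink one.
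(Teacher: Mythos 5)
Your proposal matches the paper's own proof sketch essentially step for step: reduce to the shortcut-free case, rerun the backward induction of Theorem~\ref{lem:pq-relation} using the common demand $a_t$ so that only the coefficients $b_i$ need recomputing, resolve the SM allocations via the \ref{eq: lcp}/\ref{eq: cp} machinery whose strictly positive convex coefficients give activeness of every trade and hence of every market, and conclude uniqueness from Lemma~\ref{lem: 3.2}. The ``obstacle'' you flag at the end is resolved exactly as you anticipate, since the solution of \ref{ls-sm} distributes flow proportionally to convex coefficients that are all strictly positive.
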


With different demand $a_t$ for $t \in T$, the markets may be {\em inactive}, i.e. the incoming quantity is zero. For example:

\begin{example} [Markets Activities]
\mbox{}\\
\begin{center}
\begin{tikzpicture}
	\node[vertex] (s) at (-2,0) {$s$}; 
	\node[vertex] (t1) at (0,0.8) {$t_1$}; 
	\node[vertex] (a) at (0,-0.8) {$v$}; 
	\node[vertex] (t2) at (2,-.3) {$t_2$}; 
	\node[vertex] (t3) at (2,-1.3) {$t_3$};
	\node at (-3, 0) {$p_{s} = 1$};
	\node at (1.7, 0.8) {$p_{t_1} = 2 - X_{t_1}$};
	\node at (3.7, -.3) {$p_{t_2} = 3 - X_{t_2}$};
	\node at (3.7, -1.3) {$p_{t_3} = 11 - X_{t_3}$};
	\path[->]
		(s) edge [blue] (t1)
		(s) edge [blue] (a)
		(a) edge [red] (t2)
		(a) edge [blue] (t3);
\end{tikzpicture}
\end{center}
Since $a_{t_1} > p_s$, the market $t_1$ is active. Suppose markets $t_2$ and $t_3$ are both active at equilibrium, and $s$ offers $v$ $x+y$ units of goods with price $p_v$, then $v$ will buy all the goods from $s$. Suppose $v$ sells $x$ to $t_2$ and $y$ to $t_3$, then the utility of $v$ is
\[\Pi_v = (3-x)x+(11-y)y-p_v(x+y).\]
By taking the derivative of $\Pi_v$ with respect to $x$ and $y$, we have
\[\frac{\partial \Pi_v}{\partial x}=3-2x-p_v \text{ and } \frac{\partial \Pi_v}{\partial y}=11-2y-p_v\]
so $\Pi_v$ is maximized when
\[p_v = 3-2x = 11-2y \implies p_v = 7-(x+y)=7-X_v.\]

When source $s$ makes a decision, the flow $x_{st_1}$ and $x_{sv}$ can be handled independently, the optimal decision that maximizes the utility $(7-X_v)X_v$ of $s$ from $v$ is $X_v = 3.5$ and $p_v = 3.5 > a_{t_2}=3$, which contradicts to the assumption that market $t_2$ is active. Therefore, market $t_2$ is inactive, even though $a_{t_2} > a_{t_1}$.
\end{example}

The above example is against the intuition that the market with higher demand is more likely to be active ($t_2$ is inactive while $t_1$ is). While the truth is not only market demand, but also the competitors and network structure have influence on market activity. Namely, market $t_2$ is inactive because it has a longer supply chain than $t_1$ and a strong competition between $t_3$. As a result, it is less favorable than $t_1$ and $t_3$.

%but it is not that simple since the network structure also has an influence as shown in the following example: Throughout the example above, we know 

% \begin{theorem} \label{thm: 5.2}
% 	If a market is active, then all the markets with higher or equal demand are active too. If a market is inactive, then all the markets with smaller or equal demand are inactive either.
% \end{theorem}
% The proof is provided in appendix~\ref{app: thm: 5.2}

The market behavior of SMSPG is usually intractable. In particular, we focus on supply chain networks of the shape in Figure~\ref{fig: 5.0}. Based on the activity status of the markets, we introduce two types of strategies for the upstream firm.

\begin{definition}[Low Price Strategy]
	Firm processes relatively large quantity of goods at a relatively low price, such that all the markets are active.
\end{definition}
\begin{definition}[High Price Strategy]
	Firm processes relatively small quantity of goods at a relatively high price, such that some markets are inactive.
\end{definition}

The firm plays its strategy to maximize its utility. Because of various choice of strategies, the price function might be piecewise linear and discontinuous. Furthermore, some counterintuitive results will occur, i.e. the increase of demand may result in the decrease of total flow and social welfare (comparing to Proposition~\ref{prop: 4.1}). To understand these differences, it is helpful to consider an example as in Figure~\ref{fig: 5.0}, where the two supply chain networks have identical structure with different market demands. 

\begin{figure}[H]
\begin{center}
\begin{tikzpicture}
	\node at (-1, 1) {supply chain 1:};
	\node[vertex] (b) at (0,0) {$s$}; 
	\node at (-1, 0) {$p_s = 7$};
	\node[vertex] (a) at (2,0) {$v$}; 
	\node[vertex] (t1) at (4,.5) {$t_1$}; 
	\node[vertex] (t2) at (4,-.5) {$t_2$}; 
	\node [red] at (5.6, .5) {$p_{t_1} = 19 - x$};
	\node at (5.6, -.5) {$p_{t_2} = 12 - y$};
	\path[->]
		(b) edge [blue] node [above] {$\textcolor{black}{X_v}$} (a) 
		(a) edge [blue] node [above] {$\textcolor{black}{x}$} (t1) 
		(a) edge [blue] node [below] {$\textcolor{black}{y}$} (t2);
\end{tikzpicture}
\end{center}

\begin{center}
\begin{tikzpicture}
	\node at (-1, 1) {supply chain 2:};
	\node[vertex] (b) at (0,0) {$s$}; 
	\node at (-1, 0) {$p_s = 7$};
	\node[vertex] (a) at (2,0) {$v$}; 
	\node[vertex] (t1) at (4,.5) {$t_1$}; 
	\node[vertex] (t2) at (4,-.5) {$t_2$}; 
	\node [red] at (5.6, .5) {$p_{t_1} = 20 - x$};
	\node at (5.6, -.5) {$p_{t_2} = 12 - y$};
	\path[->]
		(b) edge [blue] node [above] {$\textcolor{black}{X_v}$} (a) 
		(a) edge [blue] node [above] {$\textcolor{black}{x}$} (t1) 
		(a) edge [red] node [below] {$\textcolor{black}{y}$} (t2);
\end{tikzpicture}
\end{center}
\caption{Multiple Sinks Supply Networks} \label{fig: 5.0}
\end{figure}
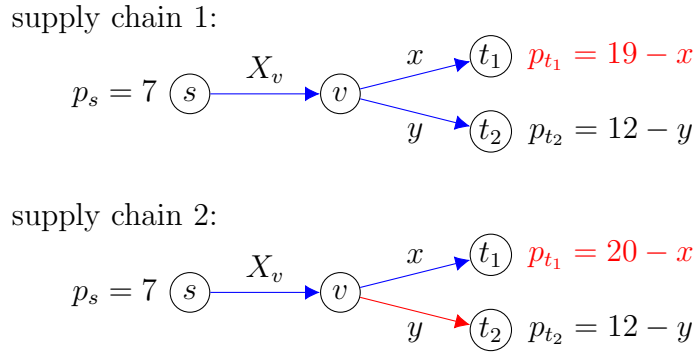

Intuitively, supply chain 2 with higher market demand should have larger flow and social welfare. However, supply chain 1 is more flow and welfare efficient. The equilibrium price functions at $s$ and $v$ are shown in Figure~\ref{fig: 5.1}. We note that the source firm $s$ has two strategies when $p_s = 7$, and both low and high price strategies are feasible. Interestingly, when $a_{t_1} = 20$, the utility of $s$ is maximized by choosing high price strategy and only market $t_1$ is active. However, when demand at market $t_1$ drops, the low price strategy is preferred by $s$.

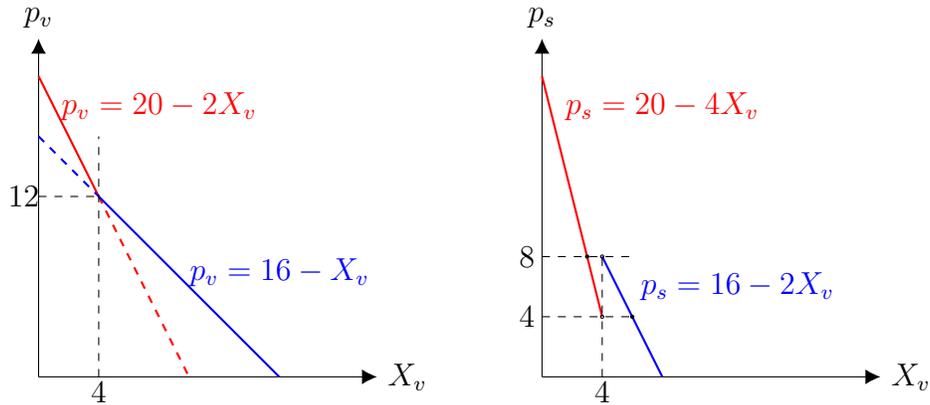
\begin{figure}[H]
\centering
\begin{subfigure}{.4\textwidth}
  \begin{center}
		\begin{tikzpicture}
			\draw[->] (0,0) -- (4.5,0) node[right] {$X_v$};
			\draw[->] (0,0) -- (0,4.5) node[above] {$p_v$};
			\draw[scale=0.2,domain=0:4,thick,variable=\x,red] plot ({\x},{20 - 2*\x}) 
			node at (8, 18) {$p_v = 20 - 2X_v$};
			\draw[scale=0.2,domain=4:10,thick,variable=\x,red,dashed] plot ({\x},{20 - 2*\x});

			\draw[scale=0.2,domain=4:16,thick,variable=\x,blue] plot ({\x},{16 - \x}) 
			node[above=30pt] {$p_v = 16 - X_v$};
			\draw[scale=0.2,domain=0:4,thick,variable=\x,blue,dashed] plot ({\x},{16 - \x});
			%flow threshold
			\draw [scale=0.2,dashed] (4,0) -- (4,16) node at (4, -1) {$4$};

			%price threshold
			\draw [scale=0.2,dashed] (0,12) -- (4,12) node at (-1, 12) {$12$};
		\end{tikzpicture}
	\end{center}
\end{subfigure}
\begin{subfigure}{.4\textwidth}
  \begin{center}
		\begin{tikzpicture}
			\draw[->] (0,0) -- (4.5,0) node[right] {$X_v$};
			\draw[->] (0,0) -- (0,4.5) node[above] {$p_s$};
			\draw[scale=0.2,domain=0:4,thick,variable=\x,red] plot ({\x},{20 - 4*\x}) 
			node at (8, 18) {$p_s = 20 - 4X_v$};

			\draw[scale=0.2,domain=4:8,thick,variable=\x,blue]  plot ({\x},{16 - 2*\x}) 
			node at (13, 6) {$p_s = 16 - 2X_v$};

			%flow threshold
			\draw [scale=0.2,dashed] (4,0) -- (4,8) node at (4, -1) {$4$};
			%\draw [scale=0.2,dashed] (3,0) -- (3,8);
			%\draw [scale=0.2,dashed] (6,0) -- (6,8);

			%nodes
			\draw [scale=0.2] (4,4) circle (3pt);
			\draw [scale=0.2] (4,8) circle (3pt);
			\filldraw [scale=0.2] (3,8) circle (3pt);
			\filldraw [scale=0.2] (6,4) circle (3pt);

			%price overlap
			\draw [scale=0.2,dashed] (0,4) -- (6,4) node at (-1, 4) {$4$};
			\draw [scale=0.2,dashed] (0,8) -- (6,8) node at (-1, 8) {$8$};
		\end{tikzpicture}
	\end{center}
\end{subfigure}
\caption{Piecewise Linear Price Functions of Supply Chain 2}
\label{fig: 5.1}
\end{figure}

By fixing the demand at market $t_2$ and adjusting the demand at market $t_1$ ($a_{t_1}$), Figure~\ref{fig: 5.2} shows the results of the source utility, consumer surplus, total flow and social welfare. The intersecting point at $a_{t_1} \approx 19.07$ shows that increasing demand at market $t_1$ hurts the supply chain efficiency. When $a_{t_1}$ is the intersecting point then there are multiple equilibria since $s$ has no preference between the high price and low price strategy. Besides, $a_{t_1}$ is feasible only in the interval $(12,22]$. The calculation details are provided in Example~\ref{app: ex: me_smspg} in the appendix.

\iffalse
\begin{figure}[htbp] 
\begin{center}
\includegraphics[width=100mm]{download.png}
\caption{Price Strategies Simulation}
\label{fig: 5.3}
\end{center}
\end{figure}
\fi

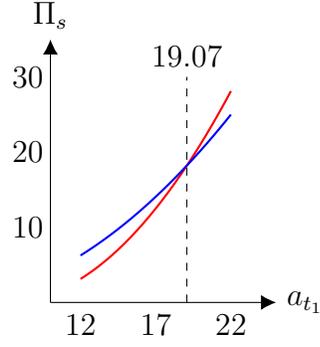
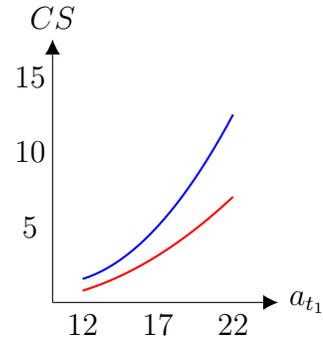
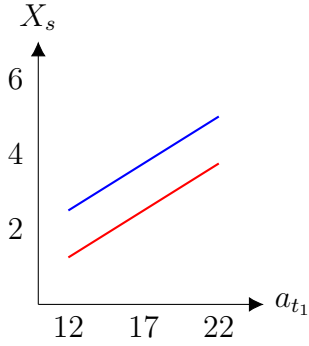
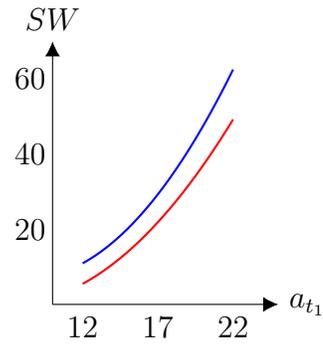
\begin{figure}[H]
			\centering
        \begin{subfigure}[b]{0.4\textwidth}
				\centering
            \begin{tikzpicture}
			\draw[->] (0,0) -- (3,0) node[right] {$a_{t_1}$};
			\draw[->] (0,0) -- (0,3.5) node[above] {$\Pi_s$};
			\draw[scale=0.2,domain=2:12,thick,variable=\x,red] plot ({\x},{((\x+10-7)/2*(\x+10-7)/4)/2});
			\draw[scale=0.2,domain=2:12,thick,variable=\x,blue] plot ({\x},{((\x+10-2)/4*(\x+10-2)/4)/2});

			%flow threshold
			\draw node at (0.4, -0.3) {$12$};
			\draw node at (1.4, -0.3) {$17$};
			\draw node at (2.4, -0.3) {$22$};

			\draw node at (-0.3,1) {$10$};
			\draw node at (-0.3,2) {$20$};
			\draw node at (-0.3,3) {$30$};
			%\draw [scale=0.2,dashed] (3,0) -- (3,8);
			%\draw [scale=0.2,dashed] (6,0) -- (6,8);
			\draw [dashed] (0.2*9.07,0) -- (0.2*9.07,3) node[above] {$19.07$};
		\end{tikzpicture}
			\caption[]%
					{Source Utility v.s. $a_{t_1}$}
        \end{subfigure}
        \quad
        \begin{subfigure}[b]{0.4\textwidth}  
            \centering
            \begin{tikzpicture}
			\draw[->] (0,0) -- (3,0) node[right] {$a_{t_1}$};
			\draw[->] (0,0) -- (0,3.5) node[above] {$CS$};
			\draw[scale=0.2,domain=2:12,thick,variable=\x,red] plot ({\x},{(((\x+10-7)/4)^2/2)});
			\draw[scale=0.2,domain=2:12,thick,variable=\x,blue] plot ({\x},{(((3/8*(\x+10)-13/4)^2+(11/4-(\x+10)/8)^2)/2)});

			%flow threshold
			\draw node at (0.4, -0.3) {$12$};
			\draw node at (1.4, -0.3) {$17$};
			\draw node at (2.4, -0.3) {$22$};

			\draw node at (-0.3,1) {$5$};
			\draw node at (-0.3,2) {$10$};
			\draw node at (-0.3,3) {$15$};
			%\draw [scale=0.2,dashed] (3,0) -- (3,8);
			%\draw [scale=0.2,dashed] (6,0) -- (6,8);
		\end{tikzpicture}
			\caption[]%
					{Consumer Surplus v.s. $a_{t_1}$}
        \end{subfigure}
        \vskip\baselineskip
        \begin{subfigure}[b]{0.4\textwidth}   
            \centering
            \begin{tikzpicture}
			\draw[->] (0,0) -- (3,0) node[right] {$a_{t_1}$};
			\draw[->] (0,0) -- (0,3.5) node[above] {$X_s$};
			\draw[scale=0.2,domain=2:12,thick,variable=\x,red] plot ({\x},{(\x+10-7)/1.6});
			\draw[scale=0.2,domain=2:12,thick,variable=\x,blue] plot ({\x},{(\x+10-2)/1.6});

			%flow threshold
			\draw node at (0.4, -0.3) {$12$};
			\draw node at (1.4, -0.3) {$17$};
			\draw node at (2.4, -0.3) {$22$};

			\draw node at (-0.3,1) {$2$};
			\draw node at (-0.3,2) {$4$};
			\draw node at (-0.3,3) {$6$};
			%\draw [scale=0.2,dashed] (3,0) -- (3,8);
			%\draw [scale=0.2,dashed] (6,0) -- (6,8);
		\end{tikzpicture}
			\caption[]%
					{Flow v.s. $a_{t_1}$}
        \end{subfigure}
        \quad
        \begin{subfigure}[b]{0.4\textwidth}   
            \centering
            \begin{tikzpicture}
			\draw[->] (0,0) -- (3,0) node[right] {$a_{t_1}$};
			\draw[->] (0,0) -- (0,3.5) node[above] {$SW$};
			\draw[scale=0.2,domain=2:12,thick,variable=\x,red] plot ({\x},{((\x+10-7)/2*(\x+10-7)/4)*7/4/4});
			\draw[scale=0.2,domain=2:12,thick,variable=\x,blue] plot ({\x},{(((3/8*(\x+10)-13/4)^2+(11/4-(\x+10)/8)^2)*3/2+((\x+10-2)/4)^2)/4});

			%flow threshold
			\draw node at (0.4, -0.3) {$12$};
			\draw node at (1.4, -0.3) {$17$};
			\draw node at (2.4, -0.3) {$22$};

			\draw node at (-0.3,1) {$20$};
			\draw node at (-0.3,2) {$40$};
			\draw node at (-0.3,3) {$60$};
			%\draw [scale=0.2,dashed] (3,0) -- (3,8);
			%\draw [scale=0.2,dashed] (6,0) -- (6,8);
		\end{tikzpicture}
			\caption[]%
					{Social Welfare v.s. $a_{t_1}$}
        \end{subfigure}
        \caption[]
        {High Price Strategy (red) v.s. Low Price Strategy (blue)} 
        \label{fig: 5.2}
    \end{figure}

\begin{proposition}
An SMSPG may have multiple equilibria.
\end{proposition}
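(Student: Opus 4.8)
The plan is to prove this existence statement by exhibiting a single explicit SMSPG that admits two distinct equilibria. The natural candidate is supply chain 2 in Figure~\ref{fig: 5.0}, namely the network $s \to v \to \{t_1, t_2\}$ with production cost $p_s$, sink prices $p_{t_1} = a_{t_1} - x$ and $p_{t_2} = 12 - y$, where I would fix $p_s = 7$ and choose $a_{t_1}$ to be exactly the threshold demand at which the source is indifferent between its two pricing strategies. Since the claim is only ``may have,'' one such tuned example suffices.

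First I would carry out the backward induction at the downstream firm $v$ to obtain its induced inverse demand curve $p_v(X_v)$. Here $v$ acts as a monopolist facing the two markets, so for each total quantity $X_v$ it allocates to equalize the marginal revenues $a_{t_1} - 2x$ and $12 - 2y$. This yields a piecewise description: when $X_v$ is small, $v$ keeps $t_2$ inactive and serves only $t_1$ (high price strategy), giving one linear branch $p_v = a_{t_1} - 2X_v$ valid for $X_v \leqslant (a_{t_1}-12)/2$; once $X_v$ exceeds that threshold both markets become active (low price strategy), giving the second branch $p_v = (a_{t_1}+12)/2 - X_v$. The key structural fact, already displayed in Figure~\ref{fig: 5.1}, is that the marginal-revenue curve faced by the source, not $p_v$ itself, is discontinuous at the kink, so the source's best-response problem splits into two separate interior optimizations.

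Next I would solve the source's problem on each branch. On the high-price branch the source maximizes $(a_{t_1} - 2X_v - p_s)X_v$, obtaining $\Pi_s^H = (a_{t_1}-p_s)^2/8$; on the low-price branch it maximizes $((a_{t_1}+12)/2 - X_v - p_s)X_v$, obtaining $\Pi_s^L = ((a_{t_1}+12-2p_s)/4)^2$. Setting $\Pi_s^H = \Pi_s^L$ with $p_s=7$ gives a quadratic in $a_{t_1}$ whose relevant root is $a_{t_1} = 12 + 5\sqrt{2} \approx 19.07$, matching the intersecting point in Figure~\ref{fig: 5.2}. At this demand the two strategies yield the source the same payoff but correspond to genuinely different outcomes: the arc $v t_2$ is inactive under the high-price strategy and active under the low-price strategy, so the two outcomes differ in their set of active arcs. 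By the uniqueness definition of Section~\ref{sec: model} these are therefore two distinct equilibria. I would finish by verifying that each candidate is in fact an equilibrium: in both, $v$ responds optimally to the price $p_v$ it is offered and the markets absorb all goods, while the source cannot profitably deviate because both options are global maximizers of its piecewise objective.

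The main obstacle is not any single computation but making the two-branch structure rigorous. I must confirm that each claimed optimum lies inside the validity range of its own branch (checking that $(a_{t_1}-p_s)/4 \leqslant (a_{t_1}-12)/2$ for the high-price branch and the reverse inequality for the low-price branch at $a_{t_1}=12+5\sqrt{2}$, so that the interior first-order conditions genuinely apply), and that no deviation to the other branch or to a boundary improves the source's payoff. The subtle point that ties the construction back to the paper's uniqueness notion is establishing that the two equilibria are truly distinct, i.e. that they differ in the active-arc set and not merely in the prices of inactive trades; this is exactly what the discontinuity in the source's effective marginal revenue provides, since it permits two coexisting maximizers producing different flows.
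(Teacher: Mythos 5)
Your proposal is correct and follows essentially the same route as the paper's own argument (Example~\ref{app: ex: me_smspg}): the two-sink network of Figure~\ref{fig: 5.0} with $a_s=7$, $a_{t_2}=12$, the two branch payoffs $\Pi_s^H=(a_{t_1}-a_s)^2/8$ and $\Pi_s^L=(a_{t_1}+a_{t_2}-2a_s)^2/16$, and the indifference point $a_{t_1}=12+5\sqrt{2}$ all match, as does the feasibility check that the low-price optimum lies on its branch. Your added emphasis that the two outcomes differ in the active-arc set (so they are distinct under the paper's uniqueness definition) is a welcome explicit touch that the paper leaves implicit.
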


\begin{remark} \label{rm1}
For supply chain networks of the shape in Figure~\ref{fig: 5.0},

\begin{itemize}
\item The low price strategy always gives a higher flow value than the high price strategy.
\item When the demand difference between two markets is small enough, the low price strategy gives better utility for the source. When the difference is large enough, the high price strategy gives better utility for the source.
\item The low price strategy always produces higher social welfare.
\end{itemize}
\end{remark}

In short, the low price strategy is preferred by $s$ if the demand difference is not large. Besides, with the low price strategy, everyone is usually better off. We refer more interpretation and detailed calculations of these results to Appendix~\ref{app: rm1}.

\iffalse
When upstream chooses the optimal strategy and flow, there may exist multiple equilibria for downstream firms. Details are in Example~\ref{app: ex: s_muleq}.
\fi

\iffalse
% In the following part, we analysis the difference between the influence of high price and low price strategies over 

\begin{proposition} \label{prop: 5.1}
	Supply chain under low price strategy is always more efficient than under high price strategy.
\end{proposition}

The following proposition shows that as difference of demand between two markets increases, the sources preferred strategy switches from low price strategy to high price strategy.
\begin{proposition} \label{prop: 5.2}
	When the demand difference between two markets is small enough, low price strategy gives better payoff for source firm $b$. If the difference is large enough, high price strategy gives better payoff for source firm $b$.
\end{proposition}

To show low price strategy always generate higher social welfare, we prove the following two lemmas first,
\begin{lemma} \label{lem: 5.3}
	Low price strategy always produce higher total surplus of firms. 
\end{lemma}

\begin{lemma} \label{lem: 5.4}
	Low price strategy always produce higher surplus of consumers, higher total surplus of firms. 
\end{lemma}

Since social welfare is the sum of consumers' surplus and firms' surplus, as a direct result of above two lemmas, we have the following comparison about social welfare.
\begin{proposition} \label{prop: 5.3}
	Low price strategy always produce higher social welfare. 
\end{proposition}

\fi

\subsection{Multiple Sources and Single Sink} \label{sec: msss}
The extension of SPG with multiple sinks is defined similarly to Definition~\ref{def: 5.2}.

\begin{definition}[MSSPG] \label{def: 5.1}
$G$ is a {\em multiple-source-and-single-sink SPG} if it can be constructed by deleting the source node of an SPG and setting the adjacent nodes of the source as the new source nodes. The set of sources is denoted as $S$.
\end{definition}

Assume that the source producers make their decision simultaneously, an equilibrium may not exist. We show Example~\ref{ex: SMSPG} in the appendix.

\begin{proposition}
    An equilibrium in an MSSPG may not exist.
\end{proposition}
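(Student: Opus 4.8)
The plan is to prove the statement by exhibiting an explicit counterexample: a concrete MSSPG together with a verification that no pure-strategy profile is an equilibrium. Because non-existence is a negative claim, I would not attempt an abstract characterization of equilibria; instead I would construct the smallest instance that captures the essential competitive tension and show directly that every candidate strategy profile admits a profitable unilateral deviation for some source.

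For the construction, I would take two source producers $s_1$ and $s_2$ that both sell into a shared downstream structure terminating at the single sink $t$ — for instance both feeding a common intermediary firm $v$, or each reaching $t$ through short chains that merge. The crucial feature is that the two sources compete \emph{simultaneously} for the same downstream buyer, so the quantity and price one source can profitably offer depends on what the other offers. I would fix the production costs $a_{s_1}, a_{s_2}$ and the sink parameters $a_t, b_t$ so that the competition is tight, i.e. neither source can safely ignore the other, and so that the downstream firms react rationally exactly as prescribed by Theorem~\ref{lem:pq-relation} and Lemma~\ref{lem: 3.2}.

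The heart of the argument is to compute the best-response correspondence of each source and show it has no fixed point. Given the opponent's offer, each source faces a residual demand determined by the downstream firms' rational acceptance, and — as already observed for the single-source multiple-sink case — the price function a source effectively sees can be piecewise linear and discontinuous. Here the discontinuity is sharper: the intermediary's acceptance decision lets a source capture the entire downstream flow by marginally undercutting the rival's effective price, producing a Bertrand-type jump. This is precisely what breaks existence: wherever $s_1$ places its offer, $s_2$'s optimal reply is to undercut (or to jump to a high-price strategy that monopolizes a slice of the flow), and symmetrically for $s_1$, so the best responses chase each other and the intersection of the two best-response graphs is empty.

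The main obstacle I anticipate is making this cycling rigorous rather than merely plausible. I must verify the payoff computations on both sides of the discontinuity and check all qualitatively distinct regimes — both sources active and splitting the flow, a single source monopolizing, and one source priced out — confirming in each regime that some source strictly prefers to deviate into a neighboring regime. Choosing the numerical constants so that these deviation incentives point consistently "around a loop," with no parameter region admitting mutual best responses, is the delicate part: a careless choice could accidentally stabilize a corner profile, so the parameters must be tuned and the absence of a fixed point checked exhaustively over the finitely many regimes.
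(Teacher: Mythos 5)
Your approach is essentially the same as the paper's: the paper's Example~\ref{ex: SMSPG} is exactly the minimal instance you describe (two sources with equal cost feeding a common intermediary $c$ that sells to $t$), and the non-existence argument is the Bertrand-style undercutting you outline — given any rival price in $(1,2)$, the best response is to undercut by $\varepsilon$, so the best-response correspondences have no fixed point. If anything, your plan is more cautious than the paper, which does not exhaustively verify all regimes (e.g.\ the boundary profile where both prices equal the common cost) and simply asserts the undercutting race.
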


\section{Conclusion} \label{sec:conclude}
We consider a model of sequential competition in supply chain networks. Our main contribution is that when the network is series parallel, the model is tractable and allows a rich set of comparative analysis. In particular, we provide a linear time algorithm to compute the equilibrium and the algorithm helps us study the influence of the network to the total flow and social welfare of the equilibrium.

Slightly extending the network structure beyond series parallel graphs with a single source and multiple sinks (SMSPG) makes the model intractable. The first open problem is to design an efficient algorithm that verifies if an SMSPG has an equilibrium and finds one if it exists. The main challange is the piecewise-linearity and the discontinuity of the price function for intermediary firms. This problem may be computationally intractable but it is unclear what a reasonable proof strategy would be.

Another open problem is to efficiently find an equilibrium in general DAGs with a single source and a single sink. The trades can be inactive for shortcut-free DAGs as shown in Example~\ref{app: ex: gen_dag} in the appendix. We conjecture that there is always an equilibrium and the active trades form a shortcut-free SPG. A natural approach is to compute the price function in a reverse topological order from the sink inductively. However, this enforces one to solve LCPs that correspond to the firms, where the LCPs of the upstream firms are derived from the LCPs of the downstream firms. Solving the LCP system requires determining inactive trades where the number of combination of active and inactive trades is exponential. Therefore, a potential strategy to show computational intractability is a reduction from an LCP based or a quadratic programming problem.
 
\setstretch{1}
%\section*{References}
\bibliographystyle{abbrvnat}
\bibliography{ref}

\setstretch{1}
%  \onehalfspacing
\newpage
\section*{Appendix}
\appendix
%\input{eappendix}
%!TEX root = main.tex

\section{Proofs in Section \ref{sec: clearance}}

\subsection{Proof of The Nonexistence of the MM Case} \label{app:no_mm}
\begin{proof}
	The Multiple sellers and multiple buyers (MM) case is $|B(i)| \geqslant 2$ and $|S(j)| \geqslant 2$ for $ij \in E$:
	\begin{center}
	\begin{tikzpicture}
		\node[vertex] (i1) at (2,1.5) {$i_1$}; 
		\node[vertex] (i2) at (2,0.8) {$i_2$};
		\node[vertex] (i) at (2,-.5) {$i$};
		\node[vertex] (j1) at (4,1.5) {$j_1$}; 
		\node[vertex] (j2) at (4,0.8) {$j_2$};
		\node[vertex] (j) at (4,-.5) {$j$}; 
		\draw[dotted, very thick] (2,0.5) -- (2,-.2);
		\draw[dotted, very thick] (4,0.5) -- (4,-0.2);
		\path[->]
			(i) edge (j)
			(i1) edge (j)
			(i2) edge (j)
			(i) edge (j1)
			(i) edge (j2)
			(i) edge (j);
		\node at (3, -1.2) {$MM$};
	\end{tikzpicture}
	\end{center}
    
    The MM case is impossible in a shortcut-free SPG, and this can be proved by induction. By definition, any SPG can be constructed by series and parallel composition:
    \begin{itemize}
        \item Series composition: the MM case will not appear after series composition.
        \item Parallel composition: by checking the merging source and sink, one can see that the MM case will not appear either, unless there is a shortcut between the source and sink.
    \end{itemize}

    Therefore, the MM case does not happen in a shortcut-free SPG. \QED
\end{proof}

\subsection{Proof of Theorem~\ref{lem:pq-relation}} \label{app:lem:pq-relation}
\lemeqprice*
\begin{proof}
	Our strategy is to start from the sink $t$ and argue inductively via reverse topological traversal that the price proposed to $i$ must be an affine decreasing function of $X_i$. We note that the computation is always under the flow reservation condition by Observation~\ref{obs: fc}.

\paragraph{Starting from the Sink $t$.}

We start with the behavior of the direct upstream firms of sink $t$. For a firm $i \in S(t)$, if arc $it$ belongs to the SS case, then the utility of $i$ is
\begin{align*}
\Pi_i &= (a_t - b_t x_{it}) x_{it} - \sum_{k \in S(i)} p_{ki} x_{ki} \\
&= (a_t - b_t \sum_{k \in S(i)} x_{ki}) \sum_{k \in S(i)} x_{ki} - \sum_{k \in S(i)} p_{ki} x_{ki}.
\end{align*}

$p_{ki}$ given by the selling firms are regarded as constants to $i$. $\Pi_i$ is a concave function. By taking the derivative of $\Pi_i$ with respect to $x_{ki}$, we have
\[\frac{\partial \Pi_i}{\partial x_{ki}} = a_t - 2b_t \sum_{k \in S(i)} x_{ki} - p_{ki}.\]

The price and quantity at equilibrium is a solution of the following linear complementarity problem. Intuitively, $\frac{\partial \Pi_i}{\partial x_{ki}} > 0$ cannot happen since otherwise $k$ could have raised the price $p_{ki}$ such that $\frac{\partial \Pi_i}{\partial x_{ki}} = 0$. This makes $i$ accept all the goods from $k$ and $k$ would obtain a higher payoff, which contradicts the equilibrium condition. If $\frac{\partial \Pi_i}{\partial x_{ki}} < 0$, then $k$ will not offer any goods to $i$ so $x_{ki}=0$ since the payoff of $i$ will decrease if $i$ accepts some goods from $k$.
\begin{equation} \label{eq: r-lcp} \tag{Reverse LCP}
\begin{cases}
\sum_{k \in S(i)}\frac{\partial \Pi_i}{\partial x_{ki}} x_{ki}  = 0,\\
\frac{\partial \Pi_i}{\partial x_{ki}} \leqslant 0  \quad \forall k \in S(i),\\
x_{ki} \geqslant 0 \quad \forall k \in S(i).
\end{cases}
\end{equation}

When $ki$ is active, $p_{ki}$ is such that $\frac{\partial \Pi_i}{\partial x_{ki}} = 0$. Therefore, for an active arc $ki$,
\[p_{ki} = p_i = a_t - 2b_t \sum_{k \in S(i)} x_{ki}.\]

If arc $it$ belongs to the MS case, then the utility of $i$ is
\[\Pi_i = (a_t - b_t \sum_{j \in S(t)} x_{jt}) \sum_{k \in S(i)} x_{ki} - \sum_{k \in S(i)} p_{ki} x_{ki}.\]

$p_{ki}$ and $x_{jt}$ where $j \neq i$ are regarded as constants to $i$ so that regardless of some fixed $p_{ki}$ and $x_{jt}$, $i$ is not willing to change its decision at equilibrium. $\Pi_i$ is a concave function. By taking the derivative of $\Pi_i$ with respect to $x_{ki}$, we have
\[\frac{\partial \Pi_i}{\partial x_{ki}} = a_t - b_t \sum_{k \in S(i)} x_{ki} - b_t \sum_{j \in S(t)} x_{jt} - p_{ki}.\]

By a similar argument as before, the price and quantity at equilibrium is a solution of \ref{eq: r-lcp}. When $ki$ is active, $p_{ki}$ is such that $\frac{\partial \Pi_i}{\partial x_{ki}} = 0$. Therefore, for an active arc $ki$,
\[p_{ki} = p_i = a_t - b_t \sum_{k \in S(i)} x_{ki} - b_t \sum_{j \in S(t)} x_{jt} = a_t - b_i X_i - b_t X_t.\]

\paragraph{Before reaching the SM Case.}
The same procedure as before can be inductively repeated whenever we meet an MS or SS case by the reverse topological traversal from $t$. Given the fact that the downstream price must be an affine decreasing function of the inflow of the parent merging child nodes, the derivative of the firm utility with respect to the quantity decision variables must always be zero whenever the quantity is positive.

Before reaching the SM case during the reverse topological traversal from $t$, consider firm $i \in V$. By inductive hypothesis, suppose for each arc $x_{ij} > 0$,
\[p_{ij} = p_j = a_t - b_j X_j - \sum_{l \in C_P(j)} b_l X_l.\]

We note that the flow on $ki$ merges to the nodes $l \in C_P(j)$ and $j$, so $\frac{\partial X_j}{\partial x_{ki}} = \frac{\partial X_l}{\partial x_{ki}} = 1$.

Consider arc $ki$, if $ki$ is the SS case or the MS case, then the utility of $i$ is
\begin{align*}
\Pi_i &= p_j x_{ij} - \sum_{k \in S(i)} p_{ki} x_{ki} \\
&= p_j \sum_{k \in S(i)} x_{ki} - \sum_{k \in S(i)} p_{ki} x_{ki}.
\end{align*}

If $ki$ is the SS case and $x_{ki} > 0$, then by \ref{eq: r-lcp}, $\frac{\partial \Pi_i}{\partial x_{ki}} = 0$, thus
	\begin{align}
	p_{ki} &= p_j + \frac{\partial p_j}{\partial x_{ki}} \sum_{k \in S(i)} x_{ki} \nonumber \\
	&= a_t - b_j X_j - \sum_{l \in C_P(j)} b_l X_l - (b_j + \sum_{l \in C_P(j)} b_l) \sum_{k \in S(i)} x_{ki} \nonumber \\
	&= a_t - (2 b_j + \sum_{l \in C_P(j)} b_l) X_i - \sum_{l \in C_P(i)} b_l X_l \label{ss-price} \tag{SS-price}
	\end{align}
	where $X_i = X_j = \sum_{k \in S(i)} x_{ki}$ and $C_P(i) = C_P(j)$ by Observation~\ref{obs:mc}.

If $ki$ is the MS case and $x_{ki} > 0$, then by \ref{eq: r-lcp}, $\frac{\partial \Pi_i}{\partial x_{ki}} = 0$, thus
\begin{align}
	p_{ki} &= p_j + \frac{\partial p_j}{\partial x_{ki}} \sum_{k \in S(i)} x_{ki} \nonumber \\
	&= a_t - b_j X_j - \sum_{l \in C_P(j)} b_l X_l - (b_j + \sum_{l \in C_P(j)} b_l) \sum_{k \in S(i)} x_{ki} \nonumber \\
	&= a_t - (b_j +\sum_{l \in C_P(j)} b_l) X_i - b_j X_j - \sum_{l \in C_P(j)} b_l X_l \nonumber \\
	&= a_t - (b_j + \sum_{l \in C_P(j)} b_l) X_i - \sum_{l \in C_P(i)} b_l X_l \label{ms-price} \tag{MS-price}
	\end{align}
	where $X_i = \sum_{k \in S(i)} x_{ki}$ and $C_P(i) = C_P(j) \sqcup \{j\}$ by Observation~\ref{obs:mc} in this case.

\paragraph{Reaching the SM Case.} \label{para:sm}
Define the set of nodes $N_G$ such that for any node $i \in N_G$, $i$ itself and all the children of $i$ all have an empty self-merging child nodes set. Formally,
\[
N_G = \{i \mid i \in V \cup\{s\} \text{ such that } \forall j \in C(i) \cup \{i\}, C_s(j) = \emptyset\}.
\]
$N_G$ denotes the set of nodes starting from $t$ via reverse topological traversal until we reach a set of nodes that are sellers in the SM case. These sellers can be defined as the set of nodes $S_G$, such that for any node $i \in S_G$, there exists a buyer of $i$ that belongs to $N_G$. Formally,
\[S_G = \{i \mid i \in V \cup \{s\} \text{ such that } B(i) \cap N_G \neq \emptyset\}.\]

In the following figure, $N_G$ consists of the red nodes $j_1$, $v_1$, $v_2$, $l$, $k$, and $t$, while $S_G$ consists of the black nodes $s$ and $j_2$.

\begin{center}
\begin{tikzpicture}[baseline=0]
%	\node[vertex] (s0) at (-2,0) {$s_0$};
	\node[vertex] (s) at (0,0) {$s$}; 
	\node[vertex,red] (a) at (2,1.5) {$j_1$}; 
	\node[vertex] (a1) at (1.5,0) {$j_2$};
	\node[vertex,red] (b) at (3,0.5) {$v_1$}; 
	\node[vertex,red] (c) at (3,-0.5) {$v_2$};
	\node[vertex,red] (d) at (4.5,0) {$l$};
%	\node[vertex] (e) at (2,-1.5) {$e$};
	\node[vertex,red] (f) at (4,1.5) {$k$};
  \node[vertex,red] (t) at (6,0) {$t$};
	\path[->]
%		(s0) edge (e)
%		(s0) edge (s)
		(s) edge (a)
		(s) edge (a1) 
		(a1) edge (b)
		(a1) edge (c)
		(a) edge (f)
		(b) edge (d)
		(c) edge (d)
		(f) edge (t)
		(d) edge (t)
%		(e) edge (t)
;
\end{tikzpicture}
\end{center}

There must exist a node $i \in S_G$ such that all its buyers $j \in B(i)$ belong to $N_G$, and $i$ is the seller in the SM case. Given that $p_j$ where $j \in B(i)$ are affine decreasing functions, the utility of $i$ is
\begin{equation} \label{eq: pi_i}
\Pi_i = \sum_{j \in B(i)}{p_j x_{ij}} - \sum_{k \in S(i)} p_{ki} x_{ki}.
\end{equation}

Suppose the sellers of $i$ make their best decision and offer $i$ total inflow $X_i = C$ such that $i$ accepts everything. $i$ does not have control over the buying cost $\sum_{k \in S(i)} p_{ki} x_{ki}$ and the total inflow $\sum_{k \in S(i)} x_{ki}$ at equilibrium. What $i$ can decide is how to distribute $C$ to its buyers in $N_G$. Here $p_{ki}$ are regarded as constants given to $i$ and $\Pi_i$ is a concave function. There is also a constraint $\sum_{j \in B(i)}x_{ij} = \sum_{k \in S(i)}x_{ki}$. Therefore, we can rewrite the problem of maximizing $\Pi_i$ as the following convex quadratic program:

\begin{equation} \label{eq: dist} \tag{SM-CQP}
\begin{aligned}
& \maximize_{x} & & \sum_{j \in B(i)}{p_j x_{ij}} \\
& \text{subject to}
& & \sum_{j \in B(i)}{x_{ij}} = C.
\end{aligned}
\end{equation}

Consider the Lagrangian function:
	\[
	\begin{aligned}
	L(x, \lambda) &= \sum_{j \in B(i)}{p_j x_{ij}} - \lambda(\sum_{j \in B(i)}{x_{ij}} - C).
	\end{aligned}
	\]

By taking the derivative of $L(x,\lambda)$ with respect to $x_{ij}$, we have
	\begin{align}
	\frac{\partial L(x, \lambda)}{\partial x_{ij}} &= p_j + \sum_{h \in B(i)} \frac{\partial p_h}{\partial x_{ij}} x_{ih} - \lambda \nonumber \\
	&= a_t - b_j x_{ij} - \sum_{l \in C_P(j)} b_l X_l - b_j x_{ij} - \sum_{l \in C_T(i,j)} b_l X_l - \sum_{l \in C_P(i)} b_l X_i - \lambda \nonumber \\
	&= a_t - 2 b_j x_{ij} - 
2 \sum_{l \in C_T(i,j)} b_l X_l - \sum_{l \in C_P(i)} b_l (X_l + C) - \lambda \label{eq: dLdx}
	\end{align}
where $\sum_{l \in C_P(i)} b_l (X_l + C)$ is fixed by the variables $X_l$ and $C$ which are decided by the upstream buyers of $i$, thus regarded as a constant to $i$. The second equality follows by rearranging and summing the inflow value of the merging nodes and the inductive hypothesis on $p_j$.

$\sum_{j \in B(i)}{p_j x_{ij}}$ is maximized when $\frac{\partial L(x_{ij}, \lambda)}{\partial x_{ij}} = 0$ for each $j \in B(i)$. This indicates
\[
a_t - 2 b_j x_{ij} - 
2 \sum_{l \in C_T(i,j)} b_l X_l = \sum_{l \in C_P(i)} b_l (X_l + C) + \lambda.
\]
By rearranging, this can be formulated as a linear system
\begin{equation} \label{ls-sm} \tag{SM-LS}
\begin{cases}
2 b_j x_{ij} +
2 \sum_{l \in C_T(i,j)} b_l X_l = D \quad \forall j \in B(i), \\
\sum_{j \in B(i)}{x_{ij}} = C,
\end{cases}
\end{equation}
where $D:=a_t - \sum_{l \in C_P(i)} b_l (X_l + C) - \lambda$. Since the right hand side is the same for each linear constraint $j \in B(i)$, $x_{ij}=\alpha_{j} C$ where $\sum_{j \in B(i)} \alpha_{j}=1$ is the solution of \ref{eq: dist}. We focus on finding the \emph{convex coefficients} $\alpha_j$ and the price $p_i$.

\paragraph{Finding the Convex Coefficients and the Price.} \label{para:fccp} Consider the SM case where $ij \in E$. We start with the simple SM case  as a warm up and continue on the general SM case. \\

\textbf{Simple SM:} $|B(i)| \geqslant 2$, $|S(j)| = 1$, and $|C_S(i)|=1$:
	\begin{center}
	\begin{tikzpicture}
		\node[vertex] (j1) at (8,1.5) {$j_1$}; 
	\node[vertex] (j2) at (8,0.8) {$j_2$};
	\node[vertex] (jk) at (8,-1.5) {$j_m$}; 
	\node[vertex] (i) at (6,0) {$i$};
	\node[vertex] (j) at (8,-0.2) {$j$};
	\node (d1) at (10,1.5) {$...$};
	\node (d2) at (10,0.8) {$...$};
	\node (d3) at (10,-0.2) {$...$};
	\node (d4) at (10,-1.5) {$...$};
	\node[vertex] (h) at (12,0) {$h$};
	\draw[dotted, very thick] (8,0.5) -- (8,0.1);
	\draw[dotted, very thick] (8,-0.5) -- (8,-1.2);
	\draw[dotted, very thick] (10,0.5) -- (10,0.1);
	\draw[dotted, very thick] (10,-0.5) -- (10,-1.2);
	\path[->]
		(i) edge (j1)
		(i) edge (j2)
		(i) edge (j)
 		(i) edge (jk)
		(j1) edge (d1)
		(j2) edge (d2)
		(j) edge (d3)
 		(jk) edge (d4)
		(d1) edge (h)
		(d2) edge (h)
		(d3) edge (h)
 		(d4) edge (h);
		\end{tikzpicture}
	\end{center}

In the simple SM case, $C_T(i,j) = C_S(i) \setminus C_P(i)$ is the same for each $j \in B(i)$ so $b_j x_{ij}$, so $2 \sum_{l \in C_T(i,j)} b_l X_l$ is also the same. It suffices to find $\alpha_i$ so that $b_j x_{ij}$ is the same for each $j \in B(i)$. Let
\begin{equation} \label{app: eq: alpha_j} \tag{Simple-SM-$\alpha_j$}
\alpha_j = \frac{\frac{1}{b_j}}{\sum_{j' \in B(i)}{\frac{1}{b_{j'}}}},
\end{equation}
then when $x_{ij}=\alpha_j C$, $2 b_j x_{ij} +
2 \sum_{l \in C_T(i,j)} b_l X_l$ are the same for $j \in B(i)$ in \ref{ls-sm}. The best strategy for $i$ is to assign $\alpha_j C$ on arc $ij$. For all $j' \in B(i)$, $b_j'$ is positive so $\alpha_j$ is also positive. Therefore, if $C$ is positive, then $x_{ij}$ are all active.

Price $p_j$ is the same for each $j \in B(i)$:
\begin{align}
p_j &= a_t - b_j x_{ij} - \sum_{l \in C_p(j)}{b_l X_l} \nonumber \\
	&= a_t - b_j\frac{\frac{1}{b_j}}{\sum_{j' \in B(i)}{\frac{1}{b_{j'}}}}C - \sum_{l \in C_T(i,j)}{b_l X_l} - \sum_{l \in C_P(i)}{b_l X_l} \nonumber \\
	&= a_t - \frac{1}{\sum_{j' \in B(i)}{\frac{1}{b_{j'}}}}X_i - \sum_{l \in C_S(i) \setminus C_P(i)} b_l X_i - \sum_{l \in C_P(i)}{b_l X_l} \nonumber
\end{align}
where the last equality holds since $X_l = X_i$ when $l \in C_S(i) \setminus C_P(i)$. The utility of $i$ is
\begin{equation*}
\Pi_i = (a_t - \frac{1}{\sum_{j' \in B(i)}{\frac{1}{b_{j'}}}}X_i - \sum_{l \in C_S(i) \setminus C_P(i)} b_l X_i - \sum_{l \in C_P(i)}{b_l X_l})X_i - \sum_{k \in S(i)} p_{ki} x_{ki}.
\end{equation*}

By taking the derivative of $\Pi_i$ with respect to $x_{ki}$, we have
\begin{align*}
\frac{\partial \Pi_i}{\partial x_{ki}} &= a_t - \frac{1}{\sum_{j' \in B(i)}{\frac{1}{b_{j'}}}} X_i - \sum_{l \in C_S(i) \setminus C_P(i)} b_l X_i - \sum_{l \in C_P(i)}{b_l X_l}\\
& \quad - (\frac{1}{\sum_{j' \in B(i)}{\frac{1}{b_{j'}}}}+ \sum_{l \in C_S(i) \setminus C_P(i)} b_l + \sum_{l \in C_P(i)}{b_l})X_i - p_{ki} \\
&= a_t - (\frac{2}{\sum_{j' \in B(i)}{\frac{1}{b_{j'}}}}+ 2\sum_{l \in C_S(i) \setminus C_P(i)}{b_l}+\sum_{l \in C_P(i)}{b_l})X_i - \sum_{l \in C_P(i)}{b_l X_l} - p_{ki}.
\end{align*}

We note that $\frac{\partial X_i}{\partial x_{ki}} = 1$ and $\frac{\partial X_l}{\partial x_{ki}} = 1$ for $l \in C_P(i) \cup C_S(i)$. By \ref{eq: r-lcp}, if $x_{ki} > 0$, then $\frac{\partial \Pi_i}{\partial x_{ki}}=0$, so
\begin{equation} \label{simple-sm-price} \tag{Simple-SM-price}
p_{ki} = a_t - (\frac{2}{\sum_{j' \in B(i)}{\frac{1}{b_{j'}}}}+ 2\sum_{l \in C_S(i) \setminus C_P(i)}{b_l}+\sum_{l \in C_P(i)}{b_l})X_i - \sum_{l \in C_P(i)}{b_l X_l}.
\end{equation}

\textbf{General SM:} $|B(i)| \geqslant 3$, $|S(j)| = 1$, and $|C_S(i)| \geqslant 2$:

\begin{center}
	\begin{tikzpicture}
		\node[vertex] (j1) at (8,1.5) {$j_1$}; 
	\node[vertex] (j2) at (8,0.8) {$j_2$};
	\node[vertex] (jk) at (8,-1.5) {$j_m$}; 
	\node[vertex] (i) at (6,0) {$i$};
	\node[vertex] (j) at (8,-0.2) {$j$};
	\node (d1) at (10,1.5) {$...$};
	\node (d2) at (10,0.8) {$...$};
	\node (d3) at (10,-0.2) {$...$};
	\node (d4) at (10,-1.5) {$...$};
	\node (dd) at (14,0.5) {$...$};
	\node[vertex] (h) at (12,0.5) {$h_1$};
	\node[vertex] (h2) at (16,0) {$h_n$};
	\draw[dotted, very thick] (8,0.5) -- (8,0.1);
	\draw[dotted, very thick] (8,-0.5) -- (8,-1.2);
	\draw[dotted, very thick] (10,0.5) -- (10,0.1);
	\draw[dotted, very thick] (10,-0.5) -- (10,-1.2);
	\path[->]
		(i) edge (j1)
		(i) edge (j2)
		(i) edge (j)
 		(i) edge (jk)
		(j1) edge (d1)
		(j2) edge (d2)
		(j) edge (d3)
 		(jk) edge (d4)
		(d1) edge (h)
		(d2) edge (h)
		(d3) edge (h)
		(h) edge (dd)
		(dd) edge (h2)
 		(d4) edge (h2);
		\end{tikzpicture}
	\end{center}

List the nodes in $C_S(i)$ as $h_1, h_2, ..., h_n$ according to a topological order, where $h_1$ is the first merging node, $h_2$ is the second merging node, ..., $h_n$ is the last merging node in $C_S(i)$. For each $h_k \in C_S(i)$, let $B_k(i)$ be the largest subset of $B(i)$, such that for each $j \in B_k(i)$, $ij$ is the first arc of a corresponding disjoint path that eventually reaches $h_k$ without reaching any $h_r$ where $r < k$. Let $P_k(i)$ be the {\em direct merging parent set} consists of nodes $h_r$ that can reach $h_k$ without passing any $h_l$ where $r < l < k$.

In the following example, $C_S(i) = \{h_1, h_2\}$. The paths from $i$ via $j_2$ and $j_3$ merge at $h_1$. The path from $i$ via $j_1$ reaches $h_2$ without passing $h_1$. $B_2(i) = \{j_1\}$ and $B_1(i) = \{j_2, j_3\}$. $P_2(i) = \{h_1\}$ and $P_1(i) = \emptyset$.

\begin{center}
\begin{tikzpicture}[baseline=0]
%	\node[vertex] (s0) at (-2,0) {$s_0$}; 
	\node[vertex] (s) at (0,0) {$i$}; 
	\node[vertex] (a) at (2,1.5) {$j_1$}; 
	\node[vertex] (b) at (2,0.5) {$j_2$}; 
	\node[vertex] (c) at (2,-0.5) {$j_3$};
	\node[vertex] (d) at (4,0) {$h_1$};
%	\node[vertex] (e) at (2,-1.5) {$e$};
	\node[vertex] (f) at (4,1.5) {$v$};
  \node[vertex] (t) at (6,0) {$h_2$};
	\path[->]
%		(s0) edge (e)
%		(s0) edge (s)
		(s) edge (a) 
		(s) edge (b)
		(s) edge (c)
		(a) edge (f)
		(b) edge (d)
		(c) edge (d)
		(f) edge (t)
		(d) edge (t)
%		(e) edge (t)
;
\end{tikzpicture}
\end{center}

	The calculation of $\alpha_i$ is done in an inductive fashion. We start from $k=1$, then $k=2$, and so on until $k=n$. We define an {\em aggregate variable} $c_k(i)$ for the vertices $B_k(i) \sqcup \{h_k\}$ recursively as the following:

\begin{equation} \label{app: eq: agg}
c_k(i):=	\frac{1}{\sum_{j \in B_k(i)} \frac{1}{b_{j}} + \sum_{l \mid h_l \in P_k(i)}\frac{1}{c_l(i)}} + b_{h_k}.
\end{equation}

When $P_k(i) = \emptyset$, we only consider the nodes in $B_k(i)$. When $h_n$ is reached, if $h_n \in C_P(i)$, then $b_{h_n}$ is not part of the aggregate variable $b_{B_n(i)}$. Therefore,
\begin{equation} \label{app: eq: agg-n}
	c_n(i) := \frac{1}{\sum_{j \in B_n(i)} \frac{1}{b_{j}} + \sum_{l \mid h_l \in P_n(i)}\frac{1}{c_l(i)}} + \sum_{l \in \{h_n\} \setminus C_P(i)}{b_l}.
\end{equation}

Now we find the convex coefficient for $ij$ where $j \in B(i)$, which allows us to rewrite $p_i$ in terms of $X_i$. The approach is a traversal of merging nodes until $h_n$ is reached. At step $k$, for each $p$ such that $h_p \in P_k(i)$, the aggregate variable $c_p(i)$ with nodes $j \in B_k(i)$ that merges to $h_k$ is {\em currently} weighted by
\begin{equation} \label{app: eq: aggcoeff}
	\beta_p(i) := \frac{\frac{1}{c_p(i)}}{\sum_{j \in B_k(i)} \frac{1}{b_{j}} + \sum_{l \mid h_l \in P_k(i)}\frac{1}{c_l(i)}}
\end{equation}
while node $j \in B_k(i)$ is weighted by
\begin{equation} \label{app: eq: vercoeff}
	\beta_{j} := \frac{\frac{1}{b_j}}{\sum_{j' \in B_k(i)} \frac{1}{b_{j'}} + \sum_{l \mid h_l \in P_k(i)}\frac{1}{c_l(i)}}.
\end{equation}

For $j \in B(i)$, the convex coefficient of $ij$ is
\begin{equation} \label{app: eq: prodcoeff} \tag{General-SM-$\alpha_j$}
	\alpha_j = \beta_{j}\prod_{p \mid h_p \in C_T(i,j) \setminus \{h_n\}}{\beta_p(i)}.
\end{equation}
We note that $\alpha_j$ is positive since $b_{j'}$ and $b_l$ where $j' \in B(i)$ and $l \in C_T(i,j')$ are all positive. When $x_{ij}=\alpha_j C$, $2 b_j x_{ij} +
2 \sum_{l \in C_T(i,j)} b_l X_l$ are the same for $j \in B(i)$ in \ref{ls-sm}. Namely, $x_{ij}=\alpha_j C$ is the solution of \ref{ls-sm}. This implies if $C > 0$, then $ij$ are all active. In particular, when $x_{ij}=\alpha_j C$,
\begin{equation} \label{cnx}
b_j x_{ij} + \sum_{l \in C_T(i,j)}{b_l X_l} = c_n(i) C = c_n(i) X_i.
\end{equation}

Price $p_j$ is the same for each $j \in B(i)$:
\begin{align}
p_j &= a_t - b_j x_{ij} - \sum_{l \in C_p(j)}{b_l X_l} \nonumber \\
	&= a_t - b_j x_{ij} - \sum_{l \in C_T(i,j)}{b_l X_l} - \sum_{l \in C_P(i)}{b_l X_l} \nonumber \\
	&= a_t - c_n(i) X_i - \sum_{l \in C_P(i)}{b_l X_l}. \label{sm-pj}
\end{align}

The utility of $i$ is
\[
\Pi_i = (a_t - c_n(i)X_i - \sum_{l \in C_P(i)}{b_l X_l})X_i - \sum_{k \in S(i)} p_{ki} x_{ki}.
\]

By taking the derivative of $\Pi_i$ with respect to $x_{ki}$, we have
\begin{align}
\frac{\partial \Pi_i}{\partial x_{ki}} &= a_t - c_n(i)X_i - \sum_{l \in C_P(i)}{b_l X_l} - (c_n(i) + \sum_{l \in C_P(i)}{b_l})X_i - p_{ki} \nonumber \\
&= a_t - (2c_n(i)+\sum_{l \in C_P(i)}{b_l})X_i - \sum_{l \in C_P(i)}{b_l X_l} - p_{ki}. \label{sm-dev-0}
\end{align}

We note that $\frac{\partial X_i}{\partial x_{ki}} = 1$ and $\frac{\partial X_l}{\partial x_{ki}} = 1$ for $l \in C_P(i)$. By \ref{eq: r-lcp}, if $x_{ki} > 0$, then $\frac{\partial \Pi_i}{\partial x_{ki}}=0$, so
\begin{equation} \label{general-sm-price} \tag{General-SM-price}
p_{ki}=a_t - (2c_n(i)+\sum_{l \in C_P(i)}{b_l})X_i - \sum_{l \in C_P(i)}{b_l X_l}.
\end{equation}

When $n=1$, the general SM case is exactly the simple SM case where $P_1(i)=\emptyset$ and $B_1(i)=B(i)$. By \eqref{app: eq: agg-n} and \ref{app: eq: prodcoeff},
\[c_1(i)=\frac{1}{\sum_{j \in B(i)}{\frac{1}{b_j}}}+ \sum_{l \in C_S(i) \setminus C_P(i)}{b_l} \text{ and } \alpha_j=\beta_j=\frac{\frac{1}{b_j}}{\sum_{j' \in B(i)}{\frac{1}{b_j'}}}.\]
\ref{app: eq: alpha_j} exactly matches \ref{app: eq: prodcoeff} and \ref{simple-sm-price} exactly matches \ref{general-sm-price}.

\paragraph{Reverse Topological Traversal until Reaching the Source.} We have shown that when the price function of $t$ is an affine decreasing function of the inflow $X_t$, then by induction, at node $i$, whenever the SS, the MS, or the SM case is encountered, at equilibrium, the price of each active trade $p_{ki}$ on arc $ki$ is the same. This price can be rewritten as $p_i$, which is an affine decreasing function of the inflow $X_i$. When the source $s$ is reached, the price at $s$ satisfies $p_s = a_s = a_t - b_s X_s$. Since $a_t > a_s$ and $b_s > 0$, $X_s > 0$. At equilibrium, by the flow conservation property, the fact that goods are distributed accordingly to the positive convex coefficients in the SM case, and the assumption that there are no shortcuts, all arcs in $E$ are active, each seller $k \in S(i)$ offers $i$ the same price $p_i$, and $p_i = a_t - b_i X_i - \sum_{l \in C_P(i)} b_l X_l$.

For completeness, we list the closed-form of $b_i$ and the convex coefficient $\alpha_j$. The closed-form expression is used in Algorithm~\ref{alg: 1}.

\textbf{SS:} By \ref{ss-price},
\[b_i=2 b_j + \sum_{l \in C_P(j)} b_l.\]

\textbf{MS:} By \ref{ms-price},
\[b_i=b_j + \sum_{l \in C_P(j)} b_l.\]

\textbf{Simple SM:} By \ref{simple-sm-price} and \ref{app: eq: alpha_j},
\[b_i=\frac{2}{\sum_{j' \in B(i)}{\frac{1}{b_{j'}}}}+ 2\sum_{l \in C_S(i) \setminus C_P(i)}{b_l}+\sum_{l \in C_P(i)}{b_l} \text{ and } \alpha_j = \frac{\frac{1}{b_j}}{\sum_{j' \in B(i)}{\frac{1}{b_{j'}}}}.\]

\textbf{General SM:} By \ref{general-sm-price} and \ref{app: eq: prodcoeff},
\begin{equation} \label{general-sm} \tag{General SM}
b_i=2c_n(i)+\sum_{l \in C_P(i)}{b_l} \text{ and }
\alpha_j = \beta_{j}\prod_{p \mid h_p \in C_T(i,j) \setminus \{h_n\}}{\beta_p(i)}
\end{equation}
where $c_n(i)$, $\beta_p(i)$, and $\beta_j$ are defined by \eqref{app: eq: agg}, \eqref{app: eq: agg-n}, \eqref{app: eq: aggcoeff}, and \eqref{app: eq: vercoeff}.
\QED
\end{proof}

%----------------------------------------

\subsection{Proof of Lemma~\ref{lem: 3.2}} \label{app: lem: 3.2}
\lemlcp*
\begin{proof}

We recall the feasibility problem \ref{eq: lcp}
\[
\begin{cases}
\sum_{j \in B(i)} \frac{\partial \Pi_i}{\partial x_{ij}} x_{ij}  = 0,\\
\frac{\partial \Pi_i}{\partial x_{ij}}  \leqslant 0  \quad \forall j \in B(i),\\
x_{ij} \geqslant 0 \quad \forall j \in B(i).
\end{cases}
\]
and the optimization problem \ref{eq: cp}
\[
\begin{aligned}
& \minimize_{x, X} & & \sum_{j \in B(i)} b_j x_{ij}^2 + \sum_{l \in C_S(i) \backslash C_P(i)} b_l X_l^2 \\
& \text{subject to}
& & a_t - 2 b_j x_{ij} - \sum_{l \in C_T(i,j)} 2 b_l X_l  \leqslant const_i & \forall j \in B(i), \\
& & & x_{ij} \geqslant 0 & \forall j \in B(i).
\end{aligned}
\]

	Consider the Lagrangian function:
	\[
	\begin{aligned}
	L(x, X, \lambda) &= \sum_{j \in B(i)} b_j x_{ij}^2 + \sum_{l \in C_S(i) \backslash C_P(i)} b_k X_l^2 \\
	&\quad - \sum_{j \in B(i)} \lambda_{ij} (a_t - 2 b_j x_{ij} - \sum_{l \in C_T(i,j)} 2 b_l X_l - const_i).
	\end{aligned}
	\]

	\textbf{Stationarity condition:}

\begin{itemize}
	\item By taking the derivative of $L$ with respect to $x_{ij}$, we have
	\[
	\frac{\partial L(x, X, \lambda)}{\partial x_{ij}} 
	= 2 b_j x_{ij} - 2 b_j \lambda_{ij} = 0
	\]
	which infers $x_{ij} = \lambda_{ij}$. 

	\item By taking the derivative of $L$ with respect to $X_l$ where $l \in C_S(i) \setminus C_P(i)$, we have
	\[
	\frac{\partial L(x, X, \lambda)}{\partial X_l}  
	= 2b_l X_l - \sum_{j \mid l \in C_P(j)} 2 b_l \lambda_{ij} = 0
	\]
	which infers $X_l = \sum_{j \mid l \in C_P(j)} \lambda_{ij} = \sum_{j \mid l \in C_P(j)} x_{ij}$. This is exactly the definition of $X_l$ (the total flow through $l$).
	\end{itemize}
	
	\textbf{Complementarity condition:}\\
	
	$\forall j \in B(s)$ (we recall that $x_{ij} = \lambda_{ij}$):
	\begin{align*}
	\lambda_{ij}(a_t - 2 b_j x_{ij} - \sum_{l \in C_T(i,j)} 2 b_l X_l - const_i) 
	= x_{ij} \frac{\partial \Pi_i}{\partial x_{ij}} 
	= 0.
	\end{align*}
	
	Combined with the primal feasibility conditions $\frac{\partial \Pi_i}{\partial x_{ij}}  \leqslant 0$ and $x_{ij} \geqslant 0$, the KKT condition of \ref{eq: cp} is equivalent to \ref{eq: lcp}. \ref{eq: cp} is strictly convex, so the solution is unique.
	\QED
\end{proof}

\subsection{Proof of Lemma~\ref{lem:shortcut}} \label{app:lem:shortcut}
\lemshortcut*
\begin{proof}
		From the structure of SPG and the flow conservation property at equilibrium, if path $l_{ij}$ has an active arc, then there exists a path from $i$ to $j$ where all arcs are active. To prove by contradiction, suppose $ij$ is a shortcut of path $l_{ij} = (i, v_1, ..., v_k, j)$, without loss of generality, we can assume that all arcs in the path $l_{ij}$ are active.

    Since firms never sell goods at a lower price than the buying price, by Observation~\ref{obs:inc-p},
    \begin{align*}
    p_{i v_1} \leqslant p_{v_1 v_2} \leqslant \dots \leqslant p_{v_{k-1} v_k} \leqslant p_{v_k j} = p_j.
    \end{align*}

    Consider the case that $p_{i v_1} < p_j$ at the equilibrium, by the structure of SPG and the flow conservation property, all the flow from $i$ to $v_1$ will go to firm $j$. If firm $i$ moves $x_{i v_1}$ amount of flow from $i v_1$ to $ij$, the total flow through $j$ will be the same (since $p_j$ is a function of $X_j$), and $p_j$ will remain the same price. Therefore, firm $i$ is better off by the difference from the selling revenue
    \begin{align*}
    p_j (x_{ij} + x_{i v_1}) - p_j x_{ij} - p_{v_1} x_{i v_1} > 0,
    \end{align*}
    which cannot happen at an equilibrium. Thus, $p_{i v_1} = p_j$ must hold, and
    \begin{align*}
    p_{iv_1} =  p_{v_1 v_2} = \dots = p_{v_{k-1}v_k} = p_{v_k j} = p_j.
    \end{align*}

    Now consider the optimal decision for $v_k$, if she buys all the goods offered to her and sell them to $j$, her profit is $0$, because $p_{v_{k-1}v_k} = p_j$. However, she would make a positive profit if she accepts and offers $j$ less amount of goods. Because this would decrease the flow to $j$ and raise the optimal price of $j$ from $p_j$ to $p_j'$, that is,
    \begin{align*}
    p_j' > p_j = p_{v_{k-1}v_k},
    \end{align*}
    which contradicts to the flow conservation property at equilibrium. Hence, the path $l_{ij}$ is inactive. \QED

%      without loss of generality, assume that all the path from $j$ to $t$ are active, then by Lemma~\ref{lem: 3.1}:
% \[p_j = a_t - b_j X_j - \sum_{k \in C_P(j)} b_k X_k.\]

% Let $B_1(i) = \{j_1, ..., j_m\}$ where $j \in C(j_l)$ for $l=1, ..., m$, then again by similar calculation to Lemma~\ref{lem: 3.1}, we have
% \begin{align*}
% p_{j_l} &= a_t - b_{j_l} X_{j_l} - \sum_{k \in C_P(j_l)} b_k X_k \\
% &= a_t - b_{j_l} X_{j_l} - b_j X_j - \sum_{k \in C_P(j_l) \setminus (C_P(j) \cup \{j\})} b_k X_k - \sum_{k \in C_P(j)} b_k X_k.
% \end{align*}

% Notice that $p_{j_l} = p_j$ only when $x_{ij_l} = 0$ for $l=1, ..., m$. Otherwise, $p_{j_l} < p_j$. Therefore, $ij_l$ must be inactive since $i$ can obtain better utitliy by selling everything directly to $j$ with a higher price $p_j$. For an example, see Example~\ref{app: ex: shortcut} in the appendix.
\end{proof}

\section{Proofs in Section \ref{sec: eqp}}

\subsection{Proof of Lemma~\ref{lem: 4.3}} \label{app: lem: 4.3}

\lemucf*

\begin{proof} The proof is done case by case. Suppose $i$ is the seller of a trade:

\begin{itemize}
\item For the SS case, $X_i = X_j = x_{ij}$ and $C_P(i) = C_P(j)$. Consider the utility of $i$, by equation~\ref{eq: SS}:
	\begin{align*}
	\Pi_i &= (p_j - p_i) x_{ij}  \\
	&= (b_{i} X_i - b_j X_j) x_{ij} \\
	&= (b_{i} - \frac{b_i - \sum_{l \in C_P(i)} b_l}{2}) X_i^2 \\
	&= \frac{1}{2}(b_{i} + \sum_{l \in C_P(i)} b_l) X_i^2.
	\end{align*}

\item For the SM case, for each $j \in B(i)$, we recall that $ij$ is active and $p_{ki} = p_i$ at equilibrium so the derivative in \eqref{sm-dev-0} must be 0:
\[\frac{\partial \Pi_i}{\partial x_{ki}} = a_t - (2c_n(i)+\sum_{l \in C_P(i)}{b_l})X_i - \sum_{l \in C_P(i)}{b_l X_l} - p_i = 0.\]
We recall the price of $j$ in \eqref{sm-pj}:
\[p_j= a_t - c_n(i) X_i - \sum_{l \in C_P(i)}{b_l X_l}.\]
By rearranging and \ref{general-sm}, we have
\begin{align*}
p_j - p_i &= (c_n(i)+\sum_{l \in C_P(i)}b_l)X_i \label{sm-u-cn} \\
&= \frac{1}{2}(b_{i} + \sum_{l \in C_P(i)} b_l)X_i.
\end{align*}
Therefore,
	\begin{align*}
	\Pi_i &= \sum_{j \in B(i)} (p_j - p_i)x_{ij} \\
	&= \sum_{j \in B(i)} \frac{1}{2}(b_{i} + \sum_{l \in C_P(i)} b_l)X_i x_{ij} \\
	&= \frac{1}{2}(b_{i} + \sum_{l \in C_P(i)} b_l)X^2_i.
	\end{align*}

\item For the MS case, $x_{ij} = X_i$ and $C_P(i) = C_P(j) \sqcup \{j\}$. Consider the utility of $i$:
\begin{align*}
\Pi_i &= (p_j - p_i)x_{ij} \\
&= (b_i X_i + \sum_{l \in C_P(j)} b_l X_l + b_j X_j - b_j X_j - \sum_{l \in C_P(j)} b_l X_l)X_i \\
&= b_i X^2_i.
\end{align*}
By equation~\ref{eq: MS}:
	\begin{align*}
	b_i &= b_j + \sum_{l \in C_P(j)} b_l \\
	&= \sum_{l \in C_P(i)} b_l \\
	&= \frac{b_i + \sum_{l \in C_P(i)} b_l}{2}.
	\end{align*}
Therefore,
\[\Pi_i = b_i X^2_i = \frac{1}{2}(b_{i} + \sum_{l \in C_P(i)} b_l)X^2_i.\]
\end{itemize}
\QED
\end{proof}

\subsection{Proof of Proposition~\ref{cor: 4.1}} \label{app: cor: 4.1}

\proptwousssm*

\begin{proof}
Consider the arc $ij \in E$. For the SS case, $X_i = X_j = x_{ij}$ and $C_P(i) = C_P(j)$. By Lemma~\ref{lem: 4.3} and \ref{eq: SS},
\begin{align*}
\Pi_i &= \frac{1}{2}(b_{i} + \sum_{l \in C_P(i)} b_l)X^2_i \\
&= \frac{1}{2}(2b_{j} + 2\sum_{l \in C_P(j)} b_l)X^2_i \\
&= 2\Pi_j.
\end{align*}

For the SM case, $X_j = x_{ij}$, by \eqref{cnx} and Lemma \ref{lem: 4.3},
\begin{align*}
c_n(i)X^2_i &= c_n(i)X_i \sum_{j \in B(i)}{x_{ij}} \\
&= \sum_{j \in B(i)}{x_{ij}(b_j x_{ij} + \sum_{l \in C_T(i,j)}{b_l X_l})} \\
&\geqslant \sum_{j \in B(i)}{(b_j x^2_{ij} + \sum_{l \in C_T(i,j)}{b_l} x^2_{ij})} \\
&=2\sum_{j \in B(i)}{\Pi_j}.
\end{align*}
By \ref{general-sm} and Lemma \ref{lem: 4.3},
\begin{align*}
\Pi_i &= \frac{1}{2}(2c_n(i)X_i+2\sum_{l \in C_P(i)}b_lX_i)X_i \\
&\geqslant c_n(i)X^2_i\\
&\geqslant 2\sum_{j \in B(i)}{\Pi_j}.
\end{align*}
\QED
\end{proof}

\subsection{Proof of Proposition~\ref{thm: 4.4}} \label{app: thm: 4.4}

\proptwou*

\begin{proof}
	Without loss of generality, we consider the closest dominating parent $i$ of $j$. Suppose $j$ is the buyer of a trade.

In the SS or SM case, the closest dominating parent of $j$ is $i \in S(j)$ so $ij \in E$. The claim follows by Proposition~\ref{cor: 4.1}.

Suppose along the path from $i$ to $j$, $j$ ends up to be a single buyer in the MS case. Then by \ref{general-sm} and Lemma \ref{lem: 4.3},
\begin{align*}
\Pi_i &= \frac{1}{2}(b_i+\sum_{l \in C_P(i)}b_l)X^2_i \\
&= \frac{1}{2}(2c_n(i)+2\sum_{l \in C_P(i)}b_l)X^2_i \\
&\geqslant (b_j +\sum_{l \in C_P(j)}b_l)X^2_i \\
&\geqslant (b_j+\sum_{l \in C_P(j)}b_l)X^2_j \\
&= 2\Pi_j
\end{align*}
where the first inequality holds by $c_n(i) \geqslant b_j$, which can be proved by induction and equation \ref{app: eq: agg}, \ref{app: eq: agg-n}, \ref{app: eq: aggcoeff}, and \ref{app: eq: vercoeff}, and the fact that $i$ is the closest dominating parent of $j$ implies $C_P(j) \subseteq C_P(i)$; the second inequality holds by $X_j \leqslant X_i$. \QED
\end{proof}

\subsection{Proof of Lemma~\ref{lem: 4.1}} \label{app: lem: 4.1}
\lemcp*
\begin{proof}
	By Theorem~\ref{lem:pq-relation}, $p_s = a_t - b_s X_s$. While calculating the price function from the sink, one can show that by induction and \ref{eq: SS}, \ref{eq: MS}, and \ref{general-sm}, $b_i$ where $i \in V \cup \{s\}$ changes proportionally to $b_t$, so $\lambda(Y)$ is a constant.

	The remaining is to show that $b_s \geqslant 2b_t$ by induction. For $ij \in E$, in the SS case, by \ref{eq: SS}, $b_i \geqslant 2 b_j$; in the MS case, by \ref{eq: MS}, $b_i \geqslant 2 b_j$; in the SM case, $b_i \geqslant 2 c_n(i) \geqslant 2 b_{h_n}$ by induction and equation \ref{app: eq: agg}, \ref{app: eq: agg-n}, \ref{app: eq: aggcoeff}, \ref{app: eq: vercoeff}, and \ref{general-sm}, where $h_n$ is the farthest node from $i$ in $C_S(i)$. Combining these cases, $b_s \geqslant 2b_t$.
\QED
\end{proof}

\subsection{Proof of Proposition~\ref{prop: 4.1}} \label{app: prop: 4.1}
\propinc*
\begin{proof}
	It follows that $X_s = \frac{a_t - a_s}{b_s}$, so the increasing demand at market ($a_t$) or decreasing cost at the source ($a_s$) will make the flow value larger. Since $b_t$ is not changed, by \eqref{eq: sw1}, the coefficients of the quadratic terms ($X_i^2$ and $X_t^2$) do not change either. By lemma~\ref{lem: 3.3}, the flow is distributed proportionally to the convex coefficients in the SM case (for the SS and MS case, just sum the flow from the upstream), so the flow increases proportionally as well. Therefore, the flow and welfare efficiency both increases, and by Lemma~\ref{lem: 4.3}, the utility of each individual firm increases.
\QED
\end{proof}

\subsection{Proof of Lemma~\ref{lem: 4.z}} \label{app: lem: 4.z}
\lempc*

\begin{proof}
The parallel composition $P(Y,Z)$ creates an SM case at the source. Therefore, it suffices to find the convex coefficients for solving \ref{ls-sm}. Let $\alpha^Y_j$ (respectively $\alpha^Z_j$) be the convex coefficient for $s_Y j \in E(Y)$ (respectively $s_Z j \in E(Z)$) where $j \in B(s_Y)$ (respectively $B(s_Z)$). If $s_Y$ (respectively $s_Z$) is the buyer of the SS case, then $\alpha^Y_j=1$ (respectively $\alpha^Z_j=1$). The convex coefficients are $\frac{\lambda(Z)-2}{\lambda(Y)+\lambda(Z)-2}\alpha^Y_j$ for each $j \in B(s_Y)$ and $\frac{\lambda(Y)-2}{\lambda(Y)+\lambda(Z)-2}\alpha^Z_j$ for each $j \in B(s_Z)$. Suppose $b_{t_{P(Y,Z)}}=b_{t_Y}=b_{t_Z}$, $b_{s_Y}=\lambda(Y)b_{t_Y}$, and $b_{s_Z}=\lambda(Z)b_{t_Z}$, then by \eqref{app: eq: agg-n} and \ref{general-sm},
\[b_{s_{P(Y,Z)}} = (\frac{(\lambda(Y)-2)(\lambda(Z)-2)}{\lambda(Y)+\lambda(Z)-4} + 2 )b_{t_{P(Y,Z)}}.\]
\QED
\end{proof}

\section{Proofs in Section \ref{sec: ext}}

\subsection{Proof Sketch of Proposition~\ref{prop: 4.z}} \label{app: prop: 4.z}

\propsmspg*

\noindent{\bf Proof Sketch.} Without loss of generality, we consider shortcut-free SMSPG. The derivation of the equilibrium price is similar to the proof of Theorem~\ref{thm: 3.1}. We inductively start from the sink markets in $T$ and the price function at each firm is affine decreasing before an SM case is reached. For the SM case, the buyer does not have the control over the buying cost and inflow at equilibrium, so a convex quadratic program \ref{eq: dist} can be derived. By considering the Lagrangian function, the linear system \ref{ls-sm} can be formulated. The solution of \ref{ls-sm} is to distribute the flow proportionally to the convex coefficients where each of them is positive, so each trade is active. The flow distribution according to the convex coefficients gives a closed-form expression of the price offered to the buyer. This procedure goes on inductively until the source is reached. When the source is reached, we compute the total flow value of the network and use Algorithm~\ref{alg: 2} to compute the equilibrium flow. The uniqueness of the equilibrium follows by Lemma~\ref{lem: 3.2}. The problem of distributing the flow for an SM case buyer can be described as an \ref{eq: lcp} which has an equivalent \ref{eq: cp}, and its solution is unique.

We note that when all markets have the same demand, it suffices to focus on the calculation of $b_i$ for each $i \in V \cup \{s\}$. The equilibrium calculation is more complicated if that is not the case.

\newpage

\subsection{Proof of Remark~\ref{rm1}} \label{app: rm1}

We consider the following supply chain network:

\begin{center}
\begin{tikzpicture}
	\node[vertex] (b) at (0,0) {$s$}; 
	\node at (-1, 0) {$p_s=a_s$};
	\node[vertex] (a) at (2,0) {$v$}; 
	\node[vertex] (t1) at (4,.5) {$1$}; 
	\node[vertex] (t2) at (4,-.5) {$2$}; 
	\node at (5.7, .5) {$p_1 = a_1 - b_1 x_1$};
	\node at (5.7, -.5) {$p_2 = a_2 - b_2 x_2$};
	\path[->]
		(b) edge node [above] {$X_v$} (a) 
		(a) edge node [above] {$x_1$} (t1) 
		(a) edge node [below] {$x_2$} (t2);
\end{tikzpicture}
\end{center}

For simplicity, we denote the first market price as $p_1$ and the second market price as $p_2$. The production cost is a constant $a_s$. Let the inflow of market 1 be $x_1$ and the inflow of market 2 be $x_2$. Suppose the two price functions at the markets are:
\begin{align*}
p_1 = a_1 - b_1 x_1, \\
p_2 = a_2 - b_2 x_2,
\end{align*}
where $a_1 \geqslant a_2 \geqslant a_s$.

Throughout the proof, we add a superscript $h$ for variables under the high price strategy and $l$ for the low price strategy.

\begin{itemize}
\item The low price strategy always gives a higher flow value than the high price strategy.

\begin{proof}

With the high price strategy, $x^h_2 = 0$, so it is equivalent to regard the entire supply chain as a line graph from $s$ to $v$ then from $v$ to market $1$. Therefore, $p^h_v = a_1 - 2b_1 X^h_v$. The optimal flow $X^h_v$ under the high price strategy is
\begin{align}
a_s = a_1 - 4 b_1 X^h_v \implies X^h_v = \frac{a_1 - a_s}{4 b_1}. \label{Xhv}
\end{align}

Under the low price strategy, the utility of $v$ is
\[\Pi_v = (a_1 - b_1 x_1) x_1 + (a_2 - b_2 x_2)x_2 - p^l_v(x_1 + x_2).\]

$x_1 > 0$ and $x_2 > 0$, so $\frac{\partial \Pi_v}{\partial x_1} = 0$ and $\frac{\partial \Pi_v}{\partial x_2} = 0$:

\begin{equation} \label{app: flow_rel}
\begin{cases}
a_1 - 2b_1x_1 - p^l_v = 0,\\
a_2 - 2b_2x_2 - p^l_v = 0,\\
\end{cases}
\implies
a_1 b_2 + a_2 b_1 - 2b_1 b_2 (x_1+x_2) - (b_1+b_2)p^l_v = 0.
\end{equation}

The optimal flow $X^l_v$ under the low price strategy is
\begin{align}
 p^l_v &= (a_1/b_1 + a_2/b_2)B - 2B X^l_v, \nonumber \\
 p_s &= a_s = (a_1/b_1 + a_2/b_2)B - 4B X^l_v, \nonumber \\
 X^l_v &= \frac{(a_1/b_1 + a_2/b_2)B - a_s}{4 B}, \label{app: flow_val}
\end{align}
where $B = \frac{1}{\frac{1}{b_1} + \frac{1}{b_2}}$.

Then we have the difference of total flow between these two strategies:
\begin{align}
X^l_v - X^h_v &= \frac{(a_1/b_1 + a_2/b_2)B - a_s}{4 B} - \frac{a_1 - a_s}{4 b_1} \nonumber \\
&= \frac{a_2}{4 b_2} - \frac{a_s}{4 B} + \frac{a_s}{4 b_1} \nonumber \\
&= \frac{a_2}{4 b_2} - \frac{a_s}{4 b_1} - \frac{a_s}{4 b_2} + \frac{a_s}{4 b_1} \nonumber \\
&= \frac{a_2}{4 b_2} - \frac{a_s}{4 b_2} \nonumber \\
&\geqslant 0. \label{flow_diff}
\end{align}
\QED
\end{proof}

\item When the demand difference between two markets is small enough, the low price strategy gives better utility for the source. When the difference is large enough, the high price strategy gives better utility for the source.

\begin{proof}
	Let $\Pi_v$ be the utility of firm $v$, and $\Pi_s$ be the utility of firm $s$. Under the high price strategy:
	\begin{align}
	 \Pi^h_v &= b_1{X^h_v}^2, \label{Pihv}\\
	 \Pi^h_s &= 2b_1{X^h_v}^2 = \frac{(a_1 - a_s)^2}{8b_1}. \label{Pihs}
	\end{align}

	To get the social welfare under the low price strategy, from \eqref{app: flow_rel} and \eqref{app: flow_val}:
	\begin{align*}
	& p^l_v = a_1 - 2b_1 x_1 = a_2 - 2b_2 x_2, \\
	& x_1 + x_2 = X^l_v,
	\end{align*}
	infers
	\begin{align*}
	& x_1 = \frac{a_1 - a_2 + 2b_2 X^l_v}{2b_1 + 2b_2}, \\
	& x_2 = \frac{ 2b_1 X^l_v - a_1 + a_2}{2b_1 + 2b_2}, \\
	\end{align*}
where 

\[X^l_v = \frac{(a_1/b_1 + a_2/b_2)B - a_s}{4 B} \text{ and } B = \frac{1}{\frac{1}{b_1} + \frac{1}{b_2}}.\]
	
	We have
	\begin{align}
	 \Pi^l_s &= (p^l_v - a_s)X^l_v =  2B{X^l_v}^2 \label{Pils} \\
	&= 2B [\frac{(a_1/b_1 + a_2/b_2)B - a_s}{4 B}]^2 = \frac{[(a_1/b_1 + a_2/b_2)B - a_s]^2}{8 B}. \nonumber
	\end{align}

By taking the ratio between \eqref{Pihs} and \eqref{Pils},
	\begin{align*}
		\frac{\Pi^h_s}{\Pi^l_s} = \frac{b_1 {X^h_v}^2}{B{X^l_v}^2} = \frac{b_1 + b_2}{b_2}\frac{{X^h_v}^2}{(X^h_v+\Delta)^2}
	\end{align*}
where $\Delta = \frac{a_2-a_s}{4 b_2}$ is irrelevant to $a_1$ by \eqref{flow_diff}.

$s$ has no preference between the high price and the low price strategy when
\begin{align*}
&\quad \quad \frac{X^h_v}{X^h_v+\Delta} = \sqrt{\frac{b_1}{b_1+b_2}} \\
&\implies \sqrt{b_1+b_2}(\frac{a_1 - a_s}{4 b_1}) = \sqrt{b_1}(\frac{a_1 - a_s}{4 b_1}+\Delta) \\
&\implies \frac{\sqrt{b_1+b_2} - \sqrt{b_1}}{4b_1}a_1 = \frac{\sqrt{b_1+b_2} - \sqrt{b_1}}{4b_1}a_s + \sqrt{b_1}\Delta \\
&\implies a_1 = a_s + \frac{4b_1^{\frac{3}{2}}\Delta}{\sqrt{b_1+b_2} - \sqrt{b_1}}. \\
\end{align*}

$X^h_v$ increases linearly to $a_1$. When $a_1$ is below this value, then the ratio is smaller than 1 and $s$ prefers the low price strategy. When $a_1$ is above this value, then the ratio is greater than 1 and $s$ prefers the high price strategy.
\QED

\iffalse
	Notice $PS_b \leqslant PS_a$ in this case.

	However, to prove this statement, we only need:
	\begin{align*}
	\frac{PS_b^h}{PS_b^l} = \frac{b_1 X_h^2}{B X_l^2} = \frac{b_1 + b_2}{b_2} \frac{X_h}{X_h + \Delta}
	\end{align*}
	where $PS_b^h$ is the surplus of $b$ at high price and $PS_b^l$ is the surplus of $b$ at low price, and $\Delta = \frac{a_2}{4 b_2} - \frac{p_b}{4 b_2}$ is irrelevant to $a_1$. Therefore, as $a_1$ increases, value $\frac{PS_s^h}{PS_s^l}$ increases from less than $1$ to greater than $1$.
\QED
\fi

\end{proof}

\item The low price strategy always produces higher social welfare.
\begin{proof}

Let $CS$ be the consumer surplus, and $SW$ be the social welfare. Under the high price strategy, by \eqref{Xhv}, \eqref{Pihv}, and \eqref{Pihs}:
	\begin{align*}
	 CS^h &= \frac{1}{2}b_1{X^h_v}^2, \\
SW^h &= CS^h + \Pi^h_v + \Pi^h_s = \frac{7}{2} b_1 {X^h_v}^2 = \frac{7}{2} b_1 (\frac{a_1 - a_s}{4b_1})^2 = \frac{7(a_1 - a_s)^2}{32b_1}.
	\end{align*}

Under the low price strategy:
\begin{align*}
	 	CS^l &= \frac{1}{2}b_1x_1^2 + \frac{1}{2}b_2x_2^2, \\
	 \Pi^l_v &= x_1(p^l_1-p^l_v) + x_2(p^l_2-p^l_v) = b_1x_1^2 + b_2x_2^2, \\
		SW^l &= CS^l + \Pi^l_v + \Pi^l_s \\
		&= \frac{3}{2}(b_1 x_1^2 + b_2 x_2^2) + \Pi^l_s\\
		&= \frac{3[b_1(a_1 - a_2 + 2b_2 X^l_v)^2 + b_2(2b_1 X^l_v - a_1 + a_2)^2]}{8(b_1 + b_2)^2} + \Pi^l_s\\
		&= \frac{3b_1[(a_1-a_2)^2 + 4b_2^2{X^l_v}^2 + 4a_1b_2X^l_v-4a_2b_2X^l_v]}{8(b_1 + b_2)^2} \\
		&\quad +\frac{3b_2[(a_1-a_2)^2 + 4b_1^2{X^l_v}^2 - 4a_1b_1X^l_v+4a_2b_1X^l_v]}{8(b_1 + b_2)^2} + \Pi^l_s \\
		&= \frac{3[(b_1+b_2)(a_1-a_2)^2+4b_1b_2(b_1+b_2){X^l_v}^2]}{8(b_1 + b_2)^2} + \Pi^l_s \\
		&= \frac{3(a_1-a_2)^2}{8(b_1+b_2)} + \frac{3B{X^l_v}^2}{2} + 2B{X^l_v}^2 \\
		&= \frac{3(a_1-a_2)^2}{8(b_1+b_2)} + \frac{7B{X^l_v}^2}{2}.
	\end{align*}

By taking the difference and \eqref{Pils},
\begin{align*}
		SW^l - SW^h &= \frac{3(a_1-a_2)^2}{8(b_1+b_2)} + \frac{7}{2}[\frac{b_1 b_2}{b_1 + b_2}(X^h_v+\Delta)^2 - b_1{X^h_v}^2] \\
		&= \frac{3(a_1-a_2)^2}{8(b_1+b_2)} + \frac{7}{2}[\frac{-b_1^2}{b_1 + b_2}{X^h_v}^2+\frac{b_1(a_2-a_s)X^h_v}{2(b_1+b_2)}+\frac{b_1(a_2-a_s)^2}{16b_2(b_1+b_2)}] \\
		&= \frac{12b_2(a_1-a_2)^2-7b_2(a_1-a_s)^2+14b_2(a_2-a_s)(a_1-a_s)+7b_1(a_2-a_s)^2}{32b_2(b_1+b_2)} \\
		&= \frac{12b_2(a_1-a_2)^2-7b_2[(a_1-a_s)-(a_2-a_s)]^2+7(b_1+b_2)(a_2-a_s)^2}{32b_2(b_1+b_2)} \\
		&= \frac{5b_2(a_1-a_2)^2+7(b_1+b_2)(a_2-a_s)^2}{32b_2(b_1+b_2)} \\
		&\geqslant 0
	\end{align*}
where $\Delta = \frac{a_2-a_s}{4 b_2}$. \QED

\end{proof}
\end{itemize}

% \input{supp}
%!TEX root = main.tex

\section{Examples}

\begin{example}[Merging Child Nodes] \label{app:ex:mc}
\mbox{}\\
% \begin{center}
% \begin{tikzpicture}
% 	\node[vertex] (s) at (-2,0) {$s$}; 
% 	\node[vertex] (a) at (0,0) {$a$}; 
% 	\node[vertex] (b) at (2,1.5) {$b$}; 
% 	\node[vertex] (c) at (2,0.5) {$c$}; 
% 	\node[vertex] (d) at (2,-0.5) {$d$};
% 	\node[vertex] (e) at (2,-1.5) {$e$};
% 	\node[vertex] (f) at (4,0) {$f$};
%   \node[vertex] (t) at (6,0) {$t$};
% 	\path[->]
% 		(s) edge (a)
% 		(s) edge (e)
% 		(a) edge (b)
% 		(a) edge (c)
% 		(a) edge (d)
% 		(c) edge (f)
% 		(d) edge (f)
% 		(b) edge (t)
% 		(e) edge (t)
% 		(f) edge (t);
% \end{tikzpicture}
% \end{center}

\begin{center}
\begin{tikzpicture}
	\node[vertex] (s) at (-2,0) {$s$}; 
	\node[vertex] (a) at (0,0) {$a$}; 
	\node[vertex] (b) at (2,1.5) {$b$}; 
	\node[vertex] (c) at (2,0.5) {$c$}; 
	\node[vertex] (d) at (1.3,-0.5) {$d$};
	\node[vertex] (e) at (2,-1.5) {$e$};
	\node[vertex] (f) at (2.7,-.5) {$f$};
	\node[vertex] (g) at (4,0) {$g$};
  	\node[vertex] (h) at (6,0) {$h$};
  	\node[vertex] (i) at (7.2,.8) {$i$};
  	\node[vertex] (j) at (7.2,-.8) {$j$};
  	\node[vertex] (t) at (8.4,0) {$t$};
	\path[->]
		(s) edge (a)
		(s) edge (e)
		(a) edge (b)
		(a) edge (c)
		(a) edge (d)
		(c) edge (g)
		(d) edge (f)
		(f) edge (g)
		(b) edge (h)
		(e) edge (h)
		(g) edge (h)
		(h) edge (i)
		(h) edge (j)
		(i) edge (t)
		(j) edge (t);
\end{tikzpicture}
\end{center}

In this graph, for node $a$, $C_S(a) = \{g, h\}$, because $\{g, h\} \subseteq C(a)$ and there are multiple disjoint paths from $a$ to $g$ and $h$, while $t \notin C_S(a)$ because all the paths from $a$ to $t$ must go through the common node $h$; $C_P(a) = \{h\}$ because $h \in C(a)$, $s \in P(a)$, and there are multiple disjoint paths from $s$ to $h$; $C_T(a,b) = \emptyset$, while $C_T(a,c) = \{g\}$.

For node $c$, $C_P(c) = \{g, h\}$, while $C_S(c) = \emptyset$; For node $g$, $C_P(g) = \{h\}$, while $C_S(g) = \emptyset$.

By Observation~\ref{obs:mc}, since $a$ and $c$ satisfy the SM relation, $C_P(c) = \{g, h\} = C_P(a) \sqcup C_T(a,c)$. $c$ and $g$ satisfy the MS relation, so $C_P(c) = \{g, h\} = C_P(g) \sqcup \{g\}$.
\end{example}

\begin{example}[Price Function Computation by Algorithm~\ref{alg: 1}] \label{app: ex: price_compute_alg1}
\mbox{}\\

Consider the following network.

\begin{center}
\begin{tikzpicture}[baseline=0]
%	\node[vertex] (s0) at (-2,0) {$s_0$};
	\node at (-1, 0) {$p_{s} = 1$};
	\node[vertex] (s) at (0,0) {$s$}; 
	\node[vertex] (a) at (2,1.5) {$j_1$}; 
	\node[vertex] (a1) at (1.5,0) {$j_2$};
	\node[vertex] (b) at (3,0.5) {$v_1$}; 
	\node[vertex] (c) at (3,-0.5) {$v_2$};
	\node[vertex] (d) at (4.5,0) {$l$};
%	\node[vertex] (e) at (2,-1.5) {$e$};
	\node[vertex] (f) at (4,1.5) {$k$};
  \node[vertex] (t) at (6,0) {$t$};
	\node at (7.5, 0) {$p_{t} = 2 - X_t$};
	\path[->]
%		(s0) edge (e)
%		(s0) edge (s)
		(s) edge (a)
		(s) edge (a1) 
		(a1) edge (b)
		(a1) edge (c)
		(a) edge (f)
		(b) edge (d)
		(c) edge (d)
		(f) edge (t)
		(d) edge (t)
%		(e) edge (t)
;
\end{tikzpicture}
\end{center}
We recall the equations in Algorithm~\ref{alg: 1}:\\

\textbf{\ref{eq: SS}:}
\begin{align*}
b_i = 2 b_j + \sum_{l \in C_P(j)} b_l.
\end{align*}

\textbf{\ref{eq: MS}:}
\begin{align*}
b_i = b_j + \sum_{l \in C_P(j)} b_l.
\end{align*}

\textbf{\ref{eq: SM}:}
\begin{align*}
	&b_i = \frac{2}{\sum_{j \in B(i)} \frac{1}{b_{j}}} + 2 \sum_{l \in C_S(i) \setminus C_P(i)} b_l + \sum_{l \in C_P(i)} b_l, \\
	&\alpha_j = \frac{\frac{1}{b_{j}}}{\sum_{j' \in B(i)} \frac{1}{b_{j'}}} \text{ for each $j \in B(i)$}.
	\end{align*}

By Algorithm~\ref{alg: 1}, for the MS case from $k$ and $l$ to $t$:
\begin{align*}
p_k = 2 - X_k - X_t, \\
p_l = 2 - X_l - X_t.
\end{align*}

For the MS case from $v_1$ and $v_2$ to $l$:
\begin{align*}
p_{v_2} = 2 - 2X_{v_1} - X_l - X_t, \\
p_{v_3} = 2 - 2X_{v_2} - X_l - X_t.
\end{align*}

For the SS case from $j_1$ to $k$:
\begin{align*}
p_{j_1} = 2 - 3X_{j_1} - X_t.
\end{align*}

For the SM case from $j_2$ to $v_1$ and $v_2$:
\begin{align*}
p_{j_2} &= 2 - (\frac{2}{\frac{1}{2}+\frac{1}{2}} + 2 + 1)X_{j_2} - X_t \\
&= 2 - 5X_{j_2} - X_t, \\
\alpha_{v_1} &= \alpha_{v_2} = \frac{\frac{1}{2}}{\frac{1}{2}+\frac{1}{2}} = \frac{1}{2}.
\end{align*}

For the SM case from $s$ to $j_1$ and $j_2$:
\begin{align*}
p_{s} &= 2 - (\frac{2}{\frac{1}{3}+\frac{1}{5}} + 2)X_t \\
&= 2 - \frac{23}{4} X_t, \\
X_s &= \frac{4}{23}(2 - 1) = \frac{4}{23}, \\
\alpha_{j_1} &= \frac{\frac{1}{3}}{\frac{1}{3}+\frac{1}{5}} = \frac{5}{8}, \\
\alpha_{j_2} &= \frac{\frac{1}{5}}{\frac{1}{3}+\frac{1}{5}} = \frac{3}{8}.
\end{align*}

\end{example}

\begin{example}[Price Function Computation for General SM] \label{app: ex: price_compute}
\mbox{}\\

Consider the following network.

\begin{center}
\begin{tikzpicture}[baseline=0]
%	\node[vertex] (s0) at (-2,0) {$s_0$};
	\node at (-1, 0) {$p_{s} = 1$};
	\node[vertex] (s) at (0,0) {$s$}; 
	\node[vertex] (a) at (2,1.5) {$j_1$}; 
	\node[vertex] (b) at (2,0.5) {$j_2$}; 
	\node[vertex] (c) at (2,-0.5) {$j_3$};
	\node[vertex] (d) at (4,0) {$l$};
%	\node[vertex] (e) at (2,-1.5) {$e$};
	\node[vertex] (f) at (4,1.5) {$k$};
  \node[vertex] (t) at (6,0) {$t$};
	\node at (7.5, 0) {$p_{t} = 2 - X_t$};
	\path[->]
%		(s0) edge (e)
%		(s0) edge (s)
		(s) edge (a) 
		(s) edge (b)
		(s) edge (c)
		(a) edge (f)
		(b) edge (d)
		(c) edge (d)
		(f) edge (t)
		(d) edge (t)
%		(e) edge (t)
;
\end{tikzpicture}
\end{center}
From the equations in \ref{app:lem:pq-relation}:\\

\textbf{SS:}
\[b_i=2 b_j + \sum_{l \in C_P(j)} b_l.\]

\textbf{MS:}
\[b_i=b_j + \sum_{l \in C_P(j)} b_l.\]

\textbf{General SM:}
\[b_i=2c_n(i)+\sum_{l \in C_P(i)}{b_l} \text{ and } \alpha_j = \beta_{j}\prod_{p \mid h_p \in C_T(i,j) \setminus \{h_n\}}{\beta_p(i)}\]
where $c_n(i)$, $\beta_p(i)$, and $\beta_j$ are defined by equation \ref{app: eq: agg}, \ref{app: eq: agg-n}, \ref{app: eq: aggcoeff}, and \ref{app: eq: vercoeff}.

By the backward Algorithm~\ref{alg: 1}, for the MS case from $k$ and $l$ to $t$:
\begin{align*}
p_k = 2 - X_k - X_t, \\
p_l = 2 - X_l - X_t.
\end{align*}

For the MS case from $j_2$ and $j_3$ to $l$:
\begin{align*}
p_{j_2} = 2 - 2X_{j_2} - X_l - X_t, \\
p_{j_3} = 2 - 2X_{j_3} - X_l - X_t.
\end{align*}

For the SS case from $j_1$ to $k$:
\begin{align*}
p_{j_1} = 2 - 3X_{j_1} - X_t.
\end{align*}

The remaining is the general SM case from $s$ to $j_1$, $j_2$, and $j_3$. By following the notations in \ref{para:sm}, $C_S(s) = \{h_1, h_2\}$ where $h_1 = l$ and $h_2 = t$. $B_2(s) = \{j_1\}$, $B_1(s) = \{j_2, j_3\}$, $P_2(i) = \{h_1\}$, and $P_1(s) = \emptyset$. We start with the aggregate variable $c_1(s)$ since $P_1(s) = \emptyset$.

\[c_1(s) = \frac{1}{\frac{1}{b_{j_2}}+\frac{1}{b_{j_3}}} + b_l = \frac{1}{\frac{1}{2}+\frac{1}{2}} + 1 = 2,\]

\[b_s = \frac{2}{\frac{1}{b_{j_1}} + \frac{1}{c_1(s)}} + 2 b_t = \frac{2}{\frac{1}{3} + \frac{1}{2}} + 2 = \frac{22}{5}.\]

Therefore, $p_s = 1 - \frac{22}{5}X_s = 0$ and $X_s = \frac{5}{22}$. For the convex coefficients,

\[\beta_{j_2} = \beta_{j_3} = \frac{\frac{1}{2}}{\frac{1}{2}+\frac{1}{2}} = \frac{1}{2},\]

\[\alpha_{j_1} = \beta_{j_1} = \frac{\frac{1}{3}}{\frac{1}{3} + \frac{1}{2}} = \frac{2}{5},\]

\[\beta_1(s) = \frac{\frac{1}{2}}{\frac{1}{3} + \frac{1}{2}} = \frac{3}{5},\]

\[\alpha_{j_2} = \alpha_{j_3} = \frac{1}{2} \beta_{b_{B_1(s)}} = \frac{3}{10}.\]

\end{example}

\begin{example}[Price and Flow Computation by Algorithm~\ref{alg: 2}] \label{app: ex: flow_compute_alg2}
\mbox{}\\

We use the same SPG as in Example~\ref{app: ex: price_compute_alg1}. %{\yslcomment Looks like this is not completed.}

\begin{center}
\begin{tikzpicture}[baseline=0]
%	\node[vertex] (s0) at (-2,0) {$s_0$};
	\node at (-1, 0) {$p_{s} = 1$};
	\node[vertex] (s) at (0,0) {$s$}; 
	\node[vertex] (a) at (2,1.5) {$j_1$}; 
	\node[vertex] (a1) at (1.5,0) {$j_2$};
	\node[vertex] (b) at (3,0.5) {$v_1$}; 
	\node[vertex] (c) at (3,-0.5) {$v_2$};
	\node[vertex] (d) at (4.5,0) {$l$};
%	\node[vertex] (e) at (2,-1.5) {$e$};
	\node[vertex] (f) at (4,1.5) {$k$};
  \node[vertex] (t) at (6,0) {$t$};
	\node at (7.5, 0) {$p_{t} = 2 - X_t$};
	\path[->]
%		(s0) edge (e)
%		(s0) edge (s)
		(s) edge (a)
		(s) edge (a1) 
		(a1) edge (b)
		(a1) edge (c)
		(a) edge (f)
		(b) edge (d)
		(c) edge (d)
		(f) edge (t)
		(d) edge (t)
%		(e) edge (t)
;
\end{tikzpicture}
\end{center}

From Example~\ref{app: ex: price_compute_alg1}, $X_s = \frac{4}{23}$, $\alpha_{j_1} = \frac{5}{8}$, $\alpha_{j_2} = \frac{3}{8}$, and $\alpha_{v_1} = \alpha_{v_2} = \frac{1}{2}$. By Lemma~\ref{lem: 3.3},

\begin{align*}
x_{sj_1} &= \frac{4}{23} \times \frac{5}{8} = \frac{5}{46}, \\
x_{sj_2} &= \frac{4}{23} \times \frac{3}{8} = \frac{3}{46}, \\
\end{align*}
\begin{align*}
p_{j_1} &= 1 - 3X_{j_1} - X_t = 2 - 3 \times \frac{5}{46} - \frac{4}{23} = \frac{3}{2}, \\
p_{j_2} &= 1 - 5X_{j_2} - X_t = 2 - 5 \times \frac{3}{46} - \frac{4}{23} = \frac{3}{2},
\end{align*}
\[x_{j_2v_1} = x_{j_2v_2} = \frac{x_{sj_2}}{2} = \frac{3}{92},\]
\[p_{v_1} = p_{v_2} = 2 - 2X_{v_1} - X_l - X_t = 2 - 2 \times \frac{3}{92} - \frac{3}{46} - \frac{4}{23} = \frac{39}{23}.\]

We can continue the price calculation by a topological order. Since there are no multiple buyers case later on, the flow to the downstream is just the sum of the inflow from upstream.

\end{example}

\begin{example}[SPG with a Shortcut] \label{app:ex:shortcut}
\mbox{}\\

Consider the following network where $st$ is a shortcut of path $(s,v,t)$.

\begin{center}
\begin{tikzpicture}
	\node[vertex] (s) at (0,0) {$s$}; 
	\node at (-1, 0) {$p_s=1$};
	\node[vertex] (v) at (2,1) {$v$}; 
	\node[vertex] (t) at (4,0) {$t$}; 
	\node at (6.5, 0) {$p_{t} = 3 - X_t = 3 - x - y$};
	\path[->]
		(s) edge node [above] {$x$} (v) 
		(v) edge node [above] {$x$} (t) 
		(s) edge node [below] {$y$} (t);
\end{tikzpicture}
\end{center}

At equilibrium, suppose $s$ offers $v$ price $p_{sv}$, and let $x=x_{sv}=x_{vt}$ and $y=x_{st}$. The utility of $v$ is
\begin{align*}
\Pi_v = (3 - x - y) x - p_{sv} x.
\end{align*}

Take the derivative of $\Pi_v$ with respect to $x$, since $\Pi_v$ is concave, the best $p_{sv}$ satisfies:
\begin{align*}
\frac{\partial \Pi_v}{\partial x} = 3-2x-y - p_{sv} = 0 \Rightarrow p_{sv} = 3 - 2x - y.
\end{align*}

The utility of $s$ is
\begin{align*}
\Pi_s &= p_{sv} x + p_t y - p_s(x+y) \\
	&= (3-2x-y)x + (3-x-y)y - (x+y) \\
	&= 2x + 2y - 2x^2 - y^2 - 2xy.
\end{align*}

Take the derivative of $\Pi_s$ with respect to $x$ and $y$, since $\Pi_s$ is concave, we have:
\begin{align*}
\frac{\partial \Pi_s}{\partial x}  &= 2-4x-2y = 0, \\
\frac{\partial \Pi_s}{\partial y}  &= 2-2x-2y = 0.
\end{align*}

The equilibrium solution is $x=0$ and $y=1$. $sv$ and $vt$ are inactive.

\end{example}

\begin{example}[Multiple Equilibria in SMSPG] \label{app: ex: me_smspg}
\mbox{}\\
\begin{center}
\begin{tikzpicture}
	\node[vertex] (b) at (0,0) {$s$}; 
	\node at (-1, 0) {$p_s = a_s$};
	\node[vertex] (a) at (2,0) {$a$}; 
	\node[vertex] (t1) at (4,.5) {$t_1$}; 
	\node[vertex] (t2) at (4,-.5) {$t_2$}; 
	\node at (5.6, .5) {$p_{t_1} = a_{t_1} - x$};
	\node at (5.6, -.5) {$p_{t_2} = a_{t_2} - y$};
	\path[->]
		(b) edge node [above] {$X_a$} (a) 
		(a) edge node [above] {$x$} (t1) 
		(a) edge node [below] {$y$} (t2);
\end{tikzpicture}
\end{center}

Suppose $a_{t_1} > a_{t_2} > a_s$. The high price strategy for $s$ is such that $x>0$ and $y=0$ when $a$ tries to maximize its utility $\Pi_a=(a_{t_1}-x-p_a)x$, so
\[\frac{\partial \Pi_a}{\partial x} = a_{t_1} - 2x - p_a = 0 \implies p_a = a_{t_1} - 2x.\]
$s$ tries to maximize its utility $\Pi_s=(a_{t_1}-2x-a_s)x$ given the above $p_a$, so
\[\frac{\partial \Pi_s}{\partial x} = a_{t_1} - 4x - p_s = 0 \implies p_s = a_{t_1} - 4x = a_s \implies X_a = x = \frac{a_{t_1}-a_s}{4},\]
and the high price utility of $s$ is
\[\Pi_s = (p_a - p_s) X_a = (a_{t_1} - 2X_a - a_s)\frac{a_{t_1}-a_s}{4} = \frac{(a_{t_1}-a_s)^2}{8}.\]

The low price strategy for $s$ is such that $x>0$ and $y>0$ when $a$ tries to maximize its utility $\Pi_a=(a_{t_1}-x-p_a)x + (a_{t_2}-y-p_a)y$, so
\[
\begin{cases}
\frac{\partial \Pi_a}{\partial x} = a_{t_1} - 2x - p_a = 0, \\
\frac{\partial \Pi_a}{\partial y} = a_{t_2} - 2y - p_a = 0,
\end{cases}
\implies
p_a = \frac{a_{t_1}+a_{t_2}}{2} - (x+y).
\]
$s$ tries to maximize its utility $\Pi_s=(\frac{a_{t_1}+a_{t_2}}{2} - (x+y)-a_s)(x+y)$ given the above $p_a$, so
\[
\frac{\partial \Pi_s}{\partial x+y} = \frac{a_{t_1}+a_{t_2}}{2} - 2(x+y) - p_s = 0
\]
which implies
\[p_s = \frac{a_{t_1}+a_{t_2}}{2} - 2(x+y) = a_s \implies X_a = x+y = \frac{a_{t_1}+a_{t_2}-2a_s}{4},\]
and the low price utility of $s$ is
\[\Pi_s = (p_a-a_s) X_a = (\frac{a_{t_1}+a_{t_2}}{2} - X_a - a_s)\frac{a_{t_1}+a_{t_2}-2a_s}{4} = \frac{(a_{t_1}+a_{t_2}-2a_s)^2}{16}.\]

The high price strategy and the low price strategy give the same utility to $s$ when

\[\frac{(a_{t_1}-a_s)^2}{8} = \frac{(a_{t_1}+a_{t_2}-2a_s)^2}{16} \implies a_{t_1} = (1+\sqrt{2})a_{t_2} - \sqrt{2}a_s.\]

For price feasibility for the low price strategy, we must have
\[a_{t_2} \geqslant p_a \implies a_{t_2} \geqslant \frac{a_{t_1}+a_{t_2}}{2} - (x+y) = \frac{a_{t_1}+a_{t_2}+2a_s}{4} \implies 3a_{t_2} \geqslant a_{t_1}+2a_s\]
which is feasible when the high price utility and the low price utility are the same for $s$ since
\begin{align*}
a_{t_1} + 2a_s &= (1+\sqrt{2})a_{t_2} - \sqrt{2}a_s + 2a_s \\
&= (1+\sqrt{2})a_{t_2} + (2-\sqrt{2})a_s \\
&\leqslant (1+\sqrt{2})a_{t_2} + (2-\sqrt{2})a_{t_2} \\
&= 3 a_{t_2}.
\end{align*}

There are multiple equilibria when $a_{t_1} = (1+\sqrt{2})a_{t_2} - \sqrt{a_s}$ since $s$ can play either the high price or the low price strategy. We note that in the left upper figure in Figure~\ref{fig: 5.2}, $s$ has no preference when $a_{t_1} = 19.07 = (1+\sqrt{2})a_{t_2} -\sqrt{2} a_s = (1+\sqrt{2})12 - 7\sqrt{2}=12+5\sqrt{2}$ and by fixing $a_{t_2}=12$, $(12, 3\times12 - 2\times7] = (12, 22]$ is the low price feasible interval for $a_{t_1}$.
\end{example}

\begin{example}[SMSPG without an Equilibrium] \label{ex: SMSPG}
\mbox{}\\
\begin{center}
\begin{tikzpicture}
	\node[vertex] (s1) at (2,.7) {$s_1$};
	\node[vertex] (s2) at (2,-.7) {$s_2$};
	\node at (1, .7) {$p_{s_1} = 1$};
	\node at (1, -.7) {$p_{s_2} = 1$};
	\node[vertex] (c) at (4,0) {$c$}; 
	\node[vertex] (t1) at (6,0) {$t$};
	\node at (7.8, 0) {$p_t = 2-x-y$};
	\path[->]
		(s1) edge node [above] {$x$} (c)
		(s2) edge node [below] {$y$} (c)
		(c) edge node [above] {$x+y$} (t1);
\end{tikzpicture}
\end{center}

Suppose there is an equilibrium where $s_1$ offers the price $p_{s_1 c}$ to $c$ and $s_2$ offers the price $p_{s_2 c}$ to $c$, then $c$ tries to maximize its utility
\[\Pi_c = (2-x-y)(x+y) - p_{s_1 c} x - p_{s_2 c} y\]
where
\[\frac{\partial \Pi_c}{\partial x} = 2 - 2(x+y) - p_{s_1 c} \quad \text{ and } \quad \frac{\partial \Pi_c}{\partial y} = 2 - 2(x+y) - p_{s_2 c}.\]
The equilibrium is a solution of the following LCP:
\[
\begin{cases}
(2-2x-2y-p_{s_1 c}) x + (2-2x-2y-p_{s_2 c}) y = 0,\\
2-2x-2y-p_{s_1 c} \leqslant 0,\\
2-2x-2y-p_{s_2 c} \leqslant 0,\\
x \geqslant 0, \\
y \geqslant 0.
\end{cases}
\]

Now fix the price $p_{s_1 c} \in (1,2)$. To maximize the utility of $s_2$, the best response is to set $p_{s_2 c} = p_{s_1 c} - \varepsilon$ so that $c$ does not buy anything from $s_1$. $s_1$ can have a similar best response to the strategy of $s_2$. $s_1$ and $s_2$ will just set their selling price as close as possible to 1.
\end{example}

\begin{example}[Equilibrium in General DAGs]\label{app: ex: gen_dag}
\mbox{}\\
\begin{center}
\begin{tikzpicture}[scale=1.5]
	\node[vertex] (s) at (0,.5) {$s$}; 
	\node at (-0.8, 0.5) {$p_s = 1$};
	\node[vertex] (a) at (2,1) {$a$}; 
	\node[vertex] (b) at (2,0) {$b$}; 
	\node[vertex] (c) at (4,.5) {$c$}; 
	\node[vertex] (d) at (4,-.5) {$d$}; 
	\node[vertex] (t) at (6,0) {$t$}; 
	\node at (7.8, 0) {$p_t = 11 - (x+y+z)$};
	\path[->]
		(s) edge node [above] {$x$} (a) 
		(s) edge node [below] {$y+z$} (b) 
		(a) edge node [above] {$x$} (c) 
		(b) edge node [above] {$y$} (c) 
		(b) edge node [below] {$z$} (d) 
		(c) edge node [above] {$x+y$} (t) 
		(d) edge node [below] {$z$} (t);
\end{tikzpicture}
\end{center}

We compute the price function from $t$ in this network and consider the following cases:
\begin{itemize}
\item $x>0$ and $y+z=0$:
In this case, it is equivalent to consider the line network:
\begin{center}
\begin{tikzpicture}[scale=1.5]
	\node[vertex] (s) at (0,0) {$s$}; 
	\node at (-0.8, 0) {$p_s = 1$};
	\node[vertex] (b) at (2,0) {$a$};
	\node[vertex] (d) at (4,0) {$c$}; 
	\node[vertex] (t) at (6,0) {$t$}; 
	\node at (7.2, 0) {$p_t = 11 - x$};
	\path[->]
		(s) edge node [above] {$x$} (b) 
		(b) edge node [above] {$x$} (d) 
		(d) edge node [above] {$x$} (t);
\end{tikzpicture}
\end{center}
We have
\[p_c = 11-2x, \quad p_a = 11-4x, \text{ and}\]
\[p_s = 11-8x = 1 \implies x=\frac{5}{4}.\]
The utility of $s$ is
\[(p_a - p_s)x = 4x^2 = \frac{25}{4}.\]

\item $x=0$ and $y+z>0$:
In this case, the best strategy for $s$ is to make $y>0$ and $z>0$, so it is equivalent to consider the following network:

\begin{center}
\begin{tikzpicture}[scale=1.5]
	\node[vertex] (s) at (0,0) {$s$}; 
	\node at (-0.8, 0) {$p_s = 1$};
	\node[vertex] (b) at (2,0) {$b$}; 
	\node[vertex] (c) at (4,.5) {$c$}; 
	\node[vertex] (d) at (4,-.5) {$d$}; 
	\node[vertex] (t) at (6,0) {$t$}; 
	\node at (7.5, 0) {$p_t = 11 - (y+z)$};
	\path[->]
		(s) edge node [below] {$y+z$} (b) 
		(b) edge node [above] {$y$} (c) 
		(b) edge node [below] {$z$} (d) 
		(c) edge node [above] {$y$} (t) 
		(d) edge node [below] {$z$} (t);
\end{tikzpicture}
\end{center}
By Algorithm~\ref{alg: 1} and \ref{alg: 2}, we have
\[p_c = 11-2y-z, \quad p_d = 11-y-2z, \quad p_b = 11-3(y+z),\]
\[\text{and } p_s = 11-6(y+z) = 1 \implies y+z=\frac{5}{3}.\]
The utility of $s$ is
\[(p_b - p_s)(y+z) = 3(y+z)^2 = \frac{25}{3}.\]

\item $x>0$ and $y+z>0$: In this case, if the equilibrium is equivalent to consider the following network:
\begin{center}
\begin{tikzpicture}[scale=1.5]
	\node[vertex] (s) at (0,.5) {$s$}; 
	\node at (-0.8, 0.5) {$p_s = 1$};
	\node[vertex] (a) at (2,1) {$a$}; 
	\node[vertex] (b) at (2,0) {$b$}; 
	\node[vertex] (c) at (4,.5) {$c$}; 
	\node[vertex] (d) at (4,-.5) {$d$}; 
	\node[vertex] (t) at (6,0) {$t$}; 
	\node at (7.5, 0) {$p_t = 11 - (x+y)$};
	\path[->]
		(s) edge node [above] {$x$} (a) 
		(s) edge node [below] {$z$} (b) 
		(a) edge node [above] {$x$} (c) 
		(b) edge node [below] {$z$} (d) 
		(c) edge node [above] {$x$} (t) 
		(d) edge node [below] {$z$} (t);
\end{tikzpicture}
\end{center}
then by Algorithm~\ref{alg: 1} and \ref{alg: 2}, we have
\[p_a = 11-4x-z, \quad p_b = 11-x-4z,\]
\[p_s = 11-5(x+z) = 1 \implies x+z=2, \text{ and } x=z=1.\]
The utility of $s$ is
\[(p_a - p_s)x + (p_b - p_s)z = 5 + 5 = 10.\]
Besides, $p_{ac} = 11-2x-z = p_{bd} = 11-x-2z = 8$.

This is indeed an equilibrium. Given that $p_{ac} = 8$, if $b$ wants to earn profit from $c$, then $b$ must set the price $p_{bc} \leqslant 8$. However, $b$ does not have the incentive to compete with $a$ because $b$ can sell all the goods with $z=1$ to $d$ at price $p_{bd}=8$.

\end{itemize}
The equilibrium is the last case when $x=z=1$, $y=0$, $p_{sa} = p_{sb} = 6$, $p_{ac} = 11 - 2x - z = 8 = 11 - x - 2z = p_{bd}$, and $p_{bc}$ can be any positive number.
\end{example}

\end{document}